\documentclass[a4paper,11pt]{article}
\pdfoutput=1 
\usepackage{jheppub} 
\usepackage[utf8]{inputenc}
\usepackage{amsthm}
\usepackage{mathrsfs} 
\usepackage[all]{xy}
\graphicspath{{./images/}}
\theoremstyle{plain}
\newtheorem{thm}{Theorem}
\newtheorem{lem}[thm]{Lemma}

\newtheorem{cor}[thm]{Corollary}

\theoremstyle{definition}
\newtheorem{rem}[thm]{Remark}
\newtheorem{defi}[thm]{Definition}
\newtheorem{ex}[thm]{Example}

\numberwithin{thm}{section}

\newcommand{\ket}[1]{ | {#1} \rangle }

\DeclareMathOperator{\tr}{tr}

\usepackage{slashed}

\def\CC{{\cal C}}

\def\CF{{\cal F}}
\def\CG{{\cal G}}
\def\CH{{\cal H}}
\def\CI{{\cal I}}

\def\CL{{\cal L}}

\def\CT{{\cal T}}

\def\BC{{\mathbb C}}

\def\BR{{\mathbb R}}

\def\BZ{{\mathbb Z}}

\def\sC{{\mathscr C}}
\def\sD{{\mathscr D}}

\def\U{\mathrm{U}}

\def\O{\mathrm{O}}
\def\SO{\mathrm{SO}}

\def\Spin{\mathrm{Spin}}
\def\Pin{\mathrm{Pin}}

\def\beq#1\eeq{\begin{align}#1\end{align}}

\title{On the cobordism classification of symmetry protected topological phases }

\preprint{IPMU-18-0040}

\author{Kazuya Yonekura}
\affiliation{Kavli IPMU (WPI), UTIAS, 
The University of Tokyo, 
Kashiwa, Chiba 277-8583, Japan
}

\abstract{In the framework of Atiyah's axioms of topological quantum field theory with unitarity, 
we give a direct proof of the fact that symmetry protected topological (SPT) phases without Hall effects are classified by cobordism invariants.
We first show that the partition functions of those theories are cobordism invariants after a tuning of the Euler term.
Conversely, for a given cobordism invariant, we construct a unitary topological field theory
whose partition function is given by the cobordism invariant, assuming that a certain bordism group is finitely generated. 
Two theories having the same cobordism invariant partition functions are isomorphic.
} 

\begin{document}

\maketitle

\section{Introduction}

\subsection{The main theorem}
In this paper, we prove the following theorem whose physics motivations, the precise mathematical meaning and the conditions will be explained later in this paper:
\begin{thm}\label{thm:main}
There is a 1:1 correspondence between the following two sets:
\begin{enumerate}
\item  the set of isomorphism classes of $d$-dimensional unitary invertible topological field theories with the symmetry group $H_d$ satisfying Atiyah's axioms,
with the particular choice of the Euler term such that the sphere partition function is unity,
\item the cobordism group ${\rm Hom}(\Omega_d^{H}, \U(1))$, where $\Omega_d^H$ is the bordism group of $d$-dimensional manifolds with $H_d$-structure,
\end{enumerate}
The precise statements are given in Theorems~\ref{thm:inv}, \ref{thm:unit}, \ref{thm:construction}, \ref{thm:identification} and Remark~\ref{rem:euler}.
\end{thm}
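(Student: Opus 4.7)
My plan is to split the correspondence into three pieces: a forward map sending a TQFT to its partition function, a backward map producing a TQFT from any given cobordism invariant, and a uniqueness statement matching the two constructions. These correspond to Theorems~\ref{thm:inv} and \ref{thm:unit}, Theorem~\ref{thm:construction}, and Theorem~\ref{thm:identification} respectively, with the Euler-term normalisation of Remark~\ref{rem:euler} threading through all of them.

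\textbf{Forward direction.} Given a $d$-dimensional unitary invertible TQFT $Z$, I would assign to it the map $[M]\mapsto Z(M)$ on closed $d$-manifolds with $H_d$-structure, and show it lies in $\mathrm{Hom}(\Omega_d^H,\U(1))$. Invertibility forces every state space $Z(\Sigma)$ to be one-dimensional, so unitarity renders $|Z(M)|$ a power of $|Z(S^d)|$ fixed by the Euler characteristic $\chi(M)$; after the normalisation $Z(S^d)=1$ the modulus is $1$. To see bordism invariance, I would take a $(d+1)$-dimensional bordism $W$ between $M$ and $M'$, Morse-decompose it into elementary handle attachments, and check with Atiyah's gluing axiom that each handle multiplies the partition function only by a power of $Z(S^d)$; the normalisation makes each such factor trivial.

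\textbf{Backward direction.} For the reverse map, given $\alpha\colon\Omega_d^H\to\U(1)$ I would build a TQFT $Z_\alpha$ with partition function $\alpha$. On a closed $(d-1)$-manifold $\Sigma$ I would take the state space to be the complex line generated by classes of $d$-dimensional bordisms $W$ with $\partial W=\Sigma$, with $W$ and $W'$ identified up to the scalar $\alpha(W\cup_\Sigma \overline{W'})\in\U(1)$; gluing of bordisms then induces the linear maps, orientation reversal induces the hermitian structure, and the partition function on closed $d$-manifolds reproduces $\alpha$ by construction. Atiyah's axioms and unitarity are then verified directly. This is also the step that uses finite generation of $\Omega_d^H$ essentially, to guarantee that each state space is genuinely one-dimensional (rather than zero or infinite-dimensional) and that the relations can be imposed on a finite set of generators.

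\textbf{Uniqueness and main obstacle.} To close the loop, for any TQFT $Z$ with partition function $\alpha$ I would define an isomorphism $\Phi_\Sigma\colon Z(\Sigma)\to Z_\alpha(\Sigma)$ sending $Z(W)\in Z(\Sigma)$ to the tautological generator of $Z_\alpha(\Sigma)$ associated with $W$; cobordism-invariance of $\alpha$ combined with invertibility of $Z$ makes $\Phi$ well-defined and invertible, and the gluing axiom makes it natural. The main obstacle is the backward construction: establishing that the candidate state spaces are well-defined, one-dimensional, and satisfy gluing with no anomalous factor is where both the finite-generation hypothesis and the Euler-term tuning must be used in the most delicate way, and this is where the bulk of the work in Theorem~\ref{thm:construction} will sit.
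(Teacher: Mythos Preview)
Your overall architecture matches the paper's: a forward map via Morse/handle decomposition (Theorems~\ref{thm:inv}, \ref{thm:unit}), a backward construction (Theorem~\ref{thm:construction}), and an isomorphism theorem (Theorem~\ref{thm:identification}). The forward direction is essentially right, though the paper orders things differently: it first proves full bordism invariance via surgery, showing each handle contributes a factor of $Z(S^d)/Z(S^{\lambda-1}\times S^{d+1-\lambda})$ and proving separately (Lemma~\ref{lem:sphere2}) that all $Z(S^{\lambda}\times S^{d-\lambda})=1$; only afterward does $|Z(M)|=1$ drop out from $Z(M)Z(\overline{M})=Z(\varnothing)=1$ combined with $Z(\overline{M})=\overline{Z(M)}$. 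Your Euler-characteristic claim for $|Z(M)|$ is plausible but is not how the paper does it, and would itself require a handle argument.

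There is a genuine gap in your backward construction. You define the state space on $\Sigma$ as the line generated by $d$-manifolds $W$ with $\partial W=\Sigma$. But if $[\Sigma]\neq 0$ in $\Omega_{d-1}^{H}$ there are no such $W$ at all, so your line is zero. The paper's fix is to choose, for each class $w\in\Omega_{d-1}^{H}$, a reference manifold $Y^{\rm ref}_w$ and to generate $\CH(\Sigma)$ by bordisms $X:Y^{\rm ref}_w\to\Sigma$ rather than $\varnothing\to\Sigma$; the equivalence relation then reads $\ket{X'}\sim z(\tr(\overline{X}X'))\ket{X}$. This is precisely where the finite-generation hypothesis enters, and it is finite generation of $\Omega_{d-1}^{H}$ (one dimension below), not $\Omega_d^{H}$ as you wrote: one needs finitely many elementary reference manifolds so that $Y^{\rm ref}_w$ can be built as a disjoint union of them, and so that the monoidal identification $\CH(Y)\otimes\CH(Y')\cong\CH(Y\sqcup Y')$ and the involution $\CH(\overline{Y})\cong\overline{\CH(Y)}$ can be defined via explicit bordisms $X_{w,w'}$ between reference manifolds. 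You acknowledge the zero-dimensional danger but do not supply the mechanism that avoids it; the reference-manifold device is the missing idea.

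A secondary omission: the paper devotes substantial effort to fermionic sign bookkeeping (the factors $(-1)^{{\rm fp}(Y)}$ in the cutting-and-gluing law of Lemma~\ref{lem:gluing}, Koszul signs in the monoidal structure, and the $(-1)^F$ in $\overline{\overline{Y}}\cong Y$). These are invisible in your sketch, and while they are technical, they are essential for the symmetric monoidal and involution axioms to hold in ${\rm sVect}_\BC$ rather than ${\rm Vect}_\BC$.
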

This theorem essentially proves the conjecture in \cite{Kapustin:2014tfa,Kapustin:2014dxa}, at least in the framework of relativistic field theory.
See also \cite{Freed:2016rqq} for a very closely related theorem by Freed and Hopkins for fully extended invertible field theory.
Our theorem is about the case of non-extended topological quantum field theory (TQFT) which may be more familiar to physicists. 
Despite the slight difference in the axioms, the final classification results in \cite{Freed:2016rqq} and in this paper are essentially the same.
However, we remark that in this paper we always discuss about isomorphism classes of theories rather than deformation classes as in \cite{Freed:2016rqq}.
This makes our discussion more or less elementary and explicit.

\subsection{Symmetry protected topological phases, anomalies, and theta angles}
The physics motivations of the main theorem above are the classifications of (i) symmetry protected topological (SPT) phases in $d$-spacetime dimensions,
(ii) anomalies in $(d-1)$-spacetime dimensions, and
(iii) generalized theta angles in $d$-spacetime dimensional gauge theories and quantum gravity.
We will always assume relativistic symmetry in this paper. 
Empirically, the classification assuming relativity also 
coincides with other classifications of topological phases in condensed matter physics. 

SPT phases are gapped phases of quantum systems in $d$-spacetime dimensions which have a certain global symmetry group $H_d$
which include internal as well as spacetime symmetries.
In the language of quantum field theory (QFT), we assume that the system has a mass gap
and the Hilbert space of the ground states on any closed spatial manifold is one-dimensional. Then we would like to
classify the low energy (or long distance) limit of such systems up to some equivalence relations. 
This class includes important condensed matter systems such as topological insulators and superconductors~\cite{Hasan:2010xy, Qi:2011zya}.
By definition, the theory in the low energy limit has only a single state in the Hilbert space on any closed manifold.
However, they can have nontrivial partition functions.
If we are given a closed spacetime manifold $X$ with background field of the symmetry $H_d$ (e.g. electric-magnetic field for $\U(1)$ symmetry), 
we have the partition function $Z(X)$ on $X$ with the given background.
The values of the partition function turn out to classify these SPT phases.
The partition function must satisfy various requirements, or axioms, to be a QFT as 
we review later. 

The class of QFTs with one-dimensional Hilbert space on any closed spatial manifold is called invertible field theory~\cite{Freed:2004yc}. 
This name comes from the following fact.
If we have an invertible field theory $\CI$, there exists a theory $\CI^{-1}$
such that their product $\CI \times \CI^{-1}$ is a completely trivial theory. 
The classification of SPT phases amounts to a classification of invertible field theories up to the equivalence relation under continuous deformation.
The invertible field theories themselves are classified by isomorphism classes of theories, while the SPT phases are classified by deformation classes of theories.
Here, deformation classes mean that we identify two theories if they are related to each other by continuous change of parameters 
(e.g. continuous change of the composition of material by doping).

One of the most important characteristic properties of SPT phases is
as follows. If they are put on a spatial manifold with boundary, then there must appear a nontrivial boundary theory. Namely,
the boundary cannot be in a trivial gapped phase with a single ground state. 
Given an SPT phase, the possible boundary theories are not unique. However, they must have the same
't~Hooft anomaly of the global symmetry group $H_d$. The anomaly of $H_d$ of the boundary theories is 
completely determined by the bulk SPT phase. Conversely, it is believed that all anomalies are realized by SPT phases in the way described above. 
Therefore, SPT phases are relevant to the classification of 't~Hooft anomalies of the symmetry group $H_d$
(see e.g. \cite{Ryu:2010ah,Wen:2013oza,Kapustin:2014zva,Freed:2014iua,Wang:2014pma,Hsieh:2015xaa,Witten:2015aba,Witten:2016cio,Guo:2017xex} for a partial list of references).

In the case that the invertible field theories are realized as the low energy limit of massive fermions,
the situation is well-understood~\cite{Witten:2015aba,Witten:2016cio}. 
The partition functions are described by the $\eta$ invariant~\cite{Atiyah:1975jf} of Dirac operators coupled to the background field.
The relevant mathematics governing the $\eta$ invariant, such as the Atiyah-Patodi-Singer index theorem~\cite{Atiyah:1975jf} and the Dai-Freed theorem~\cite{Dai:1994kq} 
also have physical understanding~\cite{Yonekura:2016wuc, Fukaya:2017tsq},
analogous to the case that the Atiyah-Singer index theorem is understood by Fujikawa's method of path integral measure.
In more general cases, there are several proposals for the classification of SPT phases such as generalized group cohomology~\cite{Chen:2011pg,Gu:2012ib,Wang:2017moj}, 
cobordism group~\cite{Kapustin:2014tfa,Kapustin:2014dxa},
and more general approach based on generalized cohomology~\cite{Kitaev:2013a,Gaiotto:2017zba,Xiong:2016deb,Freed:2014eja}. They are not independent but are related to each other.
Our main theorem above, as well as the theorem in \cite{Freed:2016rqq}, prove
the conjectured classification by cobordism group \cite{Kapustin:2014tfa,Kapustin:2014dxa} in the context of relativistic field theory.

Invertible field theories are also relevant for the classification of theta angles in gauge theories and quantum gravity theories (e.g. the worldsheet theories of superstring theories)
in which a subgroup $G \subset H_d$ is ``gauged", meaning that the field associated to it 
is dynamical instead of background~\cite{Freed:2004yc}. In general, if we have a theory $\CT$ with a symmetry $G \subset H_d$,
then we may gauge the symmetry $G$ by coupling it to dynamical gauge field and/or gravity. Then we get a gauge theory with the gauge group $G$, which we may denote as $\CT/G$.
Suppose that we have a generic theory $\CT$ and an invertible field theory $\CI$ both of which have the symmetry $G$.
Then, we can consider another theory $\CT \times \CI$, and gauge
the group $G$ to get another gauge theory $(\CT \times \CI)/G$. In this gauge theory,
the invertible theory $\CI$ plays the role of a theta angle for the gauge group $G$.
For example, in $d=2$ spacetime dimensions with a symmetry $\U(1)$, we can consider a theory $\CI_\theta$ whose
partition function is given as $Z(X)=\exp( i \theta \int_X c_1)$, where $c_1= \frac{i F}{2\pi}$ is the first Chern class of the background field
strength $F$ for $\U(1)$,
and $\theta \in \BR/ 2\pi \BZ$ is a parameter. 
After making the $\U(1)$ gauge field dynamical, it is obvious that $\CI_\theta$ gives the theta term
for the $\U(1)$ gauge theory. See also \cite{Freed:2017rlk} for a recent discussion of more sophisticated examples.
For the classification of the theta terms, we need isomorphism classes of invertible field theories rather than deformation classes of theories.

\subsection{The principles of locality and unitarity in invertible field theory}
Because of the above motivations, it is important to classify possible invertible field theories with global symmetry group $H_d$.
How can we classify them? For this purpose, we use the most fundamental principles of QFT: locality and unitarity. 
Let us sketch them in the particular case of invertible field theories. The details will be reviewed in Sec.~\ref{sec:review}.
If the following discussions look abstract, we refer the reader to e.g. \cite{Witten:2015aba,Witten:2016cio,Yonekura:2016wuc} for the concrete case of free massive fermions.

\paragraph{Locality.}
The partition function of a theory on a closed manifold $X$ coupled to background field can be regarded as the effective action of 
the background field. We abbreviate the whole information of the manifold and the background field by just $X$.
Then the effective action $S_{\rm eff}(X)$ is given as $Z(X)=e^{-S_{\rm eff}(X)}$. In a system with a mass gap $\Delta$, the correlation length of the system is of the order of $\Delta^{-1}$.
If we only consider the system in length scales which are much larger than $\Delta^{-1}$, the correlation length is negligibly small and
the effective action is a ``local functional" of the background field. Roughly speaking, this means that 
the effective action is given by the integral of a local effective Lagrangian $\CL_{\rm eff}$ as 
\beq
``~S_{\rm eff}(X) = \int_X \CL_{\rm eff}~". \label{eq:integralL}
\eeq
For example, in the case of the theta angle in two-dimensional $\U(1)$ field discussed above, we have $\CL_{\rm eff} =  \frac{\theta}{2\pi} F$.

Suppose that the manifold $X$ is decomposed as $X = X_1 \cup X_2$
with $\partial X_1 = \overline{\partial X_2} = Y$ as in Figure~\ref{fig:div}.
The bar on $\overline{\partial X_2} $ is a generalization of orientation flip which will be explained in Sec.~\ref{sec:review}.
Then roughly speaking, we have 
\beq
``~S_{\rm eff}(X) = S_{\rm eff}(X_1) + S_{\rm eff}(X_2)~" \label{eq:sumS}
\eeq
because $S_{\rm eff}$ is the integral of $\CL_{\rm eff}$.
Hence we get
\beq
Z(X)=Z(X_1) Z(X_2). \label{eq:productZ}
\eeq
This is the rough statement of locality in the context of theories with a large mass gap. 
\begin{figure}
\centering
\includegraphics[width=.5\textwidth]{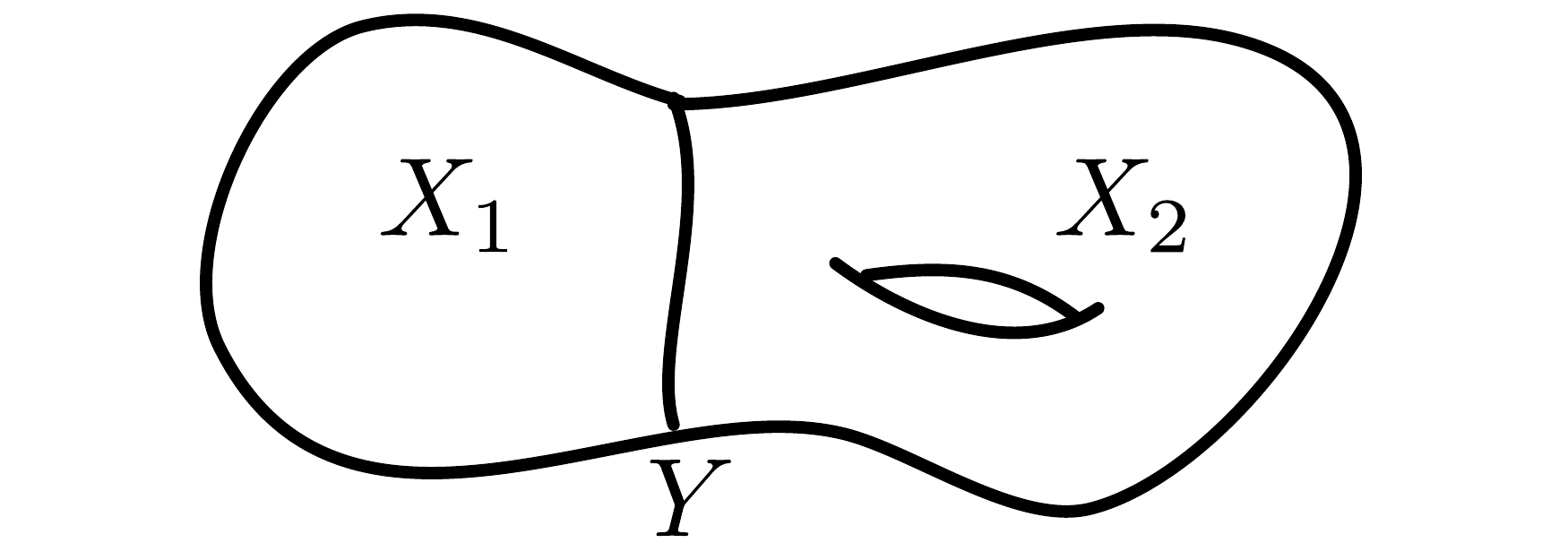}
\caption{ A manifold $X$ which consists of $X_1$ and $X_2$ glued at their common boundary $Y$.  \label{fig:div}}
\end{figure}

However, the above equations \eqref{eq:integralL} and \eqref{eq:sumS} are not precise, while \eqref{eq:productZ} needs a careful reinterpretation. 
The reason is as follows. Even if the correlation length is infinitesimally small, the two manifolds $X_1$ and $X_2$ are touching each other at their common boundary $\partial X_1 = \overline{\partial X_2} = Y$.
Therefore, there is a room for some nontrivial effect from the boundary. 

One way to think about the effect of the boundary is as follows. In Euclidean path integral, we are free to say which direction is the ``time" direction and which are the ``space" directions.
Let us regard $Y$ as ``space", and the direction orthogonal to it as ``time". Then we have the physical Hilbert space $\CH(Y)$ associated to $Y$.
The path integral over $X_1$ naturally gives a state vector on $Y$ (which is the defining property of the path integral), 
and hence we can think of $Z(X_1)$ as taking values in $\CH(Y)$,
\beq
Z(X_1) \in \CH(Y).
\eeq
By the assumption of invertible field theory, the Hilbert space $\CH(Y)$ is one-dimensional.
The $Z(X_2)$ takes values in the dual vector space $\CH(\overline{Y}) \simeq \overline{\CH(Y)}$ and their product $Z(X)=Z(X_1) Z(X_2)$ takes values in $\BC$.
Thus, the value of the partition function on a closed manifold gives a number in $\BC$, but the partition function on a manifold with boundary 
gives an element of the Hilbert space $\CH(Y)$ associated to the boundary $Y$.
Under this interpretation, the equation \eqref{eq:productZ} makes sense.
This is the condition of locality that the partition function must satisfy. 

\paragraph{Unitarity.}
Another fundamental principle is unitarity. 
Let us look at the structure of the effective action $S_{\rm eff}$ in a unitary theory. The unitarity is most straightforwardly
understood when the metric of the manifold $X_{\rm Lorentz}$ has Lorentzian signature rather than Euclidean signature.
Then, the unitarity means that $Z(X_{\rm Lorentz})=e^{i S_{\rm eff}(X_{\rm Lorentz}) }$ with real $S_{\rm eff}(X_{\rm Lorentz}) \in \BR$. This is due to the fact that
the sum of probability in quantum mechanics must be unity, so any time evolution must be given by a unitary matrix,
and invertible field theory has only one state in the Hilbert space, so the $Z(X_{\rm Lorentz})$ must be a pure phase with the unit absolute value $|Z(X_{\rm Lorentz})|=1$.

Now we want to Wick-rotate the above statement to a manifold $X$ with Euclidean signature. 
For simplicity, here we assume that the theory depends just on the orientation of the manifold, whose generalization will be discussed later in Sec.~\ref{sec:review}.
We may divide the action into two parts $S_{\rm eff}(X_{\rm Lorentz}) = S^{\rm even}_{\rm eff}(X_{\rm Lorentz}) + S^{\rm odd}_{\rm eff}(X_{\rm Lorentz}) $,
where $S^{\rm even}_{\rm eff}(X_{\rm Lorentz})$ is the part which is even under the orientation flip, and $S^{\rm odd}_{\rm eff}(X_{\rm Lorentz})$
is the part which is odd under orientation flip. Roughly speaking, the odd part contains the totally anti-symmetric tensor $\epsilon^{\mu_1 \cdots \mu_d}$.
For example, the theta term for the two-dimensional $\U(1)$ field is $\frac{\theta}{4\pi} \int_X d^2x (\epsilon^{\mu\nu}F_{\mu\nu})$ if we use the explicit component notation.
This totally anti-symmetric tensor $\epsilon^{\mu_1 \cdots \mu_d}$ produces an imaginary factor $i =\sqrt{-1}$ when we Wick-rotate the metric
from the Lorentzian signature to Euclidean signature. Therefore, after the Wick rotation $X_{\rm Loretnz} \to X$, we get
\beq
Z(X) = \exp \left( - S^{\rm even}_{\rm eff}(X) + i S^{\rm odd}_{\rm eff}(X) \right),
\eeq
where both $S^{\rm even}_{\rm eff}(X) $ and $S^{\rm odd}_{\rm eff}(X)$ take values in the real numbers $\BR$.

Let $\overline{X}$ be the orientation flip of the manifold $X$. Then, from the above structure, we clearly get
\beq
Z(\overline{X}) = \overline{Z(X)}
\eeq
where $ \overline{Z(X)}$ is the complex conjugate of $Z(X)$. This is the requirement of unitarity.
We require it also when the manifold has a boundary $\partial X = Y$. In that case, we first require that
the Hilbert space $\CH(\overline{Y})$ on the orientation-flipped manifold $\overline{Y}$ is given by the complex conjugate Hilbert space $\overline{\CH(Y)}$.
Then, $Z(X)$ takes values in $\CH(Y)$, and $Z(\overline{X})$ takes values in $\CH(\overline{Y}) \cong \overline{\CH(Y)}$ with the value $Z(\overline{X})=\overline{Z( X )}$,
which is the complex conjugate vector of $Z(X)$. Furthermore, the unitarity requires that every Hilbert space $\CH(Y)$ has a positive definite hermitian inner product, and hence
we can identify the complex conjugate vector space $\overline{\CH(Y)}$ and the dual vector space of $\CH(Y)$. 
Then $Z(X) \overline{Z(X)} \geq 0$ can be naturally regarded as a non-negative number in the unitary theory.

Locality and unitarity are believed to be necessary conditions for relativistic quantum systems.
It is not clear whether they are also sufficient or not, especially because the locality in the above sense might be weaker than physically expected, 
and it may be weaker than the fully extended version of locality~\cite{Baez:1995xq,Lurie:2009keu} (see also \cite{Schommer-Pries:2017sdd} for invertible field theory) 
used in \cite{Freed:2016rqq}.
However, locality and unitarity in the above sense may be sufficient for invertible field theories. 
They are already very powerful constraints on physical systems,
and the classifications by imposing just locality and unitarity agree with classifications obtained by totally different methods.
One of the nontrivial examples is the classification of time-reversal invariant topological superconductors in $3+1$-dimensions.
There the classification by $\BZ_{16}$ was found by studies of solvable models 
\cite{Fidkowski:2013jua,Wang:2014lca,Metlitski:2014xqa,Morimoto:2015lua,Tachikawa:2016xvs, Witten:2016cio},
and the same classification was obtained in \cite{Kapustin:2014dxa,Witten:2015aba} which is essentially by the requirement of locality and unitarity.
Therefore, in this paper we classify invertible field theories under these principles.
More precise axioms are explained in Sec.~\ref{sec:review}.

The above discussions are general. However,
in the rest of the paper, we work entirely in the framework of topological QFT.
We need to make a remark about what we mean by ``topological". 
By ``topological", we mean that partition functions on closed manifolds are invariant under continuous deformation of background field.
Therefore, it may be more properly rephrased as ``homotopy invariance under the change of background field".
This excludes some important invertible field theories. For example, for $\U(1) $ symmetry in $d=2+1$ spacetime dimensions,
we have the invertible field theory whose effective action is given by the Chern-Simons invariant of the background field $A$ which is roughly given as $S_{\rm eff}(X)=\frac{i k}{4\pi} \int_X AdA$ for $k \in \BZ$.
This system describes integer quantum Hall effects (see \cite{Witten:2015aoa} for a review). The effective action
depends on the precise field configurations of $A$ and is not invariant under the continuous change of $A$.
For more discussions and a conjecture about the classification of theories including those ``non-topological" cases, see \cite{Freed:2016rqq}.
Throughout the paper, we assume the absence of such Hall conductivity and generalizations of it,
as well as other less interesting non-topological terms in $S_{\rm eff}$.

\subsection{Organization of the paper}
In the rest of the paper, we will prove Theorem~\ref{thm:main} whose precise statements are given in
Theorems~\ref{thm:inv}, \ref{thm:unit}, \ref{thm:construction}, \ref{thm:identification} and Remark~\ref{rem:euler}.

In section~\ref{sec:review}, we review the Atiyah's axioms of TQFT, following \cite{Freed:2016rqq}.
We describe the precise version of locality and unitarity which are sketched above.
In section~\ref{sec:cobinv}, we show that partition functions of invertible field theories are invariant under
bordisms of manifolds. More explicitly, partition functions are given by cobordism invariants ${\rm Hom}(\Omega_d^{H}, \U(1))$,
up to the caveat that we need to tune the term in the effective action which is proportional to the Euler density.
The Euler term is rather trivial and can be easily factored out.
In section~\ref{sec:explicit}, for each element of ${\rm Hom}(\Omega_d^{H}, \U(1))$,
we give an explicit construction of a TQFT whose partition function is given by that element.
Moreover, we show that invertible TQFTs are completely characterized by their partition functions.
Namely, two theories with the same partition function are isomorphic in the sense which is made precise there.
Therefore, we conclude that unitary invertible TQFTs are classified by ${\rm Hom}(\Omega_d^{H}, \U(1))$.
In appendix~\ref{sec:app} we summarize some technical definitions in category theory.

\section{Manifold with structure and axioms of topological field theory}\label{sec:review}
QFT requires spacetime dimension $d$ and a symmetry group $H_d$ as the data.
The background field for the symmetry group $H_d$ is incorporated as principal $H_d$-bundles.
We review the axioms of TQFT with the symmetry group $H_d$. 
See Sec.~2,3,4 of \cite{Freed:2016rqq} for more details. 
However, we give more explicit discussions on spin structure, so the discussions in Sec.~\ref{sec:subtle} and related discussions in later subsections may be new.

It may be possible to extend our discussions to more general cases such as higher form symmetries~\cite{Gaiotto:2014kfa}, sigma models~\cite{Freed:2017rlk,Thorngren:2017vzn},
2-groups~\cite{Kapustin:2013uxa,Tachikawa:2017gyf,Cordova:2018cvg,Benini:2018reh}, duality groups~\cite{Seiberg:2018ntt}, and so on. But we do not discuss them in this paper.

\subsection{$H_d$-manifold} \label{sec:Hd}
The group $H_d$ contains both internal as well as the Euclidean signature version of Lorentz symmetries, and it must satisfy some properties
which are described in \cite{Freed:2016rqq}. We review them to the extent needed in this paper.
We work entirely in the Euclidean signature of the Lorentz group.

$H_d$ is a compact Lie group with
a homomorphism 
\beq
\rho_d: H_d \to \O(d). \label{eq:HtoO}
\eeq
This is a map which ``forgets the internal symmetry group", and $\O(d)$ is regarded as the Euclidean signature version of Lorentz symmetry group of spacetime. 
The image $\rho_d(H_d)$ is either $\SO(d)$, $\O(d)$ or the trivial group, but we do not consider the case of the trivial group following \cite{Freed:2016rqq}.
(The case of the trivial $\rho_d(H_d)$ would give framings of the tangent bundle of manifolds.)
If $\rho_d(H_d) = \O(d)$, the theory has a time reversal symmetry.
The inverse image of $\SO(d)$ is of the form
\beq
\rho^{-1}_d(\SO(d)) \cong (\Spin(d) \times K)/\langle (-1, k_0 ) \rangle \label{eq:orientableH}
\eeq
where $K$ is a compact group which is the kernel of $\rho_d$ (i.e., the internal symmetry group),
$k_0$ is a central element of $K$ of order 2 (i.e. $(k_0)^2=1$), and $\langle (-1, k_0) \rangle $ is the $\BZ_2$ subgroup of $\Spin(d) \times K$ 
generated by $(-1, k_0)$ where $-1 \in \Spin(d)$ is the center of the spin group which maps to the identity in $\SO(d)$.
For example, if $H_d=\Spin(d)$, then $K=\BZ_2$ because $(\Spin(d) \times \BZ_2)/ \BZ_2 \cong \Spin(d)$. 
We call the central element  $(-1,1) \sim (1, k_0)$ as the fermion parity 
and denote it as $(-1)^F$.
The equation \eqref{eq:orientableH} is derived under the condition $d \geq 3$ in \cite{Freed:2016rqq}, but for simplicity 
we only consider the cases that $H_d$ has this property for any $d \geq 1$.

Typical (though not general) examples are of the form
\beq
H_d = (L_d \ltimes K)/\langle (-1, k_0 ) \rangle , \label{eq:tytpicalH}
\eeq
where $L_d$ is a Euclidean signature of the Lorentz group such as $\Spin(d) $ and $ \Pin^\pm(d)$.
The semidirect product is given as follows. Take an automorphism $\alpha: K \to K$ such that $\alpha^2=1$, which may be trivial ($\alpha=1$).
Then we define the product of $(s_1, k_1) \in L_d \ltimes K$ and $(s_2, k_2) \in L_d \ltimes K$
as $(s_1,k_1)(s_2,k_2)=(s_1 s_2, k_1 \alpha^{n_1}(k_2)) $ where $n_1 = 0$ or $1$ mod 2 depending
on whether $\rho_d(s_1,k_1)$ is in the connected component of the identity in $\O(d)$ or in the other component. 
For example, if there is no symmetry other than the Lorentz group, we have $H_d=\SO(d) =\Spin(d)/\BZ_2$ for the bosonic case and $H_d=\Spin(d)$ for the fermionic case.
With time-reversal symmetry, we have $H_d=\O(d) = \Pin^{\pm}(d)/\BZ_2$ for the bosonic case and $H_d=\Pin^{\pm}(d)$ for the fermionic case.
A bosonic system protected by $\U(1)$ symmetry is given by $H_d = \SO(d) \times \U(1)$. A $d=3+1$ dimensional fermionic topological insulator 
is protected by $\Pin^{+}(4) \ltimes \U(1)$ with the nontrivial automorphism $\alpha$ by complex conjugation.

Now we define $H_d$-manifold. 
Recall that a manifold $X$ has the frame bundle $FX$ associated to the tangent bundle $TX$, whose fiber $F_x X$ at $x \in X$ consists of
ordered bases $(e_1, \cdots, e_d)$ of the tangent space $T_x X$.
In the following, we abuse the notation $\pi$ to represent the projection from any fiber bundle to the base manifold $X$.
\begin{defi}
An $H_d$-structure on a $d$-dimensional manifold $X$ is a pair $(P, \varphi)$ where
$P$ is a principal $H_d$-bundle over $X$, and $\varphi$ is a bundle map $\varphi: P \to FX$ such that $\pi \circ \varphi = \pi$
and $\varphi(p \cdot h) = \varphi(p) \cdot \rho_d(h)$ for $p \in P$ and $h \in H_d$. An $H_d$-manifold is a manifold with an $H_d$-structure equipped.
\end{defi}
The image $\varphi(P) \subset FX$ has the structure of a $\rho_h(H_d)$ principle bundle which is either $\SO(d)$ or $\O(d)$ (or trivial if $\rho_d$ is trivial), 
and hence it automatically gives a Riemann metric to the tangent bundle $TX$. (If $\rho_d$ is trivial, then this would give a framing of $TX$.)
We denote this image as
\beq
F_OX =\varphi(P)
\eeq
which has the $\SO(d)$ or $\O(d)$ principal bundle structure.
If $\rho_d(H_d) = \SO(d)$, it also gives an orientation to $X$.
If $H_d =\Spin(d)$, then the $H_d$-structure is a spin structure of the manifold.
In the following, an $H_d$-manifold is often denoted by just $X$ by omitting to write $(P, \varphi)$ explicitly unless necessary.

Given an $H_d$-structure on $X$, we can define an $H_{d+1}$ structure on the bundle $\underline{\BR} \oplus TX$ in a canonical way,
where $\underline{\BR} $ is the trivial line bundle on $X$. 
First we need some preparation. 

We embed $\O(d)$ into $\O(d+1)$ as
\beq
j_d: \O(d) \ni A \mapsto \left( \begin{array}{cc}
1 & 0 \\
0 & A 
\end{array}
\right) \in \O(d+1).
\eeq 
From $H_d$, there is a way to construct $H_{n}$ for other $n \neq d$, 
such that there exists an embedding $i_n : H_n \to H_{n+1}$
with the commutative diagram
\beq
\xymatrix{
\cdots \ar[r] &H_{d-1} \ar[d]^{\rho_{d-1} }  \ar[r]^{i_{d-1} } & H_d \ar[d]^{\rho_d}  \ar[r]^{i_d} & H_{d+1} \ar[d]^{\rho_{d+1}} \ar[r] & \cdots \\
\cdots \ar[r] &\O(d-1) \ar[r]_{j_{d-1}} & \O(d) \ar[r]_{j_d} & \O(d+1) \ar[r] & \cdots
} \label{eq:Hcomm}
\eeq
where each square is a pullback diagram.
For the examples of the form \eqref{eq:tytpicalH}, we may explicitly take $H_{n} =( L_{n} \ltimes K)/\langle (-1, k_0 ) \rangle$ for any $n$.
See Sec.~2 of \cite{Freed:2016rqq}, for more general construction. 
\begin{rem}
In the present paper, we regard \eqref{eq:HtoO}, \eqref{eq:orientableH} and \eqref{eq:Hcomm} as the axioms characterizing
$H_d$. 
\end{rem}

Now, let $(P_d, \varphi_d)$ be an $H_d$-structure on $X$. 
Define a principal $H_{d+1}$-bundle by 
$P_{d+1}=P_d \times_{i_d} H_{d+1}$. Here $\times_{i_d}$ means that for pairs $(p_d, h_{d+1}) \in P_d \times H_{d+1}$ we impose the equivalence relation 
$(p_d h_d, h_{d+1})  \sim (p_d, i_{d}(h_d)h_{d+1}) $ for $h_d \in H_d$.
Also, define a map 
\beq
\varphi_{d+1} : P_{d+1}= P_d \times_{i_d} H_{d+1} \to F(\underline{\BR} \oplus TX)
\eeq
as follows, where $F (\underline{\BR} \oplus TX)$ is the frame bundle associated to $\underline{\BR} \oplus TX$. 
(More generally, in the following, we denote the frame bundle associated to 
a vector bundle $V$ as $F V$. The orthonormal frame bundle is denoted as $F_O V$.)
For elements of the form $(p_d, 1) \in P_d \times_{i_d} H_{d+1}$ we set 
$\varphi_{d+1}(p_d, 1)=(e_0, \varphi_d(p_d))$ where $e_0 $ is the unit vector of $\underline{\BR}$.
For more general elements $(p_d,h_{d+1}) \in P_d \times_{i_d} H_{d+1}$ we define 
\beq
\varphi_{d+1}(p_d, h_{d+1})=(e_0, \varphi_d(p_d)) \cdot \rho_{d+1}(h_{d+1}). 
\eeq
One can check that this definition is well-defined (i.e. the results for $(p_d h_d, h_{d+1})$ and $(p_d, i_d(h_d) h_{d+1})$ are the same)
by using the commutativity of \eqref{eq:Hcomm} as 
\beq
&(e_0, \varphi_d(p_d h_d)) \cdot \rho_{d+1}(h_{d+1}) = (e_0, \varphi_d(p_d) \rho_d(h_d)   ) \cdot \rho_{d+1}(h_{d+1})  \nonumber \\
=&  (e_0, \varphi_d(p_d)  )  j_n (\rho_d(h_d))  \cdot \rho_{d+1}(h_{d+1}) = (e_0, \varphi_d(p_d)  )  \rho_{d+1} (i_d(h_d))  \cdot \rho_{d+1}(h_{d+1})  \nonumber \\
=&(e_0, \varphi_d(p_d)  )   \rho_{d+1}(i_d(h_d)h_{d+1}) .  \label{eq:dcheck}
\eeq
This defines the $H_{d+1}$-structure $(P_{d+1}, \varphi_{d+1})$ on $\underline{\BR} \oplus TX$.

In the same way, given an $H_d$-manifold X, we can canonically define an $H_{d+1}$-structure to $I_{a,b} \times X$ where $I_{a,b}=[a,b] \subset \BR$.
We just identify $e_0$ in the above construction with $\partial / \partial t$, where $t \in I_{a,b}$ is the standard coordinate of $\BR$.
Notice that this $H_{d+1}$-structure induces the metric $ds^2_{d+1}=dt^2+ds^2_d$ on the manifold $I_{a,b} \times X$.
We call this $H_{d+1}$-structure as the product $H_{d+1}$-structure on $I_{a,b} \times X$ induced from $X$.

Let $X$ be an $H_d$-manifold whose boundary contains a connected component $Y$. 
On $Y$, the tangent bundle of $X$ is canonically split as $TX|_Y \cong N \oplus TY$, where $N$
is the normal bundle to $Y$. 
There is the choice of whether we take the outward or inward normal vector as the basis of $N$, which give two different trivializations $N \cong \underline{\BR}$
of the normal bundle.
After the choice, we get an $H_{d}$-structure on $\underline{\BR} \oplus TY$.

Once we have the $H_d$-structure on $\underline{\BR} \oplus TY$, 
we can introduce an $H_{d-1}$-structure on $Y$ as follows for $d>1$.
There is a sub-bundle $F_O Y$ of $F_O (\underline{\BR} \oplus TY)$
which consists of elements of the form $(e_0, * )$ where $e_0$ is the unit vector of $\underline{\BR}$,
and the $*$ represents frames for $TY$. 
This $F_O Y$ has the structure of 
a principal $\SO(d-1) $ or $\O(d-1)$ bundle, depending on whether $\rho_d(H_d)=\SO(d)$ or $\O(d)$. 
Then, from the $H_d$-structure $(P_d, \varphi_d)$ on $\underline{\BR} \oplus TY$, 
we define the $H_{d-1}$-structure on $Y$ as 
\beq
P_{d-1} &= \varphi_d^{-1}(e_0, *) \\
\varphi_{d-1} &: P_{d-1} \xrightarrow{\varphi_d} F_O Y \to F Y
\eeq
where we map $(e_0,*)$ to $*$.
This defines the $H_{d-1}$-structure $(P_{d-1}, \varphi_{d-1})$ on $Y$. 
Conversely, an $H_{d-1}$-structure on $Y$ gives the $H_d$-structure on $\underline{\BR} \oplus TY$ by the process described above.
Therefore, an $H_d$-structure on $\underline{\BR} \oplus TY$
and an $H_{d-1}$-structure on $TY$ can be canonically identified for $d>1$.
So we just call them as an $H_{d-1}$-structure on $Y$ even for $d=1$, even though it is better to have in mind
the presence of $\underline{\BR}$ in the case $d=1$.

Combining the previous two paragraphs, we can define an $H_{d-1}$-structure on a component $Y$ of the boundary of $X$
once we specify the outward or inward normal vector.

Given an $H_d$-structure on $X$, we are now going to explain the opposite $H_d$-structure. 
We only explain a definition which is equivalent to the definition in \cite{Freed:2016rqq} up to a canonical isomorphism. 
See Sec.~4 of \cite{Freed:2016rqq} for a more precise definition and canonical isomorphism.

First, we define the $H_{d+1} $-structure on $\underline{\BR} \oplus TX$ in the way described above. 
We can also go back to the $H_d$-structure on $X$ induced from the $H_{d+1}$-structure on $\underline{\BR} \oplus TX $ in the way described above.
This $H_d$-structure is canonically isomorphic to the original $H_d$-structure on $X$. 
Now, instead of considering the sub-bundle of the form $(e_0, * )$ as discussed above,
we can consider the sub-bundle of the form $(-e_0, * )$ by flipping the direction of $e_0$.
Then, taking the inverse image of $(-e_0, * )$ and following the same procedure as above, we get a certain $H_{d}$-structure on $ X$.
We denote the $H_{d}$-manifold with this new $H_d$-structure as $\overline{X}$. 
(In this paper we never use a closure of a topological space. The line over an $H_d$-manifold is always used for the opposite $H_d$-structure.)
\begin{defi}
The opposite $H_d$-structure $( \overline{P_d},\overline{\varphi_d})$ of an $H_d$-manifold $(X, P_d, \varphi_d)$
is defined as follows. Let $(P_{d+1}, \varphi_{d+1})$ be the $H_{d+1}$-structure on $\underline{\BR} \oplus TX$ with $P_{d+1}=P_d \times_{i_d} H_{d+1}$
and $P_d \cong \{ \varphi_{d+1}^{-1}(e_0, *);~ * \in F X \}$.
Then 
\beq
\overline{P_d}=\{ \varphi_{d+1}^{-1}(-e_0,*);~* \in F X   \}
\eeq
and, by regarding $\overline{P_d}$ as a subset of $P_{d+1}=P_d \times_{i_d} H_{d+1}$,
\beq
\overline{\varphi_d}: \overline{P_d} \ni (p_d, h_{d+1}) \mapsto (e_0, \varphi_d(p_d)) \rho_{d+1}(h_{d+1})=(-e_0, *) \mapsto * \in F X.
\eeq
The manifold with the opposite $H_d$-structure is abbreviated as $\overline{X}$.
\end{defi}

The $H_d$-structure $\overline{\overline{X}}$ obtained by taking the opposite twice can be identified with the 
original $H_d$-structure $X$ in the following way.
First, we take the bundle $P_{d+1} = P_d \times_{i_d} H_{d+1}$ and the map $\varphi_{d+1}: P_{d+1} \to F(\underline{\BR} \oplus TX)$ as defined above.
Then take $\overline{P_d} = \varphi^{-1}_{d+1} ( -e_0, *)$. Repeating this procedure again,
we take the bundle $\overline{P_{d+1}} = \overline{P_d} \times_{i_d} H_{d+1}$ and the map 
$\overline{\varphi_{d+1}}: P_{d+1} \to F(\underline{\BR}' \oplus TX)$ (with different $\underline{\BR}'$ from $\underline{\BR}$ which appeared above).
Then take $\overline{\overline{P_d}} = \overline{\varphi_{d+1}}^{-1} ( -e'_0, *)$, where $e'_0$ is the unit vector of $\underline{\BR}'$.
This bundle $\overline{\overline{P_d}}$ can be regarded as a subset of $  P_d \times_{i_d} H_{d+1} \times_{i_d} H_{d+1}$.
Elements of this bundle are of the form 
\beq
(p, h, h') \in \overline{\overline{P_d}} \subset  P_d \times_{i_d} H_{d+1} \times_{i_d} H_{d+1}
\eeq
such that $\rho_{d+1}(h)$ and $\rho_{d+1}(h')$ act as $(-1)$ on the first component of $d+1$ dimensional vectors.
We denote that condition as 
\beq
\rho_{d+1}(h),~\rho_{d+1}(h') \in (-1) \oplus \O(d) 
\eeq
with the obvious notation.

Now let us consider the product $hh'$. The projection $\rho_{d+1}(hh')$ acts trivially on the first component of $d+1$ dimensional vectors, and hence 
by the pullback diagram \eqref{eq:Hcomm}, 
it can be represented as $hh'=i_d(h'')$ for $h'' \in H_d$. 
Now we can identify $\overline{\overline{P_d}}$ and $P_d$ by the map 
\beq
\xi: (p, h, h' ) \mapsto p h''(-1)^F. \label{eq:barbarmap}
\eeq 
One can check that
this identification is consistent with the maps $\varphi_d: P_d \to FX $ and $\overline{\overline{\varphi_d}}: \overline{\overline{P_d}} \to FX$,
where $\overline{\overline{\varphi_d}}$ is constructed in the appropriate way.
In the above identification, we have included the factor $(-1)^F$. There are several motivations for including it, which will be explained later in this paper.

Before closing this subsection, let us also define isomorphisms between two $H_d$-manifolds.
\begin{defi}
An isomorphism between two $H_d$-manifolds $(X, P, \varphi)$ and $(X',P', \varphi')$ is a bundle map $\Phi : P \to P'$
such that it induces a diffeomorphism $\Psi : X \to X'$ of the base manifolds,
and the following diagram commutes:
\beq
\xymatrix{
P \ar[r]^{\Phi} \ar[d]_{\varphi}& P ' \ar[d]^{\varphi'} \\
FX \ar[r]_{\Psi_*} & FX'
}
\eeq
\end{defi}
\begin{ex}\label{ex:time1}
To give a physical motivation for including the factor $(-1)^F$ in \eqref{eq:barbarmap}, we discuss the example of time-reversal ${\mathsf T}$.
(The following discussion can also be done for $\mathsf{CPT}$ after some modification.)
To make the physical meaning clear, here we consider an $H_{d-1}$-manifold $Y$ regarded as a spatial manifold.
Let us consider the case $H_{d} = \Pin^{\pm}(d)$ as an example. 
We pick up an element $\gamma_0 \in \Pin^{\pm}(d) $ such that $\rho_{d}(\gamma_0)=(-1) \oplus (1_{d-1})$.
If $Y$ is orientable, there is an isomorphism  ${\mathsf T}_Y$ (which will be ``time-reversal", as explained later in Example~\ref{ex:time2}) from $Y$ to $ \overline{Y}$
defined as follows. Let $s_\alpha $ be local sections on open cover $\{ U_\alpha \}$ of $Y$, and let $p=s_\alpha p_\alpha$ where 
$p_\alpha$ is the local coordinate of the fiber of $P_d$ on $U_\alpha$. 
Then we define the isomorphism ${\mathsf T}_Y $ from the $H_{d-1}$-manifold $ Y $ to $ \overline{Y}$ as
\beq
{\mathsf T}_Y: P_{d-1} \ni p= s_\alpha p_\alpha \to ( s_\alpha   , \gamma_0 i_{d-1}(p_\alpha)) \in \overline{P_{d-1}} \subset P_{d-1} \times_{i_{d-1}} H_d.
\eeq 
This map ${\mathsf T}_Y$ is well defined on orientable manifolds because $\gamma_0$ commutes with $\Spin(d-1)$
and hence $\gamma_0$ commutes with the transition functions of the bundle $P_{d-1}$. (We assume that $s_\alpha$ is taken to give some orientation to $Y$.)
Also, ${\mathsf T}_Y(p h) = {\mathsf T}_Y(p )h$, so this is an isomorphism of the principal $H_{d-1}$-bundles.
Applying ${\mathsf T}$ twice, we get ${\mathsf T}^2 :={\mathsf T}_{\overline{Y}}{\mathsf T}_Y: Y \to \overline{\overline{Y}}$ as $s_\alpha p_\alpha \to (s_\alpha, \gamma_0, \gamma_0 i_{d-1}(p_\alpha))$.
By the isomorphism $ \overline{\overline{Y}} \cong Y$ introduced in \eqref{eq:barbarmap}, we get 
\beq
{\mathsf T}^2=(\gamma_0)^2 (-1)^F
\eeq 
which is $(-1)^F$ for $\Pin^+$ and $+1$ for $\Pin^{-}$.
This will give the corresponding relation of the time-reversal operator acting on the Hilbert spaces (see Example~\ref{ex:time2} for more details).
This is the standard relation in physics: see e.g. \cite{Kapustin:2014dxa,Witten:2015aba}.
\end{ex}

\subsection{Bordism}

Now we define bordism. Let $Y_0$ and $Y_1$ be $H_{d-1}$-manifolds.
Roughly speaking, a bordism is an $H_d$-manifold $X$ with the boundary given by the disjoint union of $Y_0$ and $Y_1$.
More precise definition may be as follows.
\begin{defi}\label{defi:bordism}
A bordism from $Y_0$ to $Y_1$ is a 7-tuple 
\beq
(X, (\partial X)_0, (\partial X)_1,Y_0, Y_1, \varphi_0, \varphi_1)
\eeq 
as follows.
The $X$ is a compact $H_d$-manifold whose boundary is a manifold 
consisting of the disjoint union of two closed manifolds $(\partial X)_0 \sqcup (\partial X)_1$.
The $(\partial X)_1$ and $(\partial X)_0$ are given the $H_{d-1}$-structures induced from $X$ by using the outward normal vector 
and inward normal vector for $(\partial X)_1$ and $(\partial X)_0$, respectively.
The $\varphi_i~(i=0,1)$ are isomorphisms
$
\varphi_i: (\partial X)_i \to Y_i~(i=0,1)
$.
Moreover, there exists a neighborhood $U_\epsilon$ of the boundary $\partial X \subset U_\epsilon \subset X$ such that 
$U_\epsilon$ is isomorphic to the disjoint union of 
$ [0,\epsilon ) \times Y_0$ and $(-\epsilon,0] \times Y_1 $  for small enough $\epsilon >0$ with the product $H_d$-structure on them. 

Two bordisms $(X, (\partial X)_0, (\partial X)_1, Y_0, Y_1, \varphi_0, \varphi_1)$ and 
$(X' , (\partial X')_0, (\partial X')_1, Y_0, Y_1, \varphi'_0, \varphi'_1)$ are identified if there exists an isomorphism $\Phi : X \to X'$ such that 
$\varphi_i = \varphi'_i \circ \Phi|_{(\partial X)_i}$.
Also, two bordisms are identified under the following homotopy.
If there exists a smooth one-parameter family of $H_d$-structures on $X$ parametrized by $s \in [0,1]$, 
denoted as $\varphi(s): P \to TX$, 
such that $\varphi(s)$ is independent of $s$ in some neighborhood of the boundary $\partial X$, 
then the two bordisms which has the $H_d$-structures 
at $s=0$ and $s=1$ are identified.
\end{defi}

We often abbreviate a bordism $(X, (\partial X)_0, (\partial X)_1, Y_0, Y_1, \varphi_0, \varphi_1)$ as just $(X, Y_0, Y_1)$ or
$X: Y_0 \to Y_1$, or more simply as $X$. 
In some cases, the full structure of the bordism given by the 7-tuple
$(X, (\partial X)_0, (\partial X)_1, Y_0, Y_1, \varphi_0, \varphi_1)$ is important, and in those cases we will write them explicitly. 
The $Y_0$ and $Y_1$ may be called as ingoing and outgoing boundary, respectively.

\begin{rem}
In the above definition of bordism, the identification under the homotopy $\varphi(s)$ is introduced. This is because
we want to discuss ``topological" QFT in which the partition function is invariant under such homotopy.
However, notice that we fix the boundary under the homotopy. Continuous change of the boundary would lead to physical effects
such as Berry phases.
\end{rem}
\begin{rem}
In the above definition,
the assumption about the existence of $U_\epsilon$ which has the product structure $ [0,\epsilon ) \times Y_0 \sqcup (-\epsilon,0] \times Y_1 $
is imposed for technical simplicity~\cite{Freed:2012hx}. 
For TQFT, we do not lose anything by this assumption.
This is because if we want to cut and glue a manifold $X$ along a codimension-1 submanifold $Y$, we can first change the metric
continuously near $Y$ so that it has a neighborhood with the product $H_d$-structure.
Alternatively, it is also possible to require only continuous (i.e., not necessarily smooth) $H_d$-structure, while base manifolds themselves are still smooth.
\end{rem}

If we are given two bordisms 
\beq
(X_0, (\partial X_0)_0, (\partial X_0)_1,Y_0, Y_1, \varphi_{0,0}, \varphi_{0,1}) \nonumber \\
(X_1, (\partial X_1)_0, (\partial X_1)_1,Y_1, Y_2, \varphi_{1,0}, \varphi_{1,1}), 
\eeq
we can glue them together along 
the boundary $Y_1$ as follows. Let $(P_0, \varphi_0)$ and $(P_1, \varphi_1)$ be the $H_d$-structures on $X_0$ and $X_1$,
respectively. Also, let $e_{0,1}$ be the outward normal vector to $(\partial X_0)_1$ and let $e_{1,0}$ be the inward normal vector to $(\partial X_1)_0$.
Then, the induced $H_{d-1}$-structures on these boundaries are given by $P'_0 = \varphi^{-1}_0(e_{0,1}, *)$ and 
$P'_1 =  \varphi^{-1}_0(e_{1,0}, *)$, respectively. Then we identify the boundaries as follows. First, we idenfity $p'_0 \in P'_0 \subset P_0$
and $p'_1 \in P'_1 \subset P_1$ as $p'_1 = \varphi^{-1}_{1,0} \circ \varphi_{0,1}(p'_0)  $. For more general element $p_0 \in P_0|_{(\partial X_0)_1}$,
we represent it as $p_0=p'_0 h$ for a (not unique) $h \in H_d$ where $p'_0 \in P'_0$. 
Then we identify $P_0|_{(\partial X_0)_1}$ and $ P_1|_{(\partial X_1)_0}$ 
by the map $p_0 \mapsto  \varphi^{-1}_{1,0} \circ \varphi_{0,1}(p'_0)h$.
It is easy to check that it is independent of the choice of the pair $(p'_0,h)$, as we have done in \eqref{eq:dcheck}. 
In this way,  $P_0|_{(\partial X_0)_1}$
and $ P_1|_{(\partial X_1)_0}$ are identified, and hence $X_0$ and $X_1$ are glued.
This is possible in a smooth and unique way because of our technical assumption that a neighborhood of the boundary has the product structure.
Then we get a manifold $X_1 \cdot  X_0$ obtained by the gluing. In this way, we get a new bordism $X_1 \cdot X_0 : Y_0 \to Y_2$.
This is the composition of two bordisms.

Suppose that we are given an $H_d$-manifold with a boundary component $(\partial X)_c$ which is given the $H_{d-1}$-structure by using
the outward normal vector, and assume that there is an isomorphism $\varphi_c: (\partial X)_c \to Y$ using that $H_{d-1}$-structure. 
Then it can be seen as outgoing boundary.
However, we can also give $(\partial X)_c$ the $H_{d-1}$-structure by using the inward normal vector,
and regard it as ingoing. 
This is done as follows. Let $P$ and $P'$ be the $H_d$-bundles associated to $FX|_{(\partial X)_c}=F(N \oplus T(\partial X)_c )$ and $F(\underline{\BR} \oplus TY)$,
respectively.
We have an isomorphism between $P$ and $P'$
by trivializing $N \cong \underline{\BR}$ by using the outward normal vector $e_{\rm out}$. 
Now we restrict that isomorphism to the inverse images of 
$(e_{\rm in}, *) \in FX|_{(\partial X)_c}$ 
and $(-e_0, *) \in F(\underline{\BR} \oplus TY)$,
where $e_{\rm in}=-e_{\rm out}$ is the inward normal vector, and $e_0$ is the basis vector of $\underline{\BR}$.
This defines the isomorphism $\varphi'_c: (\partial X)_c \to \overline{Y}$ where now $(\partial X)_c$ is regarded as ingoing. 
Thus, $(\partial X)_c$ can be regarded as outgoing $Y$ or ingoing $\overline{Y}$ by choosing the outward or inward normal vectors, respectively.

In particular, by using the above discussions, we can define the following.
\begin{defi}\label{defi:evcoev}
Let $Y$ be a closed $H_{d-1}$-manifold, and let $I_Y:=[0,1] \times Y$ with the product $H_d$-structure. 
The identity bordism $1_Y$, the evaluation $e_Y$, and coevaluation $c_Y$ are defined as
\beq
1_Y &= ( I_Y, Y, Y), \label{eq:idB} \\
e_Y& = ( I_Y , Y \sqcup \overline{Y}, \varnothing), \label{eq:evB} \\
c_Y &= ( I_Y , \varnothing, \overline{Y} \sqcup Y). \label{eq:coevB}
\eeq
\end{defi}

Another operation we can do to a bordism $X : Y_0 \to Y_1$ is to take the opposite $\overline{X}$.
Then it can be seen as a bordism $\overline{X} : \overline{Y_0} \to \overline{Y_1}$.
This requires some careful consideration of definitions. 
To define $\overline{X}$, we first uplift the $H_d$-structure to the $H_{d+1}$-structure
$\varphi_{d+1}:P_{d+1} \to F(\underline{\BR} \oplus TX)$.
Then the bundle on $\overline{X}$ is defined as $\overline{P_d} = \varphi_{d+1}^{-1}(-e_0, *) $ for $ * \in FX $.

The restriction of $F(\underline{\BR} \oplus TX)$ to $(\partial X)_0$ is canonically isomorphic to $F( \underline{\BR} \oplus N \oplus T (\partial X)_0)$,
where $N$ is the normal bundle to $(\partial X)_0$ and we identify it with a copy of the trivial bundle $ \underline{\BR} $
by using the inward normal vector $e_{\rm in}$. Then the $H_{d-1}$-structure on the boundary $(\partial \overline{X})_0$ is described by the
bundle $\overline{P_{d-1}} := \varphi^{-1}_{d+1}( -e_0, e_{\rm in},*)$ for $* \in F (\partial X)_0 $. Then we want to define an isomorphism from $\overline{P_{d-1}} $ 
to the bundle on $\overline{Y_0}$. Notice that what we have is the isomorphism from the bundle 
${P}_{d-1} :=\varphi^{-1}_{d}( e_{\rm in}, *)$ to $Y_0$, and what we want is an isomorphism from 
$\overline{P_{d-1}} =\varphi^{-1}_{d+1}( -e_0, e_{\rm in}, *)$
to $\overline{Y_{0}}$.

For the purpose of defining the desired isomorphism, we pick up an element 
\beq
r \in H_{d+1}  \text{ such that }  \rho_{d+1}(r) = (-1) \oplus (-1) \oplus (1_{d-1}). 
\eeq
Namely, $\rho_{d+1}(r) $ flips the sign of the first two coordinates of 
$d+1$-dimensional vectors, while it acts trivially on other components.
Such an element can be specified by using the $\Spin(d+1)$ part of the group
$\rho^{-1}_{d+1}(\SO(d+1)) \cong (\Spin(d+1) \times K)/\langle (-1, k_0 ) \rangle$.
We can specify $r \in H_{d+1}$ if we require it to be either the $180^\circ$ rotation or the $- 180^\circ$ rotation on the 
two-dimensional plane spanned by the first two components of $(d+1)$-dimensional vectors. We denote the $180^\circ$ rotation by $r$.
This $r$ commutes with $H_{d-1}$ embedded in $H_{d+1}$.

By using $r$, we define a bundle automorphism of $P_{d+1}=P_d \times_{i_d} H_{d+1}$ (or gauge transformation in physics terminology) near the boundary.
Let $s_\alpha$ be local sections of $P_{d+1}$ 
on open cover $\{ U_\alpha \}$ of a neighborhood of $(\partial X)_0$,  such that 
\beq
\varphi_{d+1}(s_\alpha) = ( -e_0, e_{\rm in},*) . \label{eq:loca}
\eeq
Then let $p=s_\alpha  p_\alpha \in P_{d+1}$
where $ p_\alpha$ is the local coordinate of the fiber of the bundle on $U_\alpha$.
Then we define the automorphism of $P_{d+1}$ as 
\beq
\Phi : p=s_\alpha  p_\alpha \mapsto s_\alpha r p_\alpha. \label{eq:autm}
\eeq
This gives a well-defined bundle automorphism since the transition functions of the bundle on $(\partial X)_0$ is reduced to $H_{d-1}$,
and $r$ commutes with $H_{d-1} \subset H_{d+1}$.

By the isomorphism $\Phi$, the bundle $\overline{P_{d-1}} :=\varphi^{-1}_{d+1}( -e_0, e_{\rm in}, *)$ is mapped to 
the bundle $ \varphi^{-1}_{d+1}( e_0, -e_{\rm in}, *) $ as principal $H_{d-1}$-bundles. 
The bundle $ \varphi^{-1}_{d+1}( e_0, -e_{\rm in}, *)$  in turn can be canonically identified with $ \varphi^{-1}_{d}( -e_{\rm in}, *)$.
From this $ \varphi^{-1}_{d}( -e_{\rm in}, *)$, we can define the canonical isomorphism to $\overline{Y}$ which was already explained before.

We can apply the same consideration to the outgoing component of the boundary by using the outward normal vector.
Therefore, we can regard $\overline{X}$ as a bordism $\overline{X}: \overline{Y_0} \to \overline{Y_1}$.

The $-180^\circ$ rotation $r^{-1}$ can be different from $r$ in the case of spin manifolds or a generalization of spin manifolds
where $(-1)^F:=(-1,1) \sim (1,k_0)$ is nontrivial.
It is free to use either $r$ or $r^{-1}$, but we use the same one for both ingoing and outgoing boundaries.
Then the two choices are equivalent. The reason is that the difference is given by $(-1)^F$,
and the action of $(-1)^F$ can be extended as a bundle automorphism of $P_{d+1}$ not only near the boundary, but also
on the entire $X$ because $(-1)^F$ commutes with any element of $H_{d+1}$.
Thus the two choices of using $r$ or $r^{-1}$ are related by this bundle automorphism $(-1)^F$.
The definition of bordisms given in Definition~\ref{defi:bordism} says that we identify two bordisms if they are related by an isomorphism.
Therefore, they are equivalent. In this way, the new bordism $\overline{X}: \overline{Y_0} \to \overline{Y_1}$ is defined independent of the choice of $r$ or $r^{-1}$.

The reason that we use the same choice ($r$ or $r^{-1}$) for both ingoing and outgoing boundary components is to obtain the properties: $\overline{X_1 \cdot X_0} = \overline{X_1} \cdot \overline{X_0}$ for
the composition of two bordisms $X_0: Y_0 \to Y_1$ and $X_1: Y_1 \to Y_2$, and also $\overline{1_Y}=1_{\overline{Y}}$.
In summary, we have
\begin{lem}\label{lem:involution}
Taking the opposite $H_d$-structure gives a map 
from the class of bordisms to itself which map $X: Y_0 \to Y_1$ to $\overline{X}: \overline{Y_0} \to \overline{Y_1}$ such that $\overline{1_Y} = 1_{\overline{Y} }$
and $\overline{X_1 \cdot X_0} = \overline{X_1} \cdot \overline{X_0}$.
\end{lem}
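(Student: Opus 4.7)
The plan is to verify the two compatibility statements $\overline{1_Y}=1_{\overline{Y}}$ and $\overline{X_1\cdot X_0}=\overline{X_1}\cdot\overline{X_0}$ by directly unwinding the definitions using the machinery the excerpt has already set up: the lift of the $H_d$-structure on $X$ to an $H_{d+1}$-structure on $\underline{\BR}\oplus TX$, the sub-bundle $\varphi_{d+1}^{-1}(-e_0,*)$ that defines $\overline{P_d}$, and the boundary identification constructed via the bundle automorphism $\Phi$ in \eqref{eq:autm} using the element $r\in H_{d+1}$. The crucial convention is the one emphasized just above the lemma: the \emph{same} choice of $r$ (not $r^{-1}$) is used at ingoing and outgoing boundaries, so any potential $(-1)^F$ ambiguity is forced to cancel.

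For $\overline{1_Y}=1_{\overline{Y}}$, I would write the product $H_d$-structure on $I_Y=[0,1]\times Y$ in terms of its tangent bundle $T I_Y\cong \underline{\BR}_t\oplus TY$ (where $\underline{\BR}_t$ is generated by $\partial/\partial t$) and observe that the canonical $H_{d+1}$-lift on $\underline{\BR}\oplus T I_Y=\underline{\BR}\oplus\underline{\BR}_t\oplus TY$ is the one pulled back from the similar lift over $Y$. Taking $\varphi_{d+1}^{-1}(-e_0,*)$ commutes with the $\underline{\BR}_t$ factor, and the result is canonically isomorphic to the product $H_d$-structure on $I_{\overline{Y}}$. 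One then has to check that applying the boundary construction (using inward normal at $t=0$, outward at $t=1$) to this opposite structure recovers exactly $\overline{Y}$ at both ends, with identity identifications; using the same $r$ convention at both boundaries is what ensures the two ends carry the matching, rather than differing by $(-1)^F$.

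For $\overline{X_1\cdot X_0}=\overline{X_1}\cdot\overline{X_0}$, the key point is that the opposite construction is \emph{local}: $\overline{P_d}$ at a point depends only on the $H_d$-structure in an arbitrarily small neighborhood. Near the shared boundary $Y_1$, the cylindrical neighborhoods $U_\epsilon$ required by Definition \ref{defi:bordism} are patched by a smooth identification of principal $H_d$-bundles, and passing to $\underline{\BR}\oplus T(X_1\cdot X_0)$ and restricting to $\varphi_{d+1}^{-1}(-e_0,*)$ commutes with this patching. Thus the underlying bundle of $\overline{X_1\cdot X_0}$ is literally the gluing of the bundles on $\overline{X_1}$ and $\overline{X_0}$. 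The remaining check is that the identifications of the boundary at $Y_1$ (outgoing in $X_0$, ingoing in $X_1$) map under the opposite to compatible ingoing/outgoing identifications on $\overline{Y_1}$: this is exactly the statement, already argued before the lemma, that a boundary can be viewed as outgoing $Y$ or ingoing $\overline{Y}$ depending on the choice of outward versus inward normal, and the $r$-twist is inserted symmetrically at both sides.

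The main obstacle is purely bookkeeping of the $r$-insertion in \eqref{eq:autm}: one must follow an element $p\in P_{d-1}$ through three successive identifications (inward-normal boundary structure, $r$-twist to outward-normal, gluing isomorphism $\varphi_{1,0}^{-1}\circ\varphi_{0,1}$) and verify that doing this for the glued bordism $X_1\cdot X_0$ yields the same element as composing the analogous maps for $\overline{X_0}$ and $\overline{X_1}$ separately. This works because $r$ commutes with $H_{d-1}\subset H_{d+1}$, so the $r$-twists at the outgoing side of $\overline{X_0}$ and the ingoing side of $\overline{X_1}$ combine to either $r^2$ or $1$ depending on convention, and choosing the same $r$ on both sides makes the result land in the correct bundle without an extra $(-1)^F$. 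The identity case $\overline{1_Y}=1_{\overline{Y}}$ is in fact the simplest instance of this calculation and can be read off from it.
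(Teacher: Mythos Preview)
Your proposal is correct and follows the same approach as the paper: the paper does not give a separate formal proof of this lemma but treats it as a summary of the immediately preceding discussion, the key point being exactly the one you identify --- using the \emph{same} $r$ (rather than $r$ at one end and $r^{-1}$ at the other) for both ingoing and outgoing boundary components is what forces $\overline{1_Y}=1_{\overline{Y}}$ and $\overline{X_1\cdot X_0}=\overline{X_1}\cdot\overline{X_0}$. Your expansion of the bookkeeping (locality of the opposite construction, tracking the $r$-twist through the gluing identification, the role of $r$ commuting with $H_{d-1}$) is a faithful elaboration of what the paper leaves implicit.
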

In the next section, we will consider the bordism category. Then, this lemma means that taking the opposite $H_d$-structure is 
a functor from the bordism category to itself.
\begin{defi}\label{defi:double}
The double $\Delta X$ of a bordism $X: \varnothing \to Y$ is defined as $\Delta X = e_Y \cdot (X \sqcup \overline{X})$.
\end{defi}
\begin{rem}\label{rem:transp}
Given a bordism $X: Y_0 \to Y_1$, we may instead regard it as a bordism $\overline{Y_1} \to \overline{Y_0}$ by changing the interpretation of ingoing and outgoing boundary
components.
Let us denote that bordism as $^t X$. Then $\overline{^t X}$ is a bordism $Y_1 \to Y_0$. In particular, if $X$ is a bordism $X : \varnothing \to Y$,
the double in Definition~\ref{defi:double} is also given as $\Delta X = \overline{^t X} \cdot X$.
\end{rem}
Now we can also define the bordism group $\Omega^d_H$ which appears in the main classification theorem.
Let us consider an equivalence relation $X \sim X'$ between closed $H_d$-manifolds if there exists a bordism $C: X \to X'$.
From the above discussions, this equivalence relation is symmetric and transitive, while reflectivity is obvious. 
\begin{defi}\label{defi:bor}
The $d$-dimensional bordism group $\Omega^d_H$ with $H$-structure is, as a set, given by
the set of equivalence classes of closed $H_d$-manifolds $X$ under the equivalence relation $X \sim X'$ if there exists a 
$d+1$-dimensional bordism $C$ from $X$ to $X'$.
The abelian group structure is given by taking the disjoint union $\sqcup$ as the multiplication, the empty manifold $\varnothing$ as the unit element,
and the opposite $H_d$-manifold $\overline{X}$ as the inverse of $X$.
\end{defi}

\subsection{Subtle spin structure}\label{sec:subtle}
In addition to $\overline{1_Y}=1_{\overline{Y}}$, 
we would like to study the opposites of the evaluation and coevaluation defined in Definition~\ref{defi:evcoev}.
Namely, we are going to compare $\overline{c_Y}$ and $c_{\overline{Y}}$, and similarly for $e_Y$.
However, the spin structure is extremely subtle,
so let us study it carefully.
In the following, we discuss the case of ${c_Y}$, but ${e_Y}$ is completely parallel.

As an $H_d$-manifold, $c_Y$ is given by $I_Y:=[0,1] \times Y$.
First, let us neglect the boundary and establish $\overline{I_Y}=I_{\overline{Y}}$ in the interior. 
Let $P_d =P_{d-1} \times_{i_{d-1}} H_d$ be the $H_d$-bundle on $I_Y$
where $P_{d-1}$ is the $H_{d-1}$-bundle on $Y$. As usual, it is equipped with the map $\varphi_d: P_d \to F( \underline{\BR} \oplus TY)$ where $\underline{\BR}$
is the tangent bundle to $[0,1]$.
We denote the unit vector on $[0,1]$ as $e_0$.
To define the opposite $H_d$-structure $\overline{I_Y}$, we take 
\beq
P_{d+1}=P_d \times_{i_d} H_{d+1} = P_{d-1} \times_{i_{d-1}} H_d \times_{i_d} H_{d+1}. \label{eq:tripleproduct}
\eeq
with $\varphi_{d+1} : P_{d+1} \to F(\underline{\BR}' \times \underline{\BR} \times Y)$ as usual, where $\underline{\BR}'$ is introduced to define the opposite 
$H_d$-structure. We denote the unit vector of $\underline{\BR}' $ as $e'_0$.

We can represent the inverse image $\varphi^{-1}_{d+1}(-e'_0, e_0, *)$ by elements of the form
\beq
(p_{d-1}, h_d, h_{d+1} ) \in P_{d+1}=P_{d-1} \times_{i_{d-1}} H_d \times_{i_d} H_{d+1} \label{eq:represent}
\eeq
such that 
\beq
\rho_{d+1}(h_{d+1}) \in (-1) \oplus (-1) \oplus \O(d-1) , \qquad
\rho_d(h_d) \in (-1) \oplus \O(d-1)
\eeq
in the obvious notation.
Then, by acting $r$, we get $h_{d+1}r = r h_{d+1}$ which now satisfies $\rho_{d+1}(h_{d+1}r) \in (+1) \oplus (+1) \oplus \O(d)$ and hence it can be represented as
an image of some element $h'_d \in H_d$ as $h_{d+1}r = i_d( h'_d )$. Therefore, 
\beq
(p_{d-1}, h_d, h_{d+1}r  ) \sim (p_{d-1}, h_d h'_d)
\eeq
with $\rho_d( h_d h'_d) \in (-1) \oplus \O(d-1)$. Such elements precisely represent the opposite $\overline{Y}$.

For more general elements of $\varphi^{-1}_{d+1}(-e'_0,*)$ where $* \in F I_Y$,
we represent them as $(p_{d-1}, h_d, h_{d+1})$ such that
\beq
\rho_{d+1}(h_{d+1}) \in (-1)  \oplus \O(d),  \qquad   \rho_d(h_d) \in (-1) \oplus \O(d-1). \label{eq:barY}
\eeq 
In this case, $r$ and $h_{d+1}$ do not commute. We act $r$ as a gauge transformation, meaning that it acts from the left (instead of right) as $r h_{d+1}$.
Then, $rh_{d+1} = i_d(h'_d)$ for some $h'_d \in H_d$ and we can represent the elements of $\varphi^{-1}_{d+1}(-e'_0,*)$ 
by the triple $(p_{d-1}, h_d, h'_d) \in P_{d-1} \times_{i_{d-1}} H_d \times_{H_{d-1}} H_d$ such that $\rho_d(h_d) \in (-1) \oplus \O(d-1)$ and $h'_d$
is arbitrary. This is precisely the form of the $H_d$-bundle on $I_{\overline{Y}}$.
This establishes $\overline{I_Y}=I_{\overline{Y}}$ if we neglect the boundary.

Now we are going to study the boundary.
Let $t \in [0,1]$ be the standard coordinate of the interval.
The $c_Y: \varnothing \to  \overline{Y} \sqcup Y$ has the two boundary components: $t=1$ corresponding to $Y$
and $t=0$ corresponding to $\overline{Y}$. At $t=1$, it is straightforward that the above identification $\overline{I_Y}=I_{\overline{Y}}$
holds including the boundary isomorphism. In fact, we have chosen the action of $r$ so that 
this identification holds at $t=1$.

Let us consider the boundary component at $ t =0$. The boundary component here is isomorphic to $\overline{\overline{Y}} \cong Y$.
The outward normal vector at this boundary is given by $e^0_{\rm out} = - e_0$.
The inverse image $\varphi^{-1}_{d+1}(-e'_0, -e_0, *)$ is represented
by elements of the form
$(p_{d-1}, h_d, h_{d+1}) \in P_{d-1} \times_{i_{d-1}} H_d \times_{i_d} H_{d+1}$
such that 
\beq
\rho_{d+1}(h_{d+1}) \in (-1) \oplus (+1) \oplus \O(d-1), \qquad \rho_d(h_d) \in (-1) \oplus \O(d-1).
\eeq
We act $r$ on $h_{d+1}$, but here comes the subtlety. 
In the description of $c_{\overline{Y}}$, we act $r$ from the left as $r h_{d+1}$ as discussed in the paragraph containing \eqref{eq:barY}.
However, to define $\overline{c_{Y}}$, we must follow the prescription discussed around \eqref{eq:autm}.
We take a local section $s=(p_{d-1}, h_d, h_{d+1}) $ such that $\varphi_{d+1}(p_{d-1}, h_d, h_{d+1}) = ( -e'_0, -e_0,*)$,
and then act $r$ from the right as $sr$. This means that we use $h_{d+1}r$.
The fact that $\rho_{d+1}(h_{d+1}) \in (-1) \oplus (+1) \oplus \O(d-1)$ implies that 
\beq
r h_{d+1} =h_{d+1}r (-1)^F
\eeq
 due to the properties of the spin group.
Thus they differ by $(-1)^F$.

The above fact can also be understood as follows. In the definition of $\overline{c_{Y}} : \varnothing \to \overline{\overline{Y}} \sqcup \overline{Y}$ from 
${c_{Y}} : \varnothing \to {\overline{Y}} \sqcup {Y}$, we need to use $r$ in the following way.
At $t=1$, we need to consider the $180^\circ$ rotation $r$ of the plane spanned by $(-e'_0, e^1_{\rm out} )$
where $e^1_{\rm out}=e_0$ is the outward normal vector at $t=1$. On the other hand, at $t=0$, 
we need to consider the $180^\circ$ rotation $r$ of the plane spanned by $(-e'_0, e^0_{\rm out} )$
where $e^0_{\rm out}=-e_0$ is the outward normal vector at $t=1$. These two rotations differ by a $360^\circ$ rotation due to the sign change in $\pm e_0$.
This is the reason that we get $(-1)^F$ above.

Therefore, we conclude that $\overline{c_{Y}}$ and $c_{\overline{Y}}$ differ by the additional boundary isomorphism $(-1)^F$ at one of the boundary components.
In the above discussion, the $(-1)^F$ acted on the component $t=0$. But it is not relevant which boundary component is acted by $(-1)^F$
because they can be exchanged by the overall action of $(-1)^F$ on the entire manifold $I_Y$.

Let us further compare $\overline{c_Y}$ and $c_Y$. We have seen that there is an additional $(-1)^F$ in $\overline{c_Y}$.
However, in the isomorphism $\overline{\overline{Y}} \cong Y$ given in \eqref{eq:barbarmap}, we have included the factor $(-1)^F$.
These two $(-1)^F$ cancels with each other when we regard $\overline{c_Y}$ as a bordism $\varnothing \to Y \sqcup \overline{Y}$.
More precisely, from the triplet $(p_{d-1}, h_d, h'_d)$ where $i_d(h'_d)=rh_{d+1}$, we further map
\beq
(p_{d-1}, h_d, h'_d) \mapsto (p_{-1}, h_d h'_d) \in P_{d-1} \times_{i_{d-1}} H_d
\eeq
with the projection defined as
\beq
\varphi'_d: (p_{d-1}, h_d h'_d) \mapsto (-e_0, \varphi_{d-1}(p_{d-1}))\rho_d(h_d h'_d).
\eeq
Combining this with the differomorphism $[0,1] \ni t \mapsto (1-t) \in [0,1]$, one can see that
we actually have $\overline{c_Y} = \tau_{\overline{Y},Y} \cdot c_Y  $ as a bordism $\varnothing \to Y \sqcup \overline{Y}$.
Here, $\tau_{Y, Y'}$ is the bordism defined as follows. As an $H_d$-manifold, it is just $([0,1] \times Y) \sqcup ([0,1] \times Y')$.
However, it is regarded as a bordism $Y \sqcup Y' \to Y' \sqcup Y$, i.e. it exchanges $Y$ and $Y'$. See Figure~\ref{fig:exchange}.
\begin{figure}
\centering
\includegraphics[width=.5\textwidth]{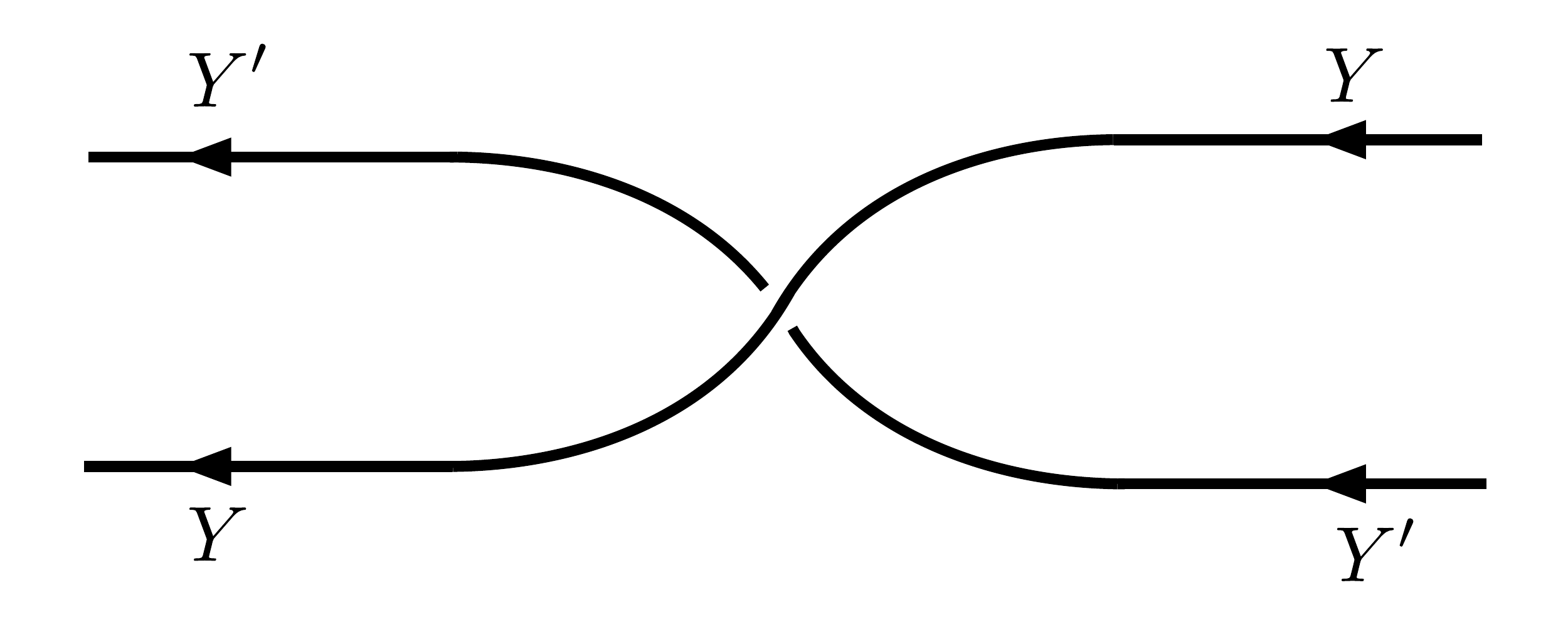}
\caption{The bordism $\tau_{Y,Y'} $ from $Y \sqcup Y'$ to $Y' \sqcup Y$. 
As an $H_d$-manifold, it is just $([0,1] \times Y) \sqcup ([0,1] \times Y') $.  \label{fig:exchange}}
\end{figure}
In summary,
\begin{lem}\label{lem:anti}
There are relations
\beq
( (-1)^F \sqcup 1_{\overline{Y}} ) \cdot c_{\overline{Y}}   = \overline{c_{Y}}  =  \tau_{\overline{Y},Y} \cdot c_Y, \\
e_{\overline{Y}} \cdot (1_{\overline{Y}} \sqcup (-1)^F) = \overline{e_{Y}}  =   e_Y \cdot \tau_{\overline{Y},Y} 
\eeq
where $(-1)^F$ acts on $\overline{\overline{Y}} \cong Y$ and it means that the boundary isomorphism contains $(-1)^F$.
\end{lem}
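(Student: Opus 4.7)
The plan is to prove the coevaluation line first and then obtain the evaluation line by the same reasoning with ingoing and outgoing boundary roles swapped, as in Remark~\ref{rem:transp}. Most of the calculation for $c_Y$ has in fact already been carried out in the paragraphs preceding the lemma statement; the proposal is mainly to assemble those pieces into the two displayed equalities.

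For the first equality $(( -1)^F \sqcup 1_{\overline{Y}}) \cdot c_{\overline{Y}} = \overline{c_Y}$, I would start from the bundle $P_{d+1} = P_{d-1} \times_{i_{d-1}} H_d \times_{i_d} H_{d+1}$ on $I_Y$ from \eqref{eq:tripleproduct}, with $e'_0$ encoding the uplift used to form the opposite $H_d$-structure and $e_0 = \partial_t$ the unit vector of $[0,1]$. On each boundary, the two descriptions agree up to how the rotation element $r$ is inserted: the definition of $\overline{c_Y}$ via \eqref{eq:autm} demands the right action $h_{d+1}\mapsto h_{d+1}r$, whereas the description of $c_{\overline{Y}}$ uses $h_{d+1}\mapsto r h_{d+1}$. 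At $t=1$ the outward normal is $+e_0$ and these actions give canonically isomorphic bundle maps. At $t=0$ the outward normal flips to $-e_0$; since elements of $\varphi_{d+1}^{-1}(-e'_0, -e_0, \ast)$ satisfy $\rho_{d+1}(h_{d+1})\in(-1)\oplus(+1)\oplus\O(d-1)$, the defining property of the spin double cover forces $r\, h_{d+1} = h_{d+1}\, r\, (-1)^F$. This discrepancy is precisely an insertion of $(-1)^F$ at the $\overline{\overline{Y}}$ boundary, yielding the first identity.

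For the second equality $\overline{c_Y} = \tau_{\overline{Y}, Y}\cdot c_Y$, I would compose with the diffeomorphism $t\mapsto 1-t$ of $[0,1]$, which swaps the two boundary components and, under the canonical identification $\overline{\overline{Y}}\cong Y$ from \eqref{eq:barbarmap}, reidentifies $\overline{\overline{Y}}$ with $Y$. Crucially, the map $\xi$ of \eqref{eq:barbarmap} was defined to include a factor $(-1)^F$, and this $(-1)^F$ exactly cancels the one produced in the previous step. Reading off the resulting bordism $\varnothing\to Y\sqcup\overline{Y}$, the swap of the two intervals is by definition the braiding $\tau_{\overline{Y},Y}$ of Figure~\ref{fig:exchange}, which gives the identity. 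The evaluation line follows line-by-line from the coevaluation line: by Remark~\ref{rem:transp}, $e_Y$ is $^t c_Y$ (up to opposite), and the entire argument above is symmetric under the exchange $t\leftrightarrow 1-t$ of ingoing and outgoing ends, so rerunning it verbatim with normals flipped produces the displayed identities for $\overline{e_Y}$.

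The main obstacle is the careful bookkeeping of $(-1)^F$ insertions. Two distinct $(-1)^F$ factors appear: one from the fact that the outward normals at $t=0$ and $t=1$ point in opposite directions in $\underline{\BR}$, so that $r$ and $r^{-1}$ differ by $(-1)^F$ in $\Spin$; and one deliberately inserted into the canonical isomorphism $\overline{\overline{Y}}\cong Y$ in \eqref{eq:barbarmap}. The content of the lemma is precisely that the first of these survives in the comparison with $c_{\overline{Y}}$ (no $\overline{\overline{Y}}$-identification is used there), while in the comparison with $c_Y$ the two factors cancel, leaving only the braiding $\tau_{\overline{Y},Y}$. Verifying that these cancellations occur consistently at both boundary components, independently of whether $r$ or $r^{-1}$ is chosen, is the delicate step; once that is checked, everything else is an unwinding of definitions.
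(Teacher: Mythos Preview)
Your proposal is correct and follows essentially the same approach as the paper: the paper itself carries out the computation in the paragraphs of Sec.~\ref{sec:subtle} preceding the lemma, comparing the left action $rh_{d+1}$ (for $c_{\overline{Y}}$) with the right action $h_{d+1}r$ (for $\overline{c_Y}$) prescribed by \eqref{eq:autm}, obtaining $rh_{d+1}=h_{d+1}r(-1)^F$ at $t=0$, and then cancelling this against the $(-1)^F$ built into \eqref{eq:barbarmap} together with $t\mapsto 1-t$ to get the $\tau_{\overline{Y},Y}$ form. The paper likewise treats $e_Y$ as ``completely parallel'' to $c_Y$, matching your last paragraph.
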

\begin{rem}\label{rem:anti}
We have defined the double in Definition~\ref{defi:double}. 
Let us consider the double of $c_Y$, 
\beq
\Delta c_Y = e_{ \overline{Y} \sqcup Y} \cdot (c_Y \sqcup \overline{c_Y}).
\eeq
This is topologically $S^1 \times Y$.
In fact, the $S^1$ has the anti-periodic spin structure which we denote as $S^1_{\rm A}$. 
The reason is as follows. Up to permutation of the boundary components by $\tau_{Y,Y'}$, 
$e_{Y \sqcup \overline{Y} }$ is the disjoin union of $e_Y$ and $e_{\overline{Y}}$. Then, 
up to permutation of the boundary components, $e_{\overline{Y}}$ has the extra factor of $(-1)^F$ compared to $e_Y$ as shown in Lemma~\ref{lem:anti} above.
Therefore, the $S^1$ has the anti-periodic spin structure due to the insertion of $(-1)^F$.
\end{rem}

\subsection{Axioms of TQFT}\label{sec:Atiyah}

Here we introduce the axioms of TQFT.
The axioms are described by Atiyah~\cite{Atiyah:1989vu}, 
and they can be explained in the language of category (see e.g. \cite{Freed:2016rqq,Freed:2012hx}).
See \cite{Baez} for categorical notions needed for TQFT which we will reproduce in appendix~\ref{sec:app}. For completeness, we recall a few elementary definitions.

Let us recall the basic terminology from category theory. 
A category $\sC$ consists of 
(1) a class of objects ${\rm obj}(\sC)$, (2) a set of morphisms ${\rm Hom}(A,B)$ for each pair of objects $A, B \in {\rm obj}(\sC)$,
and (3) composition ${\rm Hom}(A,B) \times {\rm Hom}(B,C) \to {\rm Hom}(A,C)$ denoted as $(f, g) \to g \circ f$.
They satisfy the following axioms; 
(i) composition is associative, i.e., for $f \in {\rm Hom}(A,B)$, $g \in {\rm Hom}(B,C)$ and $h \in {\rm Hom}(C,D)$, we have
$(h \circ g) \circ f = h \circ (g \circ f)$, (ii) for each $A$, there exists an element $1_A \in {\rm Hom}(A,A)$ called the identify morphism,
with the properties that $f \circ 1_A = 1_B \circ f = f$ for any $f \in {\rm Hom}(A,B)$.
A morphism $f \in {\rm Hom}(A,B)$ is also written as $f: A \to B$.

A functor $F: \sC \to \sD$ from a category $\sC$ to a category $\sD$ is a function as follows.
For each object $A \in {\rm obj}(\sC)$ it gives an object $F(A) \in {\rm obj}(\sD)$. For each morphism $f \in {\rm Hom}(A,B)$
it gives a morphism $F(f) \in {\rm Hom}(F(A),F(B))$ such that it preserves composition, $F( g \circ f) = F(g) \circ F(f)$.
Also, for $1_A$ it gives $F(1_A)=1_{F(A)}$.

A natural transformation $\eta$ between two functors $F: \sC \to \sD$ and $G: \sC \to \sD$ consists of
a family of morphisms $\eta_A \in {\rm Hom}(F(A), G(A))$ parametrized by objects $A \in {\rm obj}(\sC)$ such that
for each morphism $f : A \to B$ the diagram
\beq
\xymatrix{
F(A) \ar[d]_{F(f)}  \ar[r]^{\eta_A} & G(A)  \ar[d]^{G(f)} \\
F(B) \ar[r]_{\eta_B} & G(B)
}
\eeq
commutes, i.e., $G(f) \circ \eta_A = \eta_B \circ F(f)$. A natural isomorphism is a natural transformation such that 
there exists an inverse $\eta^{-1}$ i.e., $ \eta^{-1}_A \circ \eta_A = 1_{F(A)}$ and $\eta_A \circ \eta^{-1}_A = 1_{G(A)}$.
If there exists a natural isomorphism between two functors $F$ and $G$, they are called isomorphic.

A category can have additional structures.
One of them is the symmetric monoidal structure as follows.
Instead of writing the full definition (which can be found in appendix~\ref{sec:app}), we only describe the key points for our purposes.
For each pair of objects $A,B$ we have a product $A \otimes B \in {\rm obj}(\sC)$, and 
for each pair of morphisms $f \in {\rm Hom}(A,C)$ and $g \in {\rm Hom}(B,D)$ we have $f \otimes g \in {\rm Hom}(A \otimes B, C \otimes D)$.
We can regard this operation $\otimes$ as a functor from the product category $\sC \times \sC$ 
(with the obvious definition) to $\sC$,  $\otimes : \sC \times \sC \to \sC$. We can also define another functor 
such that $(A,B) \to B \otimes A$ and $(f,g) \to g \otimes f$. Then, symmetric monoidal structure requires, among other things,
that there exists a natural isomorphism  $\tau_{A,B}: A \otimes B \to B  \otimes A$ such that $\tau_{B , A} \circ \tau_{A,B} = 1_{A \otimes B}$.
It is also required that there is a distinguished object $1_\sC \in {\rm obj}(\sC)$ called the unit object. 

Roughly speaking, $\otimes$ is a kind of abelian semi-group multiplication, although it is not strictly a semi-group.
The usual equalities of an abelian semi-group such as $A \otimes B = B \otimes A$, $(A \otimes B) \otimes C= A \otimes (B \otimes C)$,
and $1 \otimes A = A \otimes 1=A$
are replaced by natural isomorphisms; we have the natural isomorphism $\tau_{A,B}: A \otimes B \to B  \otimes A$ as well as
$(A \otimes B) \otimes C \to A \otimes (B \otimes C)$, $1 \otimes A \to A$, and $A \otimes 1 \to A$, with several conditions (given in appendix~\ref{sec:app}). 
However, in our applications, the natural isomorphisms of $1 \otimes A \to A$, $A \otimes 1 \to A$, and
$(A \otimes B) \otimes C \to A \otimes (B \otimes C)$
are canonical and hence they are straightforward.
The only point which we need to be careful about is the $\tau_{A,B}$ as will be discussed later.

For TQFT, we need two symmetric monoidal categories.
One of them is the category of complex vector spaces ${\rm Vect}_\BC$ (or the category of super vector spaces 
${\rm sVect}_\BC$ as we discuss later).
The objects of ${\rm Vect}_\BC$ are complex vector spaces $V$, the set of morphisms ${\rm Hom}(V,W)$ is the space of linear maps from $V$ to $W$,
and composition is just the composition of linear maps. The $1_V \in {\rm Hom}(V,V)$ is the identity map.
The symmetric monoidal structure is given by the tensor product 
of two vector spaces $V \otimes W$. 
We can identify $\BC \otimes V \cong V \otimes \BC \cong V$, and $(V \otimes W) \otimes U \cong V \otimes (W \otimes U)$, 
and we can have maps $\tau'_{V,W}: V \otimes W \xrightarrow{\sim} W \otimes V$.
Thus this is a symmetric monoidal category with the unit object given by $\BC$.

The other category is the bordism category ${\rm Bord}_{\langle d-1, d \rangle}(H)$. The objects are closed $H_{d}$-manifolds $Y$,
morphisms in ${\rm Hom}(Y_0, Y_1)$ are bordisms $(X , Y_0 , Y_1)$, and composition is defined as $(X_1, Y_1, Y_2) \circ (X_0,Y_0,Y_1) =
(X_1 \cdot X_0, Y_0, Y_2)$ by gluing $X_0$ and $X_1$ at $Y_1$ as mentioned in the previous subsection. 
The identity morphism $1_Y \in {\rm Hom}(Y,Y)$ is the bordism given by $[0, \epsilon] \times Y$ with the product $H_d$-structure induced from $Y$.
Due to homotopy invariance, the value of $\epsilon$ does not matter.
The symmetric monoidal structure is given by the disjoint union of manifolds $Y \sqcup Y' $ and $X \sqcup X'$, so we
replace $\otimes$ by $\sqcup $ in this case.
The unit object is given by the empty manifold $\varnothing$.
The natural isomorphism $\tau_{Y,Y'} : Y \sqcup Y' \to Y' \sqcup Y $ is given by the bordism depicted in Figure~\ref{fig:exchange}, which 
is $([0,1] \times Y) \sqcup ([0,1] \times Y' )$ as an $H_d$-manifold, but regarded as a bordism from $Y \sqcup Y'$ to $Y' \sqcup Y$.

If the categories $\sC$ and $\sD$ are symmetric monoidal categories,
we can require a functor $F : \sC \to \sD$ to preserve the symmetric monoidal structure. 
For our purposes this means that we can identify $F(1_{\sC}) \cong 1_{\sD}$, and we also have a natural isomorphism
\beq
F( A \otimes B) \xrightarrow{\sim} F(A) \otimes F(B)~~~\text{under which}~~~F(f \otimes g) = F(f) \otimes F(g) \label{eq:productID}
\eeq
where $\xrightarrow{\sim}$ means the natural isomorphism.
This identification is required to satisfy the conditions that
\beq
&F( (A\otimes B) \otimes C) \to F(A\otimes B) \otimes F(C) \to (F(A)\otimes F(B)) \otimes F(C) \nonumber \\
&F( A\otimes (B \otimes C) ) \to F(A) \otimes F( B \otimes C) \to F(A)\otimes (F(B) \otimes F(C) ) \label{eq:SMCcondition1}
\eeq
gives the same result under $(A \otimes B ) \otimes C \cong A \otimes (B \otimes C)$ etc. If $F$ satisfies these conditions (as well as 
other more simple conditions involving the unit objects), it is called a monoidal functor. If we also require that
\beq
F(\tau^{\sC}_{A,B}) = \tau^{\sD}_{F(A),F(B)}~~\text{under the identification}~~ F( A \otimes B) \xrightarrow{\sim} F(A) \otimes F(B), \label{eq:SMCcondition2}
\eeq
then it is called a symmetric monoidal functor.
See appendix~\ref{sec:app} for more details. 

\begin{rem}
As mentioned above, we use just the standard canonical monoidal structure in the category of vector spaces, meaning that
we just identify $\BC \otimes V \cong V \otimes \BC \cong V$ under which $1 \otimes v = v \otimes 1 = v$, and
identify $(V \otimes W) \otimes U \cong V \otimes (W \otimes U)$ under which $( v \otimes w) \otimes u = v \otimes (w \otimes u)$
where $v, w, u$ are vectors in $V,W, U$ respectively. So we do not bother to distinguish them, and just write e.g.,
$V \otimes W \otimes U$ and $v \otimes w \otimes u$, and so on.
\end{rem}
\begin{rem}\label{rem:svect}
However, the symmetric structure 
$\tau'_{V,W}: V \otimes W \xrightarrow{\sim} W \otimes V$ requires a bit care due to the fact that state vectors in the physical Hilbert spaces can be bosonic or fermionic,
and fermionic state vectors anti-commute with each other. Namely, we can have $v \otimes w \mapsto \pm w \otimes v$.
This sign is actually determined by the topological spin-statistics theorem (Sec.~11 of \cite{Freed:2016rqq}) as we will briefly review later.
Then it is more appropriate to consider super vector spaces (i.e., $\BZ_2$-graded vector spaces) 
$V= V^0 \oplus V^1$, where $V^0$ and $V^1$ contain ``bosonic" and ``fermionic" states of $V$.
We denote the category of super vector spaces as ${\rm sVect}_\BC$.
Let ${\rm deg}(v)=0,1$ when $v \in V^0$ and $v \in V^1$, respectively.
Then $\tau'_{V,W}$ in ${\rm sVect}_\BC$ is defined as 
\beq
v \otimes w \mapsto (-1)^{{\rm deg}(v) {\rm deg}(w)} w \otimes v.
\eeq
 We assume this symmetric monoidal structure of ${\rm sVect}_\BC$ in the following.
 In particular, for invertible TQFT, every Hilbert space is one-dimensional, and hence $\tau'_{V,W} = \pm 1$
 depending on whether both of $V$ and $W$ are fermionic or not.
 We also assume that morphisms in ${\rm sVect}_\BC$ preserve the degrees.
\end{rem}

\begin{rem}
Also in the bordism category,
we do not bother to distinguish $\varnothing \sqcup Y$ and $Y \sqcup \varnothing$ from $Y$,
and identify them all. Also, we identify
$(Y \sqcup Y') \sqcup Y''$ and $Y \sqcup (Y' \sqcup Y'')$ and regard them as $Y \sqcup Y' \sqcup Y''$.
However, we distinguish $Y \sqcup Y'$ from $Y' \sqcup Y$, and consider the natural isomorphism $\tau_{Y,Y'}$ of Figure~\ref{fig:exchange} explicitly.
\end{rem}

Now we give the axioms of TQFT. First we do not incorporate unitarity.
\begin{defi}
A topological quantum field theory (TQFT) with the symmetry group $H_d$ 
is a symmetric monoidal functor $F$ from the bordism category ${\rm Bord}_{\langle d-1, d \rangle}(H)$
to the category of super vector spaces ${\rm sVect}_\BC$.
\end{defi}
 Let us spell out the functor more explicitly and introduce some notations. For each closed $H_{d-1}$-manifold $Y$, we assign a vector space $F(Y)$.
 We denote this vector space as 
 \beq
 \CH(Y):=F(Y)
 \eeq
 and call it the Hilbert space on $Y$. 
 For each bordism $X$ from $Y_0$ to $Y_1$, we 
 assign a linear map $F(X)$ from $\CH(Y_0)$ to $\CH(Y_1)$. We also denote it as 
 \beq
 Z(X):=F(X).
 \eeq
The composition of two bordisms $X_0$ and $X_1$ gives $Z(X_1 \cdot X_0)= Z(X_1) Z(X_0)$ where 
the right hand side is the composition as linear maps. For the bordism $1_Y=[0,1] \times Y$ from $Y$ to $Y$, we have 
$Z(1_Y) = 1_{\CH(Y)}$.
For a disjoint union $Y_0 \sqcup Y_1$, we have $\CH(Y_0 \sqcup Y_1) \xrightarrow{\sim} \CH(Y_0) \otimes \CH(Y_1)$,
and for $X_0 \sqcup X_1$ we have $Z(X_0 \sqcup X_1) = Z(X_0) \otimes Z(X_1)$ under the above identification of 
$\CH(Y_0 \sqcup Y_1)$ and $ \CH(Y_0) \otimes \CH(Y_1)$.
For the empty manifold we have $\CH(\varnothing) \cong \BC$, and hence
if $X$ is a closed $H_d$-manifold, i.e., a bordism from $\varnothing $ to $\varnothing$, we get $Z(X) \in \BC$. This is called the partition function on $X$.

The $F(\tau_{Y,Y'})$, where $\tau_{Y,Y'}$ is the bordism $([0,1] \times Y \sqcup [0,1] \times Y' , Y \sqcup Y', Y' \sqcup Y)$ of Figure~\ref{fig:exchange},
is identified as the linear map $\tau'_{\CH(Y),\CH(Y') } : \CH(Y) \otimes \CH(Y') \xrightarrow{~} \CH(Y') \otimes \CH(Y)$ 
of the super vector spaces under the isomorphism $\CH(Y) \otimes \CH(Y') \xrightarrow{\sim} \CH(Y \sqcup Y')$.
This will need a care due to bosonic/fermionic statistics.

Next, we are going to define unitarity.
In the categories ${\rm Bord}_{\langle d-1, d \rangle}(H)$ and  ${\rm sVect}_\BC$,
we can define functors 
\beq
\beta_{\rm B}:~& {\rm Bord}_{\langle d-1, d \rangle}(H) \to {\rm Bord}_{\langle d-1, d \rangle}(H),  \nonumber \\
\beta_{\rm V} :~& {\rm sVect}_\BC \to {\rm sVect}_\BC.
\eeq
These are involution (see Definition~\ref{defi:involution} for a general definition of involution) defined as follows.
Given an $H_d$-manifold $X$, we have a manifold with opposite $H_d$-structure denoted as $\overline{X}$.
If $X: Y_0 \to Y_1$ is a bordism, we have $\overline{X}: \overline{Y_0} \to \overline{Y_1}$ as shown in Lemma~\ref{lem:involution}.
So we define $\beta_{\rm B}$ as $\beta_{\rm B}(Y) = \overline{Y}$ and $\beta_{\rm B}(X) = \overline{X}$ which is a functor by Lemma~\ref{lem:involution}.
On the other hand, in the category of vector spaces  ${\rm sVect}_\BC$, 
we can define the complex conjugate vector space $\overline{V}$ of
any vector space $V$. The complex conjugate of a map $f \in {\rm Hom}(V,W)$ is a map $\overline{f} \in {\rm Hom}(\overline{V},\overline{W})$.
Thus we define $\beta_{\rm V}(V) = \overline{V}$ and $\beta_{\rm V}(f) = \overline{f}$, which is clearly a functor.
We have isomorphisms $\overline{ Y \sqcup Y'} \xrightarrow{\sim} \overline{Y} \sqcup \overline{Y'}$
and $\overline{V \otimes V'} \xrightarrow{\sim} \overline{V} \otimes \overline{V'}$ under which $\overline{\tau_{Y,Y'}}=\tau_{\overline{Y},\overline{Y'}}$
and $\overline{\tau'_{V,V'}} = \tau'_{\overline{V} , \overline{V'} }$.
Thus $\beta_{\rm B}$ and $\beta_{\rm V}$ are symmetric monoidal functors. 
\begin{rem}\label{rem:involutionsign}
In the category of super vector spaces ${\rm sVect}_\BC$, the isomorphism  $\overline{V \otimes V'} \xrightarrow{\sim} \overline{V} \otimes \overline{V'}$ 
is subtle. We require that it is given as 
\beq
\overline{v \otimes w} \mapsto {(-1)^{{\rm deg}(v) {\rm deg}(w)}} \overline{v} \otimes \overline{w}.
\eeq
A physical motivation is that if we have two grassmannian numbers $\eta$ and $\eta'$, then the complex conjugate of their product
is, according to the standard physics rule, given as $\overline{\eta \cdot \eta'} = \overline{\eta'} \cdot \overline{\eta}=- \overline{\eta} \cdot \overline{\eta'}$.
A more precise argument of why it is necessary in order for unitary theories with nontrivial $(-1)^F$ to exist will be discussed at the end of Sec.~\ref{sec:property}.
\end{rem}

We can impose a TQFT functor $F: {\rm Bord}_{\langle d-1, d \rangle}(H) \to  {\rm sVect}_\BC$ to satisfy the following condition.
We have two functors $F \beta_{\rm B}$ and $\beta_{\rm V} F$, both of which are symmetric monoidal functors from $ {\rm Bord}_{\langle d-1, d \rangle}(H)$
to $ {\rm sVect}_\BC$. Then we want to identify them. More precisely, we require that there exists a natural isomorphism
between these functors which preserves symmetric monoidal structure (i.e., symmetric monoidal natural isomorphism; see appendix~\ref{sec:app}).
This essentially means that we can identify 
\beq
F(\overline{Y}) \xrightarrow{\sim} \overline{F(Y)}~~~\text{under which}~~~F(\overline{X}) = \overline{F(X)} \label{eq:invo}
\eeq
such that the diagram
\beq
\xymatrix{
F(\overline{Y}) \otimes F(\overline{Y'})  \ar[d]_{}  \ar[r]^{  }  & \overline{F(Y)} \otimes \overline{F(Y')}  \ar[d]^{} \\
F(\overline{ Y \sqcup Y'})   \ar[r]_{} &   \overline{F( Y \sqcup Y')} 
} \label{eq:involutioncondition}
\eeq
commutes. 

We also need to require a condition (see Definition~\ref{defi:equivariant} for details) which essentially 
says that the above natural isomorphism $F \beta_{\rm B} \Rightarrow \beta_{\rm V} F$ is consistent
with the identification $Y \cong \overline{\overline{Y}} $ and $V \cong \overline{\overline{V}} $,
where $Y \cong \overline{\overline{Y}} $ has been explained in the previous subsection and $V \cong \overline{\overline{V}} $ is just the canonical one for vector spaces.
Combining several identification maps, we have
\beq
F(Y) \to F(\overline{\overline{Y}}) \to \overline{F(\overline{Y}) } \to \overline{\overline{F(Y)}} \to F(Y), \label{eq:twiceinvo}
\eeq
where the first step uses $Y \cong \overline{\overline{Y}} $ and the last step uses $V \cong \overline{\overline{V}} $.
The composition of the above chain of maps is required to be the identity.

If $F$ satisfies the above conditions, $F$ is called an equivariant functor for the involution pair $(\beta_{\rm B}, \beta_{\rm V})$.
One of the requirements of unitarity is that $F$ is an equivariant functor.
Another requirement is that the evaluation $e_Y$ defined in \eqref{eq:evB} gives a positive definite hermitian inner product. 
\begin{defi}\label{defi:unitary}
A unitary TQFT is a TQFT such that $F$ is an equivariant functor for the involution pair $(\beta_{\rm B}, \beta_{\rm V})$,
and the evaluation $F(e_Y) : F(Y) \otimes \overline{F(Y)} \to \BC $ gives a positive definite hermitian metric on $F(Y)$.
\end{defi}
For more details, see Sec.~4 of \cite{Freed:2016rqq}. 
By using the positive definite sesquilinear form defined by $F(e_Y)$, we regard the vector spaces $\CH(Y)=F(Y)$ as finite dimensional Hilbert spaces.
Namely, $\CH(Y)$ has the positive definite hermitian inner product determined by $F(e_Y)$.

\begin{rem}
The statement that $F(e_Y) $ is hermitian is described in more detail as follows.
The conjugate $F(\overline{e_Y})=\overline{F(e_Y) }$ gives a map $ \overline{F(Y)} \otimes F(Y)  \to \BC$.
Then, by acting $\tau'_{ F(Y), \overline{F(Y)} }$, we get $\overline{F(e_Y) } \tau'_{ F(Y), \overline{F(Y)} }: F(Y) \otimes \overline{F(Y)} \to \BC$.
The hermiticity means that 
\beq
\overline{F(e_Y) } \tau'_{ F(Y), \overline{F(Y)} } = F(e_Y).
\eeq
There are sign factors in $\tau'_{ F(Y), \overline{F(Y)} }$  as $v \otimes w \mapsto (-1)^{{\rm deg}(v) {\rm deg}(w)} w \otimes v$ (Remark~\ref{rem:svect}) and in 
the involution as $\overline{v \otimes w} \mapsto {(-1)^{{\rm deg}(v) {\rm deg}(w)}} \overline{v} \otimes \overline{w}$ (Remark~\ref{rem:involutionsign}).
These two sign factors precisely cancel each other to give $\overline{v \otimes w} \mapsto  \overline{w} \otimes  \overline{v}$.
\end{rem}

Finally we define invertible TQFT as follows.
We impose the unitarity from the beginning in this paper. 
\begin{defi}
A unitary invertible TQFT is a unitary TQFT 
in which the Hilbert space $\CH(Y)=F(Y)$ for every $Y$ is one dimensional, $\dim \CH(Y) =1$.
\end{defi}
For an invertible TQFT, the $Z(X)$ for any bordism $X$ takes values in one dimensional Hilbert spaces.
First of all, we identify a dual space $\CH(Y)^*$ of $\CH(Y)$ with $\overline{\CH(Y)}$ by using the hermitian metric.
Then for a bordism $X: Y_0 \to Y_1$, we regard $Z(X)$ to be an element of the one-dimensional Hilbert space $\CH(Y_1) \otimes \overline{\CH(Y_0)}$.
Therefore, there is no problem in writing $Z(X_1 \cdot X_0)= Z(X_1 )Z( X_0) =  Z(X_0 )Z( X_1)$,
$Z(X_0 \sqcup X_1) = Z(X_1 )Z( X_0) =  Z(X_0 )Z( X_1)$, and so on. 
This simplifies the notation.

\subsection{A few properties} \label{sec:property}
Here we review a few properties. We consider general unitary TQFTs.

\paragraph{Boundary isomorphism.}
The first property is about an isomorphism $\phi: Y \to Y'$ between two $H_{d-1}$-manifolds.
For each isomorphism $\phi$, we can define an operator $U(\phi) : \CH(Y) \to \CH(Y')$ as follows.
Let $I_Y=[0,1] \times Y$, and let $(I_Y, (\partial I_Y)_0, (\partial I_Y)_1,Y, Y', \varphi_0, \varphi_1)$ be a bordism,
where $\varphi_0:(\partial I_Y)_0 = \{ 0 \} \times Y \to Y $ is just the canonical one and 
$\varphi_1:(\partial I_Y)_1 = \{1 \} \times Y \to Y' $ is given by $\phi$.
This bordism is just $[0,1] \times Y$ as an $H_d$-manifold, but the boundary isomorphism is nontrivial.

Then we set 
\beq
U(\phi) : = Z(I_Y, (\partial I_Y)_0, (\partial I_Y)_1,Y, Y', \varphi_0, \varphi_1). \label{eq:unitaryop}
\eeq
One can check that $U$ satisfies $U( \phi' \circ \phi) = U(\phi')U(\phi)$. 
One can also see that the hermitian metric $Z(e_Y)$ satisfies $Z(e_{Y'})(U(\phi) \otimes  \overline{U(\phi)} )= Z(e_Y)$ (in the obvious notation),
so this $U(\phi)$ is a unitary operator. Moreover, let 
$(X, (\partial X)_0, (\partial X)_1,Y_0, Y_1, \varphi_0, \varphi_1)$ be a bordism and
$(X, (\partial X)_0, (\partial X)_1,Y'_0, Y'_1, \varphi'_0, \varphi'_1)$ be another bordism
which differs from the first one only by the boundary isomorphisms $\phi_i = \varphi'_i \circ \varphi^{-1}_i : Y_i  \to Y'_i$~($i=0,1$). Then we have
\beq
&Z(X, (\partial X)_0, (\partial X)_1,Y_0, Y_1, \varphi_0, \varphi_1) \nonumber \\
=& U(\phi_1)^{-1} Z(X, (\partial X)_0, (\partial X)_1,Y'_0, Y'_1, \varphi'_0, \varphi'_1) U(\phi_0).
\eeq
Thus, changing the boundary isomorphisms corresponds to multiplications of unitary operators $U(\phi)$.
This is familiar in quantum field theory in flat space $\BR^d$ where we consider isometries $\phi_{\rm isometry} : \BR^{d-1} \to \BR^{d-1}$
(e.g., rotation, translation, internal symmetry action) acting on the Hilbert space $\CH(\BR^{d-1})$.

In particular, there is always a distinguished isomorphism of any $H_{d-1}$-manifold which we denote (by abusing notation) as $(-1)^F$,
by using the fermion parity $(-1)^F \in (\Spin(d-1) \times K)/\langle (-1, k_0) \rangle \subset H_{d-1}$.
We can consider the corresponding operator $U((-1)^F)$. By further abusing the notation, we denote it as $(-1)^F$.
\begin{ex}\label{ex:time2}
One nontrivial example is about the case of the time-reversal. Let us consider an orientable manifold $Y$ for the case of $H_d = \Pin^\pm (d)$ as discussed in Example~\ref{ex:time1}.
There is an isomorphism ${\mathsf T}_Y: Y \to \overline{Y}$. This gives a linear map $U({\mathsf T}_Y): \CH(Y) \to \overline{\CH(Y)}$.
On the other hand, there is the canonical anti-linear map $\sigma:  \overline{\CH(Y)} \to \CH(Y)$ which is the identity as a map between the underlying real vector spaces.
Then we get the anti-linear map 
\beq
\sigma  U({\mathsf T}_Y): \CH(Y) \to \CH(Y) . 
\eeq
This is the usual time-reversal symmetry.
We have 
\beq
(\sigma  U({\mathsf T}_Y))^2=\overline{U({\mathsf T}_Y)}U({\mathsf T}_Y)=U(\overline{{\mathsf T}_Y})U({\mathsf T}_Y)=U({\mathsf T}_{\overline{Y}}   {\mathsf T}_Y)
\eeq
where the opposite $\overline{\phi} : \overline{Y} \to \overline{Y}'$ of an isomorphism ${\phi} : {Y} \to {Y}'$ is defined in the straightforward way.
Thus, $(\sigma  U({\mathsf T}_Y))^2 $ is $ (-1)^F$ for $\Pin^+$ and $+1$ for $\Pin^-$ by using the result of Example~\ref{ex:time1}.
This is one of the motivations of the factor $(-1)^F$ in \eqref{eq:barbarmap}.
\end{ex}

\paragraph{Reflection positivity.}
The next property is about reflection positivity. 
Let $(X, \varnothing, Y)$ be a bordism from $\varnothing$ to $Y$,
and $(\overline{X}, \varnothing, \overline{Y})$ be the opposite one. Then, from the positive definiteness of $Z(e_Y)$
we have $Z(e_Y)( Z(X) \otimes Z(\overline{X})) \geq 0$. 
On the other hand, this quantity can be evaluated as 
the partition function on the closed $H_d$-manifold $\Delta X$, the double of $X$ in Definition~\ref{defi:double}. 
Thus we obtain
\beq
Z(\Delta X)=F(\Delta X)  \geq 0.\label{eq:doublepositive}
\eeq
Geometrically, the double $\Delta X$ is also isomorphic to the $H_d$-manifold obtained by gluing $X : \varnothing \to Y$ and 
$\overline{^tX}: Y \to \varnothing$ as in Remark~\ref{rem:transp}, so we have $Z(\overline{^tX})Z(X) \geq 0$.

\paragraph{Relation between evaluation and coevaluation.}
The evaluation and coevaluation in Definition~\ref{defi:evcoev} give
\beq
E_Y: &=Z(e_Y) : \CH(Y) \otimes \overline{\CH(Y) } \to \BC , \\
C_Y: &=Z(c_Y) : \BC \to \overline{\CH(Y) } \otimes \CH(Y).
\eeq
The $C_Y$ can be identified with an element of $\overline{\CH(Y) } \otimes \CH(Y)$ by putting $1 \in \BC$ in the argument of the map.
Now consider the composition 
\beq
Y \xrightarrow{ 1_Y \sqcup c_Y } Y \sqcup \overline{Y} \sqcup Y \xrightarrow{ e_Y \sqcup 1_Y } Y, \label{eq:Scomposition}
\eeq
where $1_Y$ is the identity morphism defined in \eqref{eq:idB}. See Figure~\ref{fig:S}.
\begin{figure}
\centering
\includegraphics[width=.7\textwidth]{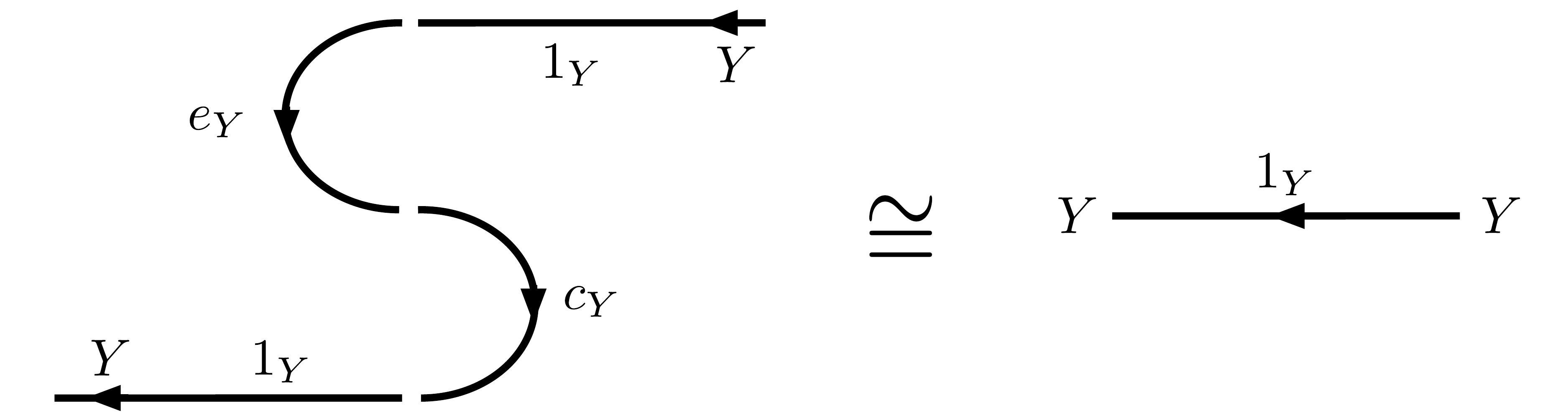}
\caption{The composition \eqref{eq:Scomposition}. \label{fig:S} }
\end{figure}
From the figure, it is clear that it is just equivalent to $1_Y$.
Thus we get
\beq
(E_Y \otimes 1_{\CH(Y) } ) ( 1_{\CH(Y) }  \otimes C_Y ) = 1_{\CH(Y) } .\label{eq: evcev}
\eeq
By taking explicit basis for $\CH(Y)$ and $\CH(\overline{Y})$, $E_Y$ and $C_Y$ can be written as
$(E_Y)_{i \bar{j}}$ and $(C_Y)^{\bar{j} i}$, where upper $i$ and $\bar{j}$ are indices for the vector spaces $\CH(Y)$ and $\CH(\overline{Y})$,
respectively, and lower $i$ and $\bar{j}$ are indices for the dual vector spaces. Then \eqref{eq: evcev}
is just saying that 
\beq
\sum_{\bar{j}} (E_Y)_{i \bar{j}} (C_Y)^{\bar{j} k} = \delta_i^k . \label{eq:evcev1}
\eeq

\paragraph{Topological spin-statistics theorem.}
Let us consider $\overline{e_Y}$.
We have $Z(\overline{e_Y})=\overline{Z( e_Y ) }$. 
The right hand side requires care because of the rule of the involution mentioned in Remark~\ref{rem:involutionsign}.
(We will momentarily see why the rule there is necessary.) 
Namely, if $v_i$ are the basis of $\CH(Y)$, $v_i \otimes \overline{v_j}$ are the basis of $\CH(Y) \otimes \overline{\CH(Y)}$ and the involution acts
as $\overline{v_i \otimes \overline{v_j} } \mapsto (-1)^{{\rm deg}(i){\rm deg}(j)} \overline{v_i} \otimes v_j$.
Then $\overline{Z( e_Y ) }$ in the basis dual to $\overline{v_i} \otimes v_j$
is given by $ \overline{(E_Y)_{i \bar{j}}} (-1)^{ {\rm deg}(i) {\rm deg}(j)}$. 
On the other hand, Lemma~\ref{lem:anti} states that $\overline{e_Y}=e_{\overline{Y}} \cdot (1_{\overline{Y}} \sqcup (-1)^F)$.
The $(-1)^F$ squares to the identity, and hence it has eigenvalues $\pm 1$. We assume that it is already diagonalized in the basis of the vector spaces we are using.
Then in the same basis as above, the $Z(\overline{e_Y})$ is given as $(E_{\overline{Y} } )_{\bar{i} j} (-1)^{{\rm fp}(i) }$
where $(-1)^{{\rm fp}(i) }$ is the eigenvalue of $(-1)^F$ acting on $v_i$. 
By using the fact that both $\overline{(E_Y)_{i \bar{j}}} $ and $(E_{\overline{Y} } )_{\bar{i} j} $ must be positive definite by unitarity,
and also the fact that ${\rm fp}(i) = {\rm fp}(j) \mod 2$ if $(E_Y)_{i \bar{j}} \neq 0$ (see Sec.~\ref{sec:subtle}),
we get 
\beq
(-1)^{{\rm deg}(i) }=(-1)^{{\rm fp}(i) }, \qquad \overline{(E_Y)_{i \bar{j}}} =(E_{\overline{Y} } )_{\bar{i} j}, \qquad \overline{(C_Y)^{\bar{j} i} } =(C_{\overline{Y} } )^{j \bar{i} }\label{eq:EbarE}
\eeq
where the first equation comes from $\overline{(E_Y)_{i \bar{i}}} (-1)^{{\rm deg}(i) } = (E_{\overline{Y} } )_{\bar{i} i} (-1)^{{\rm fp}(i) }$ and $(E_Y)_{i \bar{i}} > 0$,
the second one from ${\rm deg}(i)={\rm fp}(i) = {\rm fp}(j)={\rm deg}(j)$ and $\overline{(E_Y)_{i \bar{j}}} (-1)^{ {\rm deg}(i) {\rm deg}(j)} = (E_{\overline{Y} } )_{\bar{i} j} (-1)^{{\rm fp}(i) }$,
and the third one comes from the fact that $E_{\overline{Y}}$ and $C_{\overline{Y}}$ are inverse matrices of each other.
Now we can see why the rule in Remark~\ref{rem:involutionsign} is necessary. 
If we had not included the factor $(-1)^{{\rm deg}(i) }$, we would have gotten $(-1)^{{\rm fp}(i) }=+1$ which would have contradicted with physical experience
(e.g. results in free fermion theories). 

The equation $(-1)^{{\rm deg}(i) }=(-1)^{{\rm fp}(i) }$ implies the topological spin-statistics theorem.
Namely, the $\BZ_2$-grading of the ${\rm sVect}_\BC$ is determined by the eigenvalues of $(-1)^F$.

\paragraph{The dimension of the Hilbert space.}
Here we study the partition function on a manifold $S^1_{\rm A} \times Y$ where $S^1_{\rm A}$ is the one-dimensional circle which
has the anti-periodic spin structure in the sense of $(-1)^F$.

What we have found above is that $C_Y=Z(c_Y)$ and $C_{\overline{Y}}=Z(c_{\overline{Y}} )$
are complex conjugates of each other if they are regarded as matrices by using explicit basis vectors.
On the other hand, $E_Y=Z(e_Y)$ is a hermitian matrix and hence $C_Y$ 
(which is the inverse matrix of $E_Y$ as shown above) is also hermitian matrix, $\overline{(C_Y)^{\bar{i} j} }=C_Y^{\bar{j} i }$.
Thus we get 
\beq
(C_{\overline{Y}} )^{i \overline{j}} = (C_Y)^{{\overline j} i}.
\eeq
In particular, we obtain 
\beq
\sum_{\bar{j}} (E_Y)_{i \bar{j}} (C_{\overline{Y}})^{ k \bar{j}}   = \sum_{\bar{j}} (E_Y)_{i \bar{j}} (C_Y)^{\bar{j} k} = \delta^i_k. \label{eq:evcev}
\eeq

Now let $X$ be a bordism from $Y$ to $Y$. We can take the trace $\tr Z(Y)$ of $Z(X) =F(X)$ by using the compositions as follows:
\beq
 \tr Z(X)= E_Y \cdot (Z(X) \otimes 1_{\CH(\overline{Y}) } ) \cdot {C}_{\overline{Y}}, \label{eq:traceformula}
\eeq
where we have used \eqref{eq:evcev}. In the bordism, this trace is given as in Figure~\ref{fig:trace}.
\begin{figure}
\centering
\includegraphics[width=.6\textwidth]{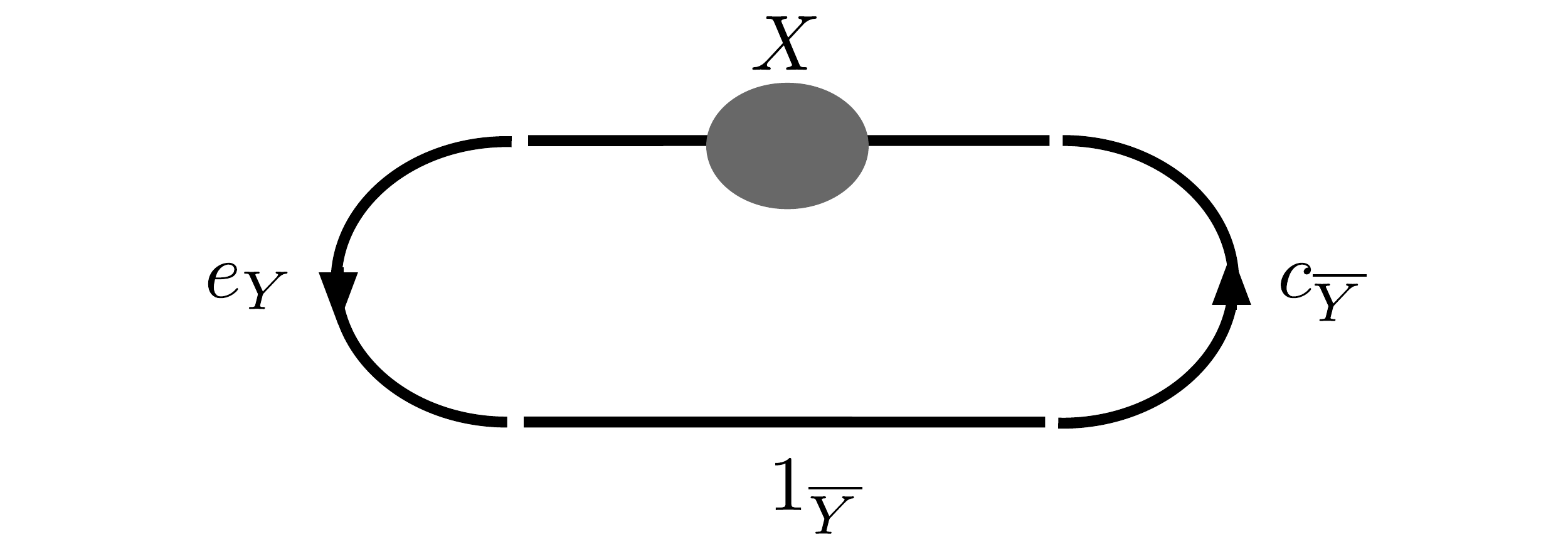}
\caption{The trace \eqref{eq:traceformula}. Notice that we have to use ${c}_{\overline{Y}}$ instead of $c_Y$.
\label{fig:trace} }
\end{figure}

In the special case $X=1_Y$, we get $\tr Z(Y) = \dim \CH(Y)$. On the other hand, the above composition is
the partition function $Z(S^1_{\rm A} \times Y)$, where $S^1_{\rm A}$ is a copy of $S^1$ with the anti-periodic 
spin structure. The appearance of the anti-periodic spin structure is due to the factor of $(-1)^F$ in 
$( (-1)^F \sqcup 1_{\overline{Y}} ) \cdot c_{\overline{Y}}   =  \tau_{\overline{Y},Y} \cdot c_Y$
as shown in Lemma~\ref{lem:anti}. Therefore, we get
\beq
Z(S^1_{\rm A} \times Y) =  \dim \CH(Y). \label{eq:dimformula}
\eeq
This equation is the most significant motivation for including the factor $(-1)^F$ in \eqref{eq:barbarmap}.
The equation \eqref{eq:dimformula} is derived from Lemma~~\ref{lem:anti}, which in turn was based on \eqref{eq:barbarmap}.

\paragraph{Summary of signs.}
There are subtle factors of $(-1)^F$ in the category of bordism ${\rm Bord}_{\langle d-1,d \rangle}$, and $(-1)^{{\rm deg}(v) {\rm deg}(w)}$ in the category of super vector spaces ${\rm sVect}_\BC$.
The reasons why they are necessary have been explained above, but let us summarize them here because they are very subtle.
For simplicity of presentation, we restrict our attention to invertible TQFTs in which the Hilbert spaces are one-dimensional.
\begin{itemize}

\item The $E_Y=Z(e_Y)$ is positive by unitarity. Since $C_Y=Z(c_Y)$ is the inverse of $E_Y$ as shown in \eqref{eq: evcev} (or more explicitly in \eqref{eq:evcev1}), 
$C_Y$ is also positive definite.

\item Applying the positivity for $\overline{Y}$, we conclude that $C_{\overline{Y}}$ is positive definite and hence $E_Y  C_{\overline{Y}}=Z(e_{Y} \cdot c_{\overline{Y}})$ is positive definite.
The question is whether $e_{Y} \cdot c_{\overline{Y}}$ should be $S^1_{\rm A} \times Y$ or $S^1_{\rm P} \times Y$, where $S^1_{\rm A}$ and $S^1_{\rm P}$ are circles with the anti-periodic spin structure
and the periodic spin structure, respectively. It is known in physics that $Z(S^1_{\rm A} \times Y) = \tr_{\CH(Y)} e^{-H}$ where $H$ is the Hamiltonian,
and hence it must always be positive definite. Therefore we require $e_{Y} \cdot c_{\overline{Y}} = S^1_{\rm A} \times Y$.
There are concrete physical examples in which $Z(S_{\rm P} \times Y) = -1$ for some $Y$.

\item The Lemma~\ref{lem:anti} was derived under the assumption that we use \eqref{eq:barbarmap}.
If we had not included $(-1)^F$ in \eqref{eq:barbarmap}, then a careful inspection of the proof of Lemma~\ref{lem:anti}
implies that we would have gotten $ c_{\overline{Y}}   =  \tau_{\overline{Y},Y} \cdot c_Y$ and hence $e_{Y} \cdot c_{\overline{Y}}=S^1_{\rm P} \times Y$
which contradicts with the above requirement. Therefore, we have to use \eqref{eq:barbarmap}.

\item We have $e_Y \cdot \tau_{\overline{Y},Y} \cdot c_Y=S^1_{\rm P} \times Y$. Both $Z(e_Y)$ and $Z(c_Y)$ are positive definite (in the sense
that the hermitian matrices $(E_Y)_{i\bar{j}}$ and $ (C_Y)^{{\overline j} i}$ are positive definite).
For invertible TQFTs, $Z(S^1_{\rm P} \times Y)$ is just the value of $(-1)^F$ acting on $\CH(Y)$ (times $Z(S^1_{\rm A} \times Y)=\dim \CH(Y)=1$).
Thus the equality $e_Y \cdot \tau_{\overline{Y},Y} \cdot c_Y=S^1_{\rm P} \times Y$ implies that $Z(\tau_{\overline{Y},Y})$ is given by
$Z(\tau_{\overline{Y},Y}): \overline{v} \otimes v \mapsto (-1)^{{\rm fp}(Y)} v \otimes \overline{v}$ where $(-1)^{{\rm fp}(Y)} =Z(S^1_{\rm P} \times Y)$.
Therefore, we must consider super vector spaces instead of the ordinary vector spaces as stated in Remark~\ref{rem:svect}, and the $\BZ_2$-grading of $\CH(Y)$ is determined by $(-1)^{{\rm fp}(Y)}$.
This is the topological spin-statistics theorem \cite{Freed:2016rqq}.

\item Lemma~\ref{lem:anti} states $\overline{c_{Y}}  =  \tau_{\overline{Y},Y} \cdot c_Y$ and hence $e_Y \cdot \overline{c_{Y}} = S^1_{\rm P} \times Y$.
Therefore, we have $E_Y \overline{C_Y} = Z(S^1_{\rm P} \times Y) = (-1)^{{\rm fp}(Y)}$. However, $C_Y$ was positive definite.
To avoid contradiction between these two facts, we have to include a sign factor in the involution as $\overline{ \overline{v} \otimes v} = (-1)^{{\rm fp}(Y)}  v \otimes \overline{v}$.
Therefore, the rule of the involution mentioned in Remark~\ref{rem:involutionsign} is necessary.
\end{itemize}

\section{Cobordism invariance of partition functions}\label{sec:cobinv}

In this section we prove that partition functions of invertible TQFTs are cobordism invariants.
Essentially this was already done by Freed and Moore~\cite{Freed:2004yc} under slightly different axioms. 
We review their proof and supply a little more details to account for 
the $H_d$-structure and unitarity.

\subsection{Sphere partition functions}
The proof of cobordism invariance requires some knowledge of the partition functions of the form $Z(S^\lambda \times S^{d-\lambda})$.
Thus we study them in this subsection.

Let us first notice the following simple fact.
Let $S^n$ be an $n$-dimensional sphere and $D^n$ be an $n$-dimensional ball with boundary $\partial D^n =S^{n-1}$.
Consider $D^{p} \times S^{d-p}$ and $S^{p-1} \times D^{d-p+1}$ for an integer $p$ with $1 \leq p \leq d$. 
The boundaries of both of them are given by $S^{p-1} \times S^{d-p}$,
and hence we can glue them together along the boundaries. (We take the anti-periodic spin structure for $S^1$.)
Then we get a sphere $S^d$. To see this,
consider $S^d$ embedded in $\BR^{d+1}$,
\beq
\sum_{i=1}^{d+1} (x^i)^2 =1,
\eeq
where $x^i$ are the coordinates of $\BR^{d+1}$.
On this $S^d$, we take a submanifold as
\beq
\sum_{i=1}^{p}(x^i)^2=  \sum_{i=p+1}^{d+1}(x^i)^2 =\frac{1}{2}  . \label{eq: divsphere}
\eeq
This submanifold is clearly $S^{p-1} \times S^{d-p}$, and it separates $S^d$ into two pieces given by
\beq
D^{p} \times S^{d-p} &: \sum_{i=1}^{p}(x^i)^2 \leq \frac{1}{2} \leq  \sum_{i=p+1}^{d+1}(x^i)^2, 
\nonumber \\
S^{p-1} \times D^{d-p+1} &:  \sum_{i=1}^{p}(x^i)^2 \geq \frac{1}{2} \geq  \sum_{i=p+1}^{d+1}(x^i)^2.  
\eeq

Notice also that on a sphere $S^d$, there exists a trivial $H_d$-structure in the following sense.
The $S^d$ is orientable (and oriented if $\rho_d(H_d)=\SO(d)$). Then the structure group of the $H_d$-bundle
can be reduced to $(\Spin(d) \times K)/\langle (-1, k_0 ) \rangle$ as in \eqref{eq:orientableH}.
Then we take a trivial $H_d$-structure as the one induced from the homomorphism $\Spin(d) \to (\Spin(d) \times K)/\langle (-1, k_0 ) \rangle$
and using the unique spin structure on $S^d$ (which we take to be anti-periodic for $d=1$). This means that the bundle of the internal symmetry $K$ is trivial.
In the above process, we have picked up an orientation of $S^d$ (if it is not oriented from the beginning),
but spheres with two different orientations are isomorphic by orientation-changing diffeomorphism and we do not need to distinguish them when we consider 
the partition function on them.
So we will not be careful about the orientation.

Using the above facts, we have~\cite{Freed:2004yc}
\begin{lem} \label{lem:sphere1}
The partition function $Z(S^d )$ with the trivial $H_d$-structure on $S^d$ is positive, and in particular
$Z(S^d )=1$ if the dimension $d$ is odd.
\end{lem}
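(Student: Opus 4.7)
The proof splits into a positivity argument and the odd-$d$ evaluation.

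For positivity, view $D^d$ equipped with its trivial $H_d$-structure as a bordism $\varnothing \to S^{d-1}$. Its double $\Delta D^d$ (two hemispheres glued along the equator $S^{d-1}$) is exactly $S^d$ with the trivial $H_d$-structure, so reflection positivity \eqref{eq:doublepositive} gives
\[
Z(S^d) \;=\; E_{S^{d-1}}\bigl(Z(D^d)\otimes\overline{Z(D^d)}\bigr) \;=\; \|Z(D^d)\|^2 \;\ge\; 0,
\]
where the norm is measured by the positive-definite hermitian form on the one-dimensional Hilbert space $\CH(S^{d-1})$. Strict positivity follows from $Z(D^d)\neq 0$, which for an invertible theory $\CI$ can be extracted from the existence of an inverse theory $\CI^{-1}$: the identity $Z_{\CI}(D^d)\,Z_{\CI^{-1}}(D^d)=1$ (read off from the trivial tensor product theory) forces both factors to be nonzero.

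For the odd-$d$ evaluation, the base case $d=1$ is immediate from the dimension formula: the trivial $H_1$-structure on $S^1$ is the anti-periodic (bounding) spin structure $S^1_{\mathrm A}$, and taking $Y=\mathrm{pt}$ in \eqref{eq:dimformula} yields $Z(S^1_{\mathrm A})=\dim\CH(\mathrm{pt})=1$. For odd $d\ge 3$ the plan is to exploit the splitting \eqref{eq: divsphere} with $p=2$, writing $S^d=(S^1\times D^{d-1})\cup_{S^1\times S^{d-2}}(D^2\times S^{d-2})$; because the $S^1$ bounds the attached $D^2$, it carries the anti-periodic spin structure. I would then repackage this decomposition inside the bordism category so that it manifestly takes the form $S^1_{\mathrm A}\times Y$ for an appropriate closed $(d-1)$-manifold $Y$, at which point \eqref{eq:dimformula} collapses the partition function to $\dim\CH(Y)=1$.

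The main obstacle will be the careful bookkeeping of the $H_d$-structure across the cut-and-paste in \eqref{eq: divsphere} while avoiding any covert appeal to cobordism invariance, which is the very conclusion toward which the section is building. A parallel approach I would also try is induction on odd $d$: combine the identity $Z(S^d)=\|Z(D^d)\|^2$ with the Poincaré-dual symmetry of the standard Morse decomposition of $S^d$ (one $0$-handle and one $d$-handle interchanged under reflection) to reduce the computation of $\|Z(D^d)\|$ to a lower-dimensional sphere partition function of matching parity, iterating down to the base case $Z(S^1_{\mathrm A})=1$.
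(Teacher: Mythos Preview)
Your positivity argument is close to the paper's but takes a shortcut that is not available at this point in the text. The paper's definition of ``invertible'' is simply $\dim\CH(Y)=1$; the existence of an inverse theory $\CI^{-1}$ is mentioned only as motivation in the introduction and is never established from the axioms. The paper instead argues directly: if $Z(D^d)=0$ then, since any closed $X$ can be cut as $(X\setminus B^\circ)\cup B$ with $B\cong D^d$, every $Z(X)$ would vanish, contradicting $Z(S^1_{\mathrm A}\times Y)=\dim\CH(Y)=1$. Your argument could be repaired by actually constructing $\CI^{-1}$, but the paper's route is shorter and stays inside what has been proved.

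The odd-$d$ argument has a genuine gap. For $d\ge 2$ the sphere $S^d$ is simply connected and is \emph{never} diffeomorphic to $S^1\times Y$ for any closed $Y$, so there is no ``repackaging'' of the $p=2$ surgery decomposition that will put $Z(S^d)$ into the form to which \eqref{eq:dimformula} applies. What the paper actually does is use the \emph{balanced} decomposition $p=n+1$ for $d=2n+1$, so that both pieces $D^{n+1}\times S^n$ and $S^n\times D^{n+1}$ share the symmetric boundary $S^n\times S^n$. The swap isomorphism of the two $S^n$ factors acts on the one-dimensional Hilbert space by a phase $e^{i\alpha}$, giving $Z(S^{2n+1})=e^{i\alpha}\,Z(S^{n+1}\times S^n)$; reflection positivity on both sides forces $e^{i\alpha}=1$. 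One of $n,\,n+1$ is again odd, and iterating this step on the odd sphere factor eventually produces $Z(S^{2n+1})=Z(S^1_{\mathrm A}\times M)$ with $M$ a product of spheres, at which point \eqref{eq:dimformula} finishes. The essential idea you are missing is this exploitation of the diagonal symmetry of $S^n\times S^n$ combined with positivity to kill the phase; neither your $p=2$ cut nor the vague Morse-duality induction supplies a substitute for it.
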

\begin{proof}
Because the $H_d$-structure is trivial, we omit it. First, notice that the $S^d$ can be constructed as the double $\Delta D^d$
of the disk $D^d$, and hence $Z(S^d) =Z(\Delta D^d) \geq 0$ by \eqref{eq:doublepositive}. If this were zero, that would mean 
that $Z(D^d) = 0 $ by unitarity. If this were the case, the partition functions of arbitrary manifolds would be zero because
any manifold $X$ can be constructed by gluing $X \setminus B^\circ $ and $B$, where $B$ is a small closed subspace of $X$ isomorphic to $D^d$
and the superscript $^\circ$ means the interior. Then $Z(X)=Z(X \setminus B^\circ )Z(B)=0$.
However, $Z(S^1_{\rm A} \times Y)=\dim \CH(Y)$ by \eqref{eq:dimformula} and $\dim \CH(Y)=1$ by definition of invertible TQFT, a contradiction.
Therefore, $Z(D^d) \neq 0$ and $Z(S^d) >0$.

Consider the case of odd dimensions $d=2n+1$. We can represent it as
\beq
Z(S^{2n+1} ) = Z({D^{n+1} \times S^n})  Z( S^n \times D^{n+1} ).
\eeq
The boundary of $ S^n \times D^{n+1}$ is $S^n \times S^n$, and we can consider an isomorphism $\phi : S^n \times S^n \to S^n \times S^n$ 
which exchanges the two $S^n$'s (up to an isomorphism which flips the orientation). 
We have the corresponding unitary operator $U(\phi)$ as defined in \eqref{eq:unitaryop}.
Because the Hilbert space is one dimensional, the action of $U(\phi)$ is just a phase, 
$ Z( S^n \times D^{n+1} )= U(\varphi)  Z(  D^{n+1} \times S^n) = e^{i\alpha} Z(  D^{n+1} \times S^n)$ for some phase $e^{i\alpha} \in \U(1)$.
Therefore, we have
\beq
Z(S^{2n+1} ) 
&=e^{i\alpha} Z({D^{n+1} \times S^n})  Z(  D^{n+1} \times S^n) \nonumber \\
&=e^{i\alpha} Z(S^{n+1} \times S^n).
\eeq
But $Z(S^{2n+1} ) >0$ and $Z(S^{n+1} \times S^n) = Z(\Delta( D^{n+1} \times S^n))  \geq 0$,
and hence we get $e^{i\alpha}=1$.

Now, either $n$ or $n+1$ is odd. Let the odd one be denoted as $2\ell+1$. The other one is $d-2\ell-1$. Then, completely in the same way 
as above, we get 
\beq
Z(S^{2\ell +1} \times S^{d-2\ell-1} ) = Z(S^\ell \times S^{\ell +1} \times S^{d-2\ell-1} ) .
\eeq
Repeating this procedure, we eventually get
\beq
Z(S^{2n+1} ) =Z(S^1_{\rm A} \times M_{2n})
\eeq
where $M_{2n}$ is a $2n$-dimensional manifold (which is a product of spheres of various dimensions), and $S^1_{\rm A}$ has the anti-periodic spin structure.
The right hand side is the dimension of the Hilbert space on $M_{2n}$ by \eqref{eq:dimformula}, and this is just 1 by definition of invertible TQFT. 
Therefore we get $Z(S^{2n+1} )=1$.
\end{proof}

If the dimension $d$ is even, then $Z(S^d)$ need not be 1.
The reason behind it is that we can add a local term to the Lagrangian which is proportional to the Euler density
\beq
\CL &\supset a E \\
E &= \frac{1}{(2\pi)^{d/2}2^d} R^{\mu_1\mu_2}_{~~~~~\nu_1\nu_2} \cdots R^{\mu_{d-1}\mu_d}_{~~~~~~~\nu_{d-1}\nu_d}\epsilon_{\mu_1 \cdots \mu_d} \epsilon^{\nu_1\cdots \nu_d}
\eeq
where $R^{\mu\nu}_{~~\rho\sigma}$ is the Riemann curvature tensor, and $a$ is a coefficient. We define the contribution of such an Euler term 
to any compact manifold $X$ (possibly with boundary) to be $\lambda^{{\rm Euler}(X) }$ where $\lambda =e^{-a} \in \BC$ is a constant.
Such a term is consistent with all the axioms of TQFT
as long as $\lambda$ is positive, $\lambda \in \BR_{>0}$.
We allow ourselves to freely add an Euler term with $\lambda>0$ to set $Z(S^d)=1$.
\begin{rem}\label{rem:euler}
In fact, if we are given an arbitrary unitary invertible TQFT $\CI$, we can always factor it uniquely as $\CI = \hat{\CI} \times \CI^{\rm Euler}_\lambda$,
where $\hat{\CI}$ is an invertible TQFT which has the unit sphere partition function $Z(S^d)=1$ for the sphere $S^d$ with the trivial $H_d$-structure, 
and $\CI^{\rm Euler}_\lambda$ is the theory
whose partition function on any manifold is given as $\lambda^{{\rm Euler}(X) }$ for a positive $\lambda>0$. Therefore,
there is no loss of generality to focus on the case that the sphere partition function is unity $Z(S^d)=1$
even if the spacetime dimension $d$ is even.
\end{rem}

\begin{lem}\label{lem:sphere2}
If $Z(S^d)=1$, then $Z(S^\lambda \times S^{d-\lambda})=1$ for any integer $\lambda$ with $0 \leq \lambda \leq d$
where all the spheres are assumed to have the trivial $H_d$-structure.
\end{lem}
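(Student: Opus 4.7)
The plan is to induct on $\lambda \in \{0, 1, \ldots, d\}$, reusing the decomposition
\begin{equation*}
S^d = (D^\lambda \times S^{d-\lambda}) \cup_{S^{\lambda-1} \times S^{d-\lambda}} (S^{\lambda-1} \times D^{d-\lambda+1})
\end{equation*}
from the proof of Lemma~\ref{lem:sphere1}, together with the doubling identity $Z(\Delta X) = E_Y(Z(X)\otimes \overline{Z(X)}) = \|Z(X)\|^2 \geq 0$. The base case $\lambda=0$ is immediate: since $S^0 \times S^d$ is the disjoint union of two copies of $S^d$, monoidality and the hypothesis $Z(S^d)=1$ give $Z(S^0 \times S^d) = Z(S^d)^2 = 1$.

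For the inductive step, fix $\lambda \geq 1$ and set $Y := S^{\lambda-1}\times S^{d-\lambda}$ with its trivial $H_{d-1}$-structure. Regard $X_0 := D^\lambda \times S^{d-\lambda}$ as a bordism $\varnothing \to Y$ and $X_1 := S^{\lambda-1}\times D^{d-\lambda+1}$ as a bordism $\varnothing \to \overline{Y}$ (the two pieces induce opposite $H_{d-1}$-structures on the common interface, since their outward normals at $Y \subset S^d$ point oppositely). Gluing them along $Y$ via $e_Y$ recovers $S^d$, so the gluing axiom gives
\begin{equation*}
1 = Z(S^d) = E_Y\bigl(Z(X_0) \otimes Z(X_1)\bigr).
\end{equation*}
On the other hand, doubling the two pieces gives $\Delta X_0 = S^\lambda \times S^{d-\lambda}$ and $\Delta X_1 = S^{\lambda-1}\times S^{d-\lambda+1}$ with the trivial $H_d$-structures. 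Since $\CH(Y)$ is one-dimensional, choose a basis $e \in \CH(Y)$ normalized so that $E_Y(e\otimes \bar e)=1$, and write $Z(X_0) = c\, e$ and $Z(X_1) = c'\, \bar e$ with $c, c' \in \BC$ under the identification $\CH(\overline{Y})\cong \overline{\CH(Y)}$. The displayed equation reduces to $cc' = 1$, while reflection positivity applied to the two doubles gives $Z(S^\lambda \times S^{d-\lambda}) = |c|^2$ and $Z(S^{\lambda-1}\times S^{d-\lambda+1}) = |c'|^2$. Taking absolute values in $cc'=1$ produces the key identity
\begin{equation*}
Z(S^\lambda \times S^{d-\lambda}) \cdot Z(S^{\lambda-1}\times S^{d-\lambda+1}) = 1,
\end{equation*}
and combining this with the induction hypothesis $Z(S^{\lambda-1}\times S^{d-\lambda+1}) = 1$ gives $Z(S^\lambda \times S^{d-\lambda}) = 1$, closing the induction.

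The main point requiring care is the $H_d$-structure bookkeeping: one must check that the $H_d$-structures induced on the doubles $\Delta X_0$ and $\Delta X_1$ from the trivial $H_d$-structure on $S^d$ actually agree with the trivial $H_d$-structures on $S^\lambda \times S^{d-\lambda}$ and $S^{\lambda-1}\times S^{d-\lambda+1}$ in the statement of the lemma. For high enough dimensions this follows from uniqueness of the relevant structures on spheres, but the low-dimensional cases involving $S^1$ must be checked by hand: one needs the doubling of an interval to produce precisely the anti-periodic spin structure, which is the chosen trivial $H_1$-structure in the convention fixed in Section~\ref{sec:Hd}.
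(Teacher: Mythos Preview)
Your proof is correct and uses the same core ingredients as the paper: the decomposition $S^d = (D^\lambda \times S^{d-\lambda}) \cup (S^{\lambda-1} \times D^{d-\lambda+1})$ together with doubling/reflection positivity to obtain the product identity $Z(S^\lambda \times S^{d-\lambda})\,Z(S^{\lambda-1}\times S^{d-\lambda+1}) = Z(S^d)^2 = 1$. The difference is only organizational. The paper splits into cases: for $\lambda$ (or $d-\lambda$) odd it re-runs the reduction technique from the proof of Lemma~\ref{lem:sphere1} (swap the two sphere factors via a boundary isomorphism, iterate until an $S^1_{\rm A}$ appears, then use $Z(S^1_{\rm A}\times M)=\dim\CH(M)=1$), and for $\lambda$ even it uses the product identity together with the odd case for $\lambda-1$. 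Your induction from $\lambda=0$ is a bit more self-contained, since it never appeals back to the swap-and-reduce argument; the product identity alone carries the induction. The only cost is that you must check the $H_d$-structure bookkeeping at every step rather than just for the even case, but as you note this reduces to the same $S^1$ spin-structure checks handled in Remark~\ref{rem:anti} and in the discussion after Lemma~\ref{lem:sphere1}.
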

\begin{proof}
If either $\lambda$ or $d-\lambda$ is odd, then the proof is completely the same as in the proof of $Z(S^{2n+1})=1$ in Lemma~\ref{lem:sphere1}.
So we assume $\lambda$ is even. The case $\lambda=0$ follows from the assumption itself, so we assume $\lambda $ is an even integer larger than 1.

The $S^d$ is obtained by gluing $D^\lambda \times S^{\lambda-d}$ and $S^{\lambda-1} \times D^{\lambda+1-\lambda}$.
Then we have
\beq
 Z({D^{\lambda} \times S^{d-\lambda}})  Z( S^{\lambda-1} \times D^{d+1-\lambda} )
 =  Z(S^d) = Z ( {S^{\lambda-1} \times D^{d+1-\lambda}} )  Z({D^{\lambda} \times S^{d-\lambda}}) .
\eeq
By combining them 
we get
\beq
Z(S^d)^2=Z(S^{\lambda} \times S^{d-\lambda})Z( S^{\lambda-1} \times S^{d+1-\lambda} ).
\eeq
Now $Z( S^{\lambda-1} \times S^{d+1-\lambda} )=1$ because $\lambda-1$ is odd. Also $Z(S^d)=1$ by assumption.
Therefore we get $Z(S^{\lambda} \times S^{d-\lambda})=1$.
\end{proof}

\subsection{Cobordism invariance}
Now we prove the main theorem of this section which was essentially proved in \cite{Freed:2004yc}.
\begin{thm}\label{thm:inv}
Let $X_0$ and $X_1$ be closed $H_d$-manifolds such that there exists a $(d+1)$-dimensional bordism $C$ from $X_0$ to $X_1$.
Then $Z(X_0)=Z(X_1 )$ automatically for odd $d$ and if the Euler term is chosen such that $Z(S^d)=1$ for even $d$.
\end{thm}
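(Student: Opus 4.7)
The plan is to combine Morse theory on the bordism $C$ with the sphere partition function computations of Lemmas~\ref{lem:sphere1} and~\ref{lem:sphere2}. Any bordism decomposes into a composition of elementary handle attachments, each of which implements a surgery, and each such surgery changes the partition function only through quantities of the form $Z(S^\lambda \times S^{d-\lambda})$ or $Z(S^d)$, which are all unity under our hypotheses.

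First, I would choose a Morse function $f : C \to [0,1]$ with $f^{-1}(0)=X_0$ and $f^{-1}(1)=X_1$ and distinct critical values. Between adjacent critical levels, the bordism is a cylinder which acts as the identity operator on the one-dimensional Hilbert space, so it suffices to treat a single critical point of index $p$ at a time. For such an elementary bordism $C_p : X_0 \to X_1$, $X_1$ is obtained from $X_0$ by an index-$p$ surgery: there exists a compact $H_d$-manifold $W$ with boundary $Y := S^{p-1} \times S^{d-p}$ such that
\beq
X_0 = W \cup_Y (S^{p-1} \times D^{d-p+1}), \qquad X_1 = W \cup_Y (D^p \times S^{d-p}),
\eeq
and the $H_d$-structures pieced together from $C$ restrict to the standard (orientation-induced) $H_d$-structures on the sphere and disk factors.

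Second, I would apply locality. Both $D^p \times S^{d-p}$ and $S^{p-1} \times D^{d-p+1}$ are bordisms $\varnothing \to Y$, so their partition functions are vectors in the one-dimensional Hilbert space $\CH(Y)$. Their doubles are
\beq
\Delta(D^p \times S^{d-p}) = S^p \times S^{d-p}, \qquad \Delta(S^{p-1} \times D^{d-p+1}) = S^{p-1} \times S^{d-p+1},
\eeq
and Lemma~\ref{lem:sphere2} gives that both of these partition functions equal $1$. Reflection positivity \eqref{eq:doublepositive} then implies that $Z(D^p \times S^{d-p})$ and $Z(S^{p-1} \times D^{d-p+1})$ are both unit vectors in $\CH(Y)$. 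Their natural gluing along $Y$, with one of them reinterpreted as a bordism $Y \to \varnothing$ (hence as a vector in $\CH(\overline{Y}) \cong \overline{\CH(Y)}$), reconstructs $S^d$ as in \eqref{eq: divsphere}, so their pairing equals $Z(S^d) = 1$. In a one-dimensional Hilbert space, two unit vectors whose pairing is $1$ must coincide; inserting this identification into the locality factorizations of $Z(X_0)$ and $Z(X_1)$ yields $Z(X_0) = Z(X_1)$. The boundary cases $p=0$ and $p=d+1$, where the surgery creates or destroys an $S^d$ component, are handled directly by $Z(S^d) = 1$.

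The main obstacle I expect is verifying compatibility of this argument with the $H_d$-structure. One has to check that the $H_d$-structures induced by $C$ on the attaching region $S^{p-1} \times D^{d-p+1} \subset X_0$ and on the thickened core $D^p \times S^{d-p} \subset X_1$ can be deformed, through a homotopy of $H_d$-structures rel boundary, to the standard trivial ones used in Lemmas~\ref{lem:sphere1} and~\ref{lem:sphere2}. This relies on the orientable reduction \eqref{eq:orientableH} of $H_d$, so that the canonical framings of $D^p$ and $D^{d-p+1}$ lift to canonical $H_d$-structures, combined with the homotopy invariance of partition functions built into Definition~\ref{defi:bordism}.
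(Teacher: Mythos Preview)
Your proposal is correct and follows essentially the same approach as the paper: Morse-theoretic decomposition into elementary surgeries, followed by the observation that each surgery replaces one unit vector in $\CH(S^{p-1}\times S^{d-p})$ by another whose pairing with the first is $Z(S^d)=1$, so they coincide by Lemmas~\ref{lem:sphere1} and~\ref{lem:sphere2}. The paper phrases the final step as $Z(X_+) = Z(X_-)\, Z(\overline{^tX'_-}\cdot X'_-)^{-1}\, Z(\overline{^tX'_-}\cdot X'_+)$, but this is the same computation you describe. The one place where the paper goes further than your sketch is the $H_d$-structure verification you flag as an obstacle: it argues that the attaching region sits inside a contractible ball in $C$ so the $K$-part of the bundle trivializes, and then case-checks that whenever an $S^1$ factor appears (in $S^d$ for $d=1$, in $X'_-$ for $\lambda=2$, or in the double $\Delta X'_-$ for $\lambda=d$) it carries the anti-periodic spin structure, so that Lemma~\ref{lem:sphere2} genuinely applies.
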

\begin{proof}
The key point of the proof is to decompose the bordism $C$ into elementary building blocks by using Morse theory.
So let us review basic facts from Morse theory~\cite{Milnor}.

On the manifold $C$, there exists a smooth function $f: C \to \BR$ (called Morse function)
with the following properties. On the components of the boundary $\partial C =X_0 \sqcup X_1 $, we have $f|_{X_0}=0$ and $f|_{X_1}=1$, 
and $0 < f < 1$ in the interior $C^\circ$ of $C$.
The $f$ has only finitely many points $\{ p_a \}_{a=1,2,\cdots} $ where 
$df (p_a)=0$, and they are all in the interior of $C$. 
These points are called critical points. In a neighborhood of each critical point $p_a$, there exists
a local coordinate system $x^i~(i=1,\cdots,d+1)$ such that the critical point $p_a$ corresponds to $x^i=0$ and $f$ is given in that neighborhood by 
\beq
f = f(p_a) - \sum_{i=1}^{\lambda_a} (x^i)^2 + \sum_{i = \lambda_a+1}^{d+1} (x^i)^2. 
\eeq
where $\lambda_a$ is an integer. Furthermore, $f(p_a) \neq f(p_b)$ if $p_a \neq p_b$.
The $f(p_a)$ are called critical levels.
See Sec.~2 of \cite{Milnor} for a proof of the existence of $f$.
Essentially, this $f$ is just a generic enough function with the conditions $f|_{X_0}=0$ and $f|_{X_1}=1$, 
and $0 < f|_{C^\circ} < 1$.

Also, there exits a vector field $\xi$ with the following properties (see Sec.~2 of \cite{Milnor}). We have $ \xi \cdot f >0$ except at the critical points $\{ p_a \}$.
In a neighborhood of the critical points, $\xi$ is given by
\beq
\xi =  - \sum_{i=1}^{\lambda_a} x^i \partial_i + \sum_{i = \lambda_a+1}^{d+1} x^i \partial_i  
\eeq
where $\partial_i = \partial/ (\partial x^i)$, and we are using the same coordinate system as above. 
This $\xi$ may be taken to be a gradient vector of $f$ by using some Riemann metric, but that is not necessary.

Now, suppose that $t \in [0,1]$ is not a critical level, i.e., $t \neq f(p_a) $ for any $p_a$. 
Define $X_t = f^{-1}(t)$. Then, on $X_t$ we have $d f \neq 0$ and hence the implicit function theorem tells us that $X_t$
is a submanifold of $C$. See Figure~\ref{fig:morse} for the situation.
This $X_t$ is given the $H_d$-structure by using the upward normal vector to $X_t$. 
The bordism $C$ can be decomposed into $C' = f^{-1}([0,t])$ and $C'' = f^{-1}([t,1]) $
with $\partial C' = X_0 \sqcup X_t$ and $\partial C'' =X_t \sqcup X_1$.
(Up to continuous deformation, we can assume that the $H_{d+1}$-structure of $C$ near $X_t$ is a product type, which we always assume to be the case.)
Conversely, $C$ is constructed from composing the two bordisms $C'$ and $C''$
along $X_t$. By using such decomposition repeatedly, we can decompose $C$ as $C = C_1 C_2 \cdots $
where each of $C_1, C_2, \cdots$ contains at most a single critical point. This is possible because $f(p_a) \neq f(p_b)$ for $p_a \neq p_b$.
Thus we can just consider each piece $C_1, C_2, \cdots$ separately. Therefore, from the beginning,
we can assume that $C$ has only a single critical point $p$ without loss of generality. 

\begin{figure}
\centering
\includegraphics[width=.7\textwidth]{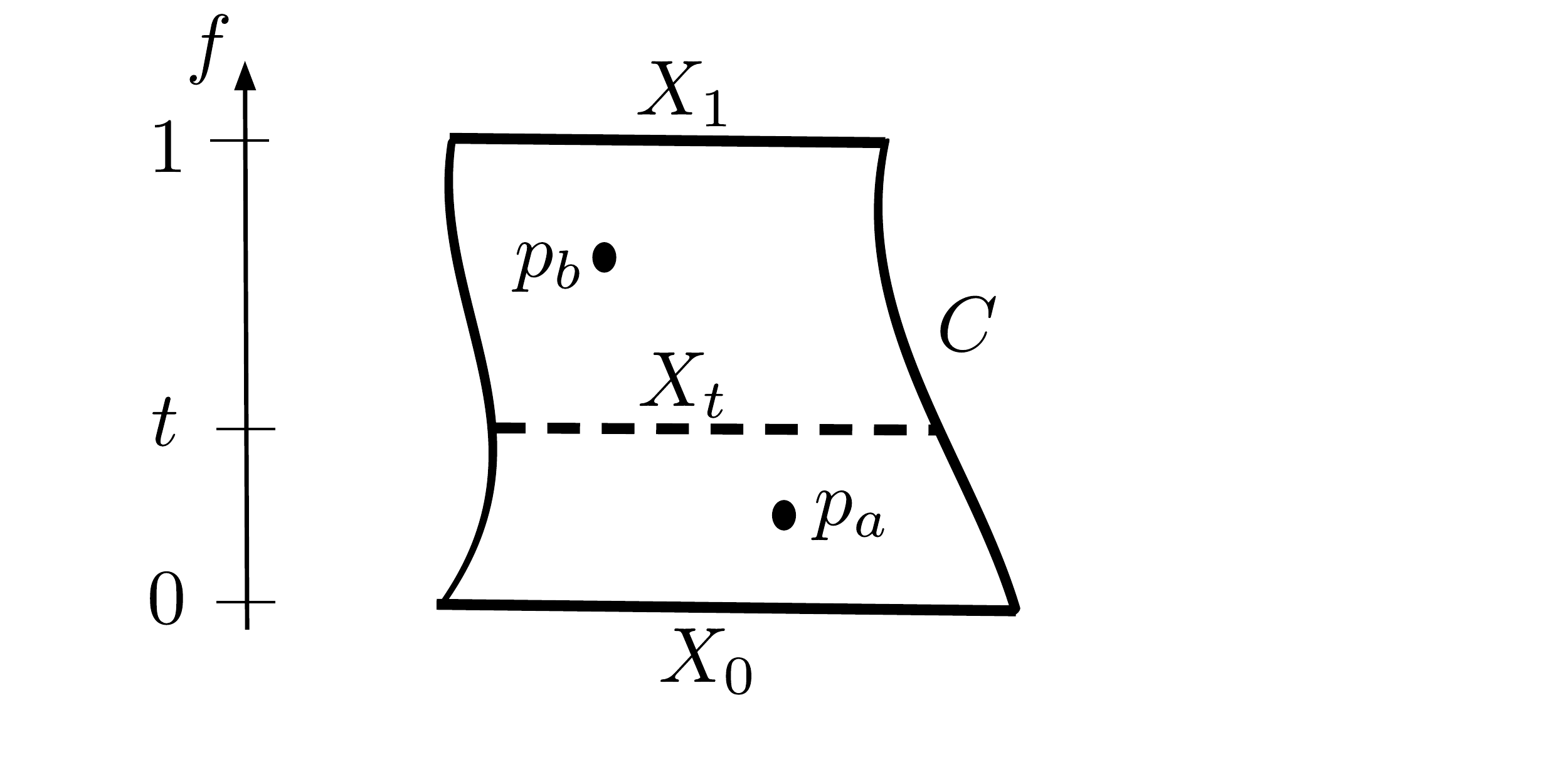}
\caption{The bordism $C$ from $X_0$ to $X_1$. The $C$ is sliced by the values of the function $f$.
Here it is sliced at $f^{-1}(t)$ which is assumed not to contain any critical points $p_a$. \label{fig:morse} }
\end{figure}

Suppose that in a region $f^{-1}( [t_1, t_2 ] )$ for some $t_1$ and $t_2$, there is no critical point. 
Then this region is diffeomorphic to $[t_1, t_2] \times X_{t_1}$. This can be seen as follows.
Define $\hat{\xi} = (\xi \cdot f)^{-1} \xi$ which is well-defined because $\xi \cdot f >0$ in the absence of critical points.
This new vector field $\hat{\xi}$ has the property that $\hat{\xi} \cdot f=1$. Now
we take a point $q \in X_{t_1}$, and solve the differential equation 
\beq
\frac{d \psi^i}{ d t}  = \hat{\xi}^i(\psi(t))  \label{eq:diffeq}
\eeq
with the initial condition $\psi(t=t_1) =q$ for $\psi: [t_1,t_2]  \to C$. Let $\psi (t,q)$ be the unique solution of the equation with the initial condition $q$.
Because $\hat{\xi} \cdot f=1$, we have $f(\psi(t ,q))=t$. Then the map $(t, q) \mapsto \psi(t,q)$
gives the diffeomorphism $ [t_1, t_2] \times X_{t_1} \to f^{-1}([t_1, t_2]) $.
In such a region, $X_t$ for different values of $t \in [t_1, t_2]$ are continuous deformation of each other, and
hence they are identified when they are viewed as $d$-dimensional bordisms from $\varnothing$ to $\varnothing$ 
(see the statement about the homotopy at the last sentence of Definition~\ref{defi:bordism}).
This means that $Z(X_{t_1}) = Z(X_{t_2})$. So we can forget about the region $f^{-1}([t_1, t_2])$ which does not contain any critical point.

Let $v = f(p)$.
From the above considerations, we can restrict our attention to a region $C_\epsilon=f^{-1}( [v - \epsilon, v+\epsilon ] ) $
for arbitrarily small $\epsilon$ with the boundary $\partial C_\epsilon = X_{v+\epsilon} \sqcup X_{v-\epsilon}  $. 
By taking $\epsilon$ small enough, we can cover $C_\epsilon$ by two open 
sets $U$ and $V$ with the following properties. 

The $U$ is a small neighborhood of the critical point $p$. It has the coordinate system in which $f$ and $\xi$ are given by 
\beq
f &= v - \sum_{i=1}^{\lambda} (x^i)^2 + \sum_{i = \lambda+1}^{d+1} (x^i)^2. \label{eq: normalform1} \\
\xi &=  - \sum_{i=1}^{\lambda} x^i \partial_i + \sum_{i = \lambda+1}^{d+1} x^i \partial_i.   \label{eq: normalform2}
\eeq
In the following, we focus on the cases $1 \leq \lambda \leq d$,
but the cases $\lambda=0,d+1$ can be treated much more easily.
In this coordinate system, $U$ is explicitly defined as
\beq
U=  A_\epsilon \cap B_{2\epsilon}
\eeq
where
\beq
A_\epsilon &= \left\{ -\epsilon \leq - \sum_{i=1}^{\lambda} (x^i)^2 + \sum_{i = \lambda+1}^{d+1} (x^i)^2 \leq \epsilon  \right\} \\
B_{2\epsilon} &=  \left\{  \sum_{i=1}^{d} (x^i)^2 < 2\epsilon  \right\}. \label{eq:smallball}
\eeq
The $A_\epsilon$ means the condition $v-\epsilon \leq f \leq v+\epsilon $, while $B_{2\epsilon}$ just means that
$U$ is a small enough neighborhood of $p$. There is no particular meaning to the value $2\epsilon$ used in $B_{2\epsilon}$. 
The following discussions are unchanged as long as it is larger than $\epsilon$.

The solutions of \eqref{eq:diffeq} which flow to or from the critical point $p$ are contained entirely in the region $U$, 
and are given by $D^\lambda_L \cup D^{d+1-\lambda}_R$, where
\beq
D_L^{\lambda} &= \{ x^{\lambda+1}= \cdots = x^{d+1}=0 \},  \\
D_R^{d+1-\lambda} &=  \{ x^{1}= \cdots = x^{\lambda}=0 \} .
\eeq

Next, we define the open set $V$ as $C_{\epsilon} \setminus D_L^{\lambda} \cup D_R^{d+1-\lambda}$.
It is diffeomorphic to $[v - \epsilon, v+\epsilon] \times W$ for an open set $W = X_{v - \epsilon} \setminus ( X_{v - \epsilon} \cap D_L^{\lambda})$ , where the diffeomorphism is
given by the same argument as above by using \eqref{eq:diffeq}.

In particular, let
\beq
S_L^{\lambda-1} &= D_L^{\lambda} \cap X_{v-\epsilon}, \\
S_R^{d-\lambda} & = D_R^{d+1-\lambda} \cap X_{v+\epsilon}.
\eeq
They are diffeomorphic to a sphere.
Then, the $X_{v+\epsilon} \setminus S_R^{d-\lambda}$ is diffeomorphic to $W=X_{v-\epsilon} \setminus S_L^{\lambda-1} $ 
since $C_\epsilon \setminus D_L \cup D_R \cong [v - \epsilon, v+\epsilon] \times W$.
This suggests (which we will discuss more explicitly later) that $X_{v+\epsilon}$ is obtained from $X_{v-\epsilon}$ by the following surgery procedure.
From $X_{v-\epsilon}$, we eliminate a tubular neighborhood of $S_L^{\lambda-1}$. 
This tubular neighborhood is diffeomorphic to $S^{\lambda-1} \times D^{d+1 - \lambda} $ as can be seen by using the above explicit coordinate system.
The manifold obtained by eliminating this tubular neighborhood of $S_L^{\lambda-1}$ from $X_{v-\epsilon}$
has the boundary diffeomorphic to $S^{\lambda-1} \times S^{d - \lambda}$. Then we glue
$D^{\lambda} \times S^{d - \lambda}$ to this boundary. The result is the $X_{v+\epsilon}$, and 
the $D^{\lambda} \times S^{d - \lambda}$ is the tubular neighborhood of $S_R^{d-\lambda}$.

Let us study it in more detail.
The space $X_v=f^{-1}(v)$ for the critical level $v=f(p)$ is singular at $p$ because $df(p)=0$.
We can  avoid this singularity slightly above $p$ or below $p$, where ``above" and ``below" are in terms of the function $f$.
We denote these submanifolds as $X_+$ and $X_-$, respectively. More explicitly, we may take them as follows.
Consider a small disk $D_{\epsilon/2}^{d+1}$ around $p$ which is described by using the above explicit coordinate system on $U$ as
\beq
D_{\epsilon/2}^{d+1} = \left\{  \sum_{i=1}^{d+1} (x^i)^2 \leq \epsilon/2  \right\}.
\eeq
We also denote the boundary as $\partial D_{\epsilon/2}^{d+1} = S_{\epsilon/2}^d$.
The intersection of this sphere $S_{\epsilon/2}^d$ and $f^{-1}(v)$ is given by
\beq
S_{\epsilon/2}^d \cap f^{-1}(v) =  \left\{ \sum_{i=1}^{\lambda} (x^i)^2 = \sum_{i = \lambda+1}^{d+1} (x^i)^2 = \frac{\epsilon}{4}  \right\}
\eeq
which is diffeomorphic to $S^{\lambda-1} \times S^{d-\lambda}$. As explained around \eqref{eq: divsphere},
this intersection divides the sphere $S_{\epsilon/2}^d$ into $X'_+ \cong D^{\lambda} \times S^{d-\lambda}$ and $X'_- \cong S^{\lambda-1} \times D^{d+1-\lambda}$.
On $X'_+$ we have $f \geq v$, while on $X'_-$ we have $f \leq v$. Let us also take
$
X''_v = f^{-1}(v) \setminus (D_{\epsilon/2}^{d+1} )^\circ
$
where $^\circ$ means the interior. Now we have two subspaces $ X'_+ \cup X''_v$ and $X'_- \cup X''_v$ which are piecewise smooth.
By deforming $X'_\pm$ so that they become smooth, we get $X_+$ and $X_-$. 
See Figure~\ref{fig:critical} for the case $d+1=2$ and $\lambda=1$.
\begin{figure}
\centering
\includegraphics[width=.95\textwidth]{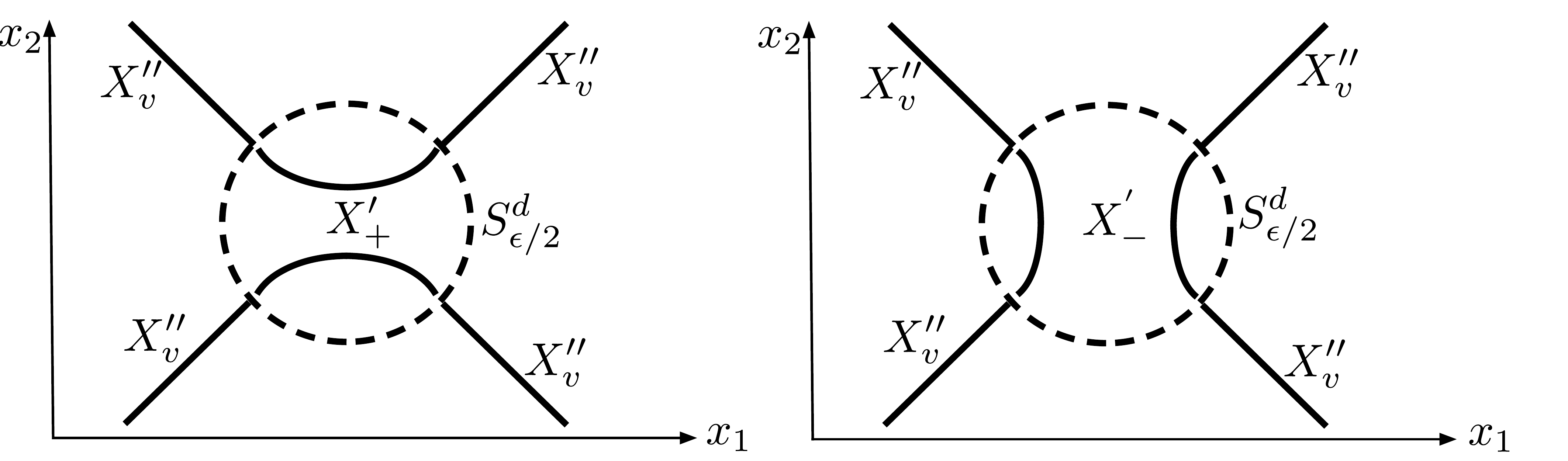}
\caption{
The situation near the critical point for the case $d+1=2$ and $\lambda=1$.
The dashed line represents the sphere $S_{\epsilon/2}^d$. The solid lines outside of $S_{\epsilon/2}^d$
represent $X''_v = f^{-1}(v) \setminus (D_{\epsilon/2}^{d+1} )^\circ$ where $f^{-1}(v)$ is just $x_1 =\pm x_2$.
The solid lines inside the sphere represent $X'_+ \cong D^{1} \times S^{0}$ (Left) and $X'_- \cong S^{0} \times D^{1}$ (Right).
The $X'_\pm$ are taken in such a way that they are smoothly glued to $X''_v$. \label{fig:critical}
}
\end{figure}

By an argument based on the flow equation \eqref{eq:diffeq} with a slight modification, 
the region between $X_{v+\epsilon}$ and $X_{+}$ is diffeomorphic to $[0,1] \times X_{v+\epsilon}$.
In particular, $X_{+}$ is a continuous deformation of $X_{v+\epsilon}$ as an $H_d$-manifold and hence $Z(X_+)=Z(X_{v+\epsilon})$.
In the same way, we get $Z(X_-)=Z(X_{v-\epsilon})$.
 So we only need to study the relation between $X_-$ and $X_+$.

The $X_+$ is obtained from $X_-$ by eliminating $X'_- \cong S^{\lambda-1} \times D^{d+1-\lambda}$ and gluing $X'_+ \cong D^{\lambda} \times S^{d-\lambda}$.
This is the surgery operation mentioned above. This is true including the $H_d$-structure defined on these submanifolds induced from the one on $C$
by using the upward normal vector.

Among the partition functions, we have the relations
\beq
Z(X_+) &= Z(X''_v) Z(X'_+) = Z(X_-) Z(X'_-)^{-1}    Z(X'_+) \nonumber \\
& =Z(X_-) Z( \overline{^tX'_-} \cdot X'_- )^{-1} Z( \overline{^tX'_-} \cdot X'_+  ) .
\eeq
Here, for a given bordism $X: \varnothing \to {Y} $, the notation $\overline{^tX} $ is the one introduced in Remark~\ref{rem:transp}
which is a bordism $ Y \to \varnothing$.
The $X'_\pm$ are both regarded as a bordism of the form $ \varnothing \to Y$.

Now, $\overline{^tX'_-} \cdot X'_+ \cong S^d$ and $\overline{^tX'_-} \cdot X'_- \cong S^{\lambda-1} \times S^{d+1-\lambda}$ as manifolds.
Moreover, the $H_d$-structure on them is trivial by the following reason. 
The submanifolds $X'_\pm$ are all inside the ball $B_{2\epsilon}$ given by \eqref{eq:smallball},
which is contractible to a point. Then the $H_{d+1}$-bundle $P$ can be trivialized in $B_{2\epsilon}$,
and the $H_d$-structure $\varphi_{d+1}: P \to TC$ can be trivialized inside $B_{2\epsilon}$ up to continuous deformation.
Then $X'_\pm$ are just $H_d$-submanifolds of the trivial $H_{d+1}$-manifold. In particular the $K$-part of the 
$(\Spin(d+1) \times K)/\langle (-1,k_0) \rangle$-bundle
is completely trivialized.
The only subtlety is the spin structure when $S^1$ appears.
(i) When $S^1$ appears in $\overline{^tX'_-} \cdot X'_+  \cong S^d$ with $d=1$,
gluing evaluation $e_{Y}$ to $\overline{X'_-} \sqcup X'_+$
on the two dimensional plane in Figure~\ref{fig:critical} shows that it is isomorphic to $S_{\epsilon/2}^d$ which is the boundary of $D_{\epsilon/2}^{d+1}$
and hence it is anti-periodic. 
(Alternatively it can be shown by the argument in the proof of Lemma~\ref{lem:gluing} in the next section.) 
(ii) When $S^1$ appears as the $S^1$ in 
${X'_-} \cong S^1 \times D^{d-1}$ with $\lambda=2$, the $S^1$ is the boundary of the disk $D^2$ which appears in $X'_+=D^2 \times S^{d-2}$.
(iii) When $S^1$ is realized in $\overline{^tX'_-} \cdot X'_-$ from $D^1$ in $X'_- \cong S^{d-1} \times D^1$ with $\lambda=d$, the $S^1$ is constructed 
as a part of the double $\overline{^tX'_-} \cdot X'_-   = \Delta X'_-$ and hence has the anti-periodic spin structure by Remark~\ref{rem:anti}. 
In all the cases, $S^1$ has the anti-periodic spin structure.

By Lemma~\ref{lem:sphere2}, we have 
\beq
Z( \overline{^tX'_-} \cdot X'_- )&=Z(S^{\lambda-1} \times S^{d+1-\lambda})=1 \\
Z( \overline{^tX'_-} \cdot X'_+  ) &= Z(S^d)=1,
\eeq
automatically for odd $d$ and by the appropriate choice of the Euler term for even $d$.
Therefore, we get $Z(X_+) = Z(X_-) $. This completes the proof.
\end{proof}
 
\begin{cor}\label{thm:unit}
The partition function of an invertible TQFT takes values in $\U(1)$ automatically for odd $d$ and if the Euler term is chosen such that $Z(S^d)=1$ for even $d$,
and hence the partition function is an element of ${\rm Hom}(\Omega^H_d , \U(1))$.
\end{cor}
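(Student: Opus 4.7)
The plan is to combine the bordism invariance established in Theorem~\ref{thm:inv} with the equivariance property required of a unitary TQFT, and then verify that $Z$ is multiplicative on disjoint unions. Theorem~\ref{thm:inv} already gives that $Z$ is constant on bordism classes (with the stated choice of Euler term when $d$ is even), so $Z$ descends to a well-defined map from $\Omega^H_d$ to $\BC$. Multiplicativity $Z(X\sqcup X')=Z(X)Z(X')$ follows from $F$ being a symmetric monoidal functor, and $Z(\varnothing)=1$ because $\varnothing$ acts as the identity morphism of $F(\varnothing)\cong\BC$. Once the image is shown to lie in $\U(1)$, this automatically furnishes an element of $\mathrm{Hom}(\Omega^H_d,\U(1))$, since any monoid homomorphism from a group into $\BC^\times$ is a group homomorphism.

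The crux is therefore the unitarity bound $|Z(X)|=1$. The natural strategy is to exhibit an explicit nullbordism of $X\sqcup\overline{X}$ for every closed $H_d$-manifold $X$: take the cylinder $[0,1]\times X$ equipped with the product $H_{d+1}$-structure built from $X$ as described in Section~\ref{sec:Hd}. Its boundary consists of two copies of $X$: at $t=1$ the outward normal is $+\partial_t$, so the induced $H_d$-structure reproduces $X$ itself, while at $t=0$ the outward normal is $-\partial_t$, and by the very definition of the opposite $H_d$-structure this boundary component is canonically isomorphic to $\overline{X}$. Regarded as a bordism $\varnothing\to X\sqcup\overline{X}$, the cylinder shows that $X\sqcup\overline{X}$ is trivial in $\Omega^H_d$.

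Applying Theorem~\ref{thm:inv} then gives $Z(X\sqcup\overline{X})=Z(\varnothing)=1$. Combined with multiplicativity and the equivariance identity $Z(\overline{X})=\overline{Z(X)}$ (the specialization of $F(\overline{X})=\overline{F(X)}$ to closed $X$, using the canonical identifications $F(\varnothing)\cong\BC$ and $F(\overline{\varnothing})\cong\overline{F(\varnothing)}=\overline{\BC}\cong\BC$), one obtains $|Z(X)|^2=Z(X)\overline{Z(X)}=Z(X)Z(\overline{X})=Z(X\sqcup\overline{X})=1$, hence $Z(X)\in\U(1)$. I do not anticipate a substantive obstacle, since the heavy lifting is already done in Theorem~\ref{thm:inv}; the only point requiring mild care is the verification that the cylinder realizes the asserted bordism between $\varnothing$ and $X\sqcup\overline{X}$, which is immediate from the outward-normal conventions for inducing $H_d$-structures reviewed in Section~\ref{sec:Hd}.
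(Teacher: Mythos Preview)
The proposal is correct and takes essentially the same approach as the paper: both use the cylinder $[0,1]\times X$ as a bordism from $\varnothing$ to $X\sqcup\overline{X}$, invoke Theorem~\ref{thm:inv} for bordism invariance, and combine $Z(X)Z(\overline{X})=1$ with the unitarity identity $Z(\overline{X})=\overline{Z(X)}$ to conclude $|Z(X)|=1$. Your version is somewhat more explicit about the auxiliary ingredients (multiplicativity, $Z(\varnothing)=1$, the outward-normal verification), but the argument is the same.
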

\begin{proof}
By Theorem~\ref{thm:inv}, the partition functions are invariant under bordism if $Z(S^d)=1$.
Now consider a closed $H_d$-manifold $X$ and regard $C = [0,1] \times X$ as a bordism from $\varnothing $ to $X \sqcup \overline{X}$. 
From the invariance under bordism, we have $Z(X)Z(\overline{X})=1$. 
On the other hand, from unitarity of TQFT, we have $Z(\overline{X})=\overline{Z(X)}$. 
Combining them, we get $|Z(X)|=1$.
\end{proof}

\begin{rem}
If we drop the assumption of unitarity, there exist TQFTs whose partition functions are elements of ${\rm Hom}(\Omega^H_d, \BC^\times)$.
However, more nontrivially, there may be non-unitary counterexamples to the claim that the partition function is invariant under bordisms.
Such an example was discussed in Sec.~11 of \cite{Freed:2016rqq}. Consider the case of $d=1$ and $H_{d=1}=\Spin(1)$.
On a point ${\rm pt}_\pm$ (where the subscript means the orientation), we have the Hilbert space $\CH({\rm pt}_\pm)$ with a unique state vector $v_\pm \in \CH({\rm pt}_\pm)$.
If the $\BZ_2$-grading of the vector $v_\pm$ in the super vector space category ${\rm sVect}_\BC$ is assigned in such a way that it does not agree with the action of $(-1)^F$ on $v$,
the topological spin-statistics theorem is violated. Such a violation is allowed in non-unitary theory.
In this case we get $Z(S^1_{\rm A}) = -1$ by using the results of Sec.~\ref{sec:review}. 
However, $S^1_{\rm A} $ is zero as an element of $\Omega^{\Spin}_1$ since it is the boundary of a disk $D^2$.
Therefore, unitarity may be essential for Theorem~\ref{thm:inv}
\end{rem}

\section{Construction of TQFT from cobordism invariant}\label{sec:explicit}

In this section, we construct a TQFT from a given cobordism invariant $z$ which takes values in $\U(1)$.
This is a function from the class of closed $H_d$-manifolds  $X$ to $\U(1)$, $z: X \mapsto z(X) \in \U(1)$,
such that $z(X_0) = z(X_1)$ if there exists a $d+1$-dimensional bordism $C$ from $X_0$ to $X_1$.
Two such $H_d$-manifolds $X_0$ and $X_1$ are called bordant. 
The identification of two bordant manifolds gives the bordism group $\Omega_d^{H}$ in Definition~\ref{defi:bor},
on which the group muptiplication is given by the disjoint union $\sqcup$, the unit is given by $\varnothing$, and the inverse of $X$ is given by $\overline{X}$. 
Then $z$ is required to be a homomorphism $z: \Omega_d^{H} \to \U(1)$.
In particular $z(X \sqcup X')=z(X)z(X')$, $z(\varnothing)=1$ and $z(\overline{X} ) = \overline{z(X)}$.

\subsection{Cutting and gluing law and reflection positivity}
First, we prove two fundamental lemmas which are the essential ingredients of the construction of TQFT.
The one is the reflection positivity and the other is the cutting and gluing law. 
The derivations of them (in particular the reflection positivity) have been already sketched in \cite{Witten:2015aba}.

\begin{lem}\label{lem:reflection}
{\rm (Reflection positivity)} 
Let $X$ be a compact $H_d$-manifold with boundary $Y$, regarded as a bordism $\varnothing \to Y$, and 
let $\Delta X $ be the double of $X$. 
Then $\Delta X$ is bordant to the empty $H_d$-manifold and in particular $z( \Delta X)=1$. 
\end{lem}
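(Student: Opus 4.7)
The plan is to exhibit an explicit $(d+1)$-dimensional null-bordism of $\Delta X$, from which $z(\Delta X)=z(\varnothing)=1$ follows immediately from the definition of a cobordism invariant. The natural candidate is the cylinder $W=[0,1]\times X$ equipped with the product $H_{d+1}$-structure induced from the $H_d$-structure on $X$ in the manner described around \eqref{eq:tripleproduct}. Because $X$ has boundary $Y$, $W$ is not a smooth manifold with boundary but a manifold with corners along $\{0,1\}\times Y$; its boundary stratum decomposes as $(\{0\}\times X)\cup(\{1\}\times X)\cup([0,1]\times Y)$.

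First I would round the corners along $\{0,1\}\times Y$, replacing a small product neighborhood of each corner component by a quarter-disk times $Y$. Since the product $H_{d+1}$-structure near the corner is trivial in the two normal directions, this smoothing can be performed compatibly with the $H_{d+1}$-structure and so as to respect the collar condition of Definition~\ref{defi:bordism}. The result is a smooth compact $(d+1)$-dimensional $H_{d+1}$-manifold $\widetilde{W}$ with smooth boundary.

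The key step is to identify $\partial\widetilde{W}$ with $\Delta X$ as an $H_d$-manifold. The face at $t=1$, with outward normal $+\partial_t$, inherits the original $H_d$-structure and is therefore canonically isomorphic to $X$; the face at $t=0$, with outward normal $-\partial_t$, inherits by the very definition of the opposite $H_d$-structure given in Section~\ref{sec:Hd} exactly the structure of $\overline{X}$; and the intermediate piece coming from the smoothing of $[0,1]\times Y$ is diffeomorphic to $I_Y=[0,1]\times Y$ with the product $H_d$-structure. Assembling these three pieces along their common boundaries $Y\sqcup\overline{Y}$ reproduces, by the composition rule for bordisms and Definition~\ref{defi:evcoev}, precisely $\Delta X=e_Y\cdot(X\sqcup\overline{X})$.

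The main obstacle is the careful bookkeeping of $H$-structures at the corner: one must check that the induced structure at $t=0$ really is $\overline{X}$ (which follows from the definition of the opposite structure applied to the outward normal $-e_0$, together with the subtle $(-1)^F$ discussion of Section~\ref{sec:subtle}), and that the rounding preserves enough product structure near $[0,1]\times Y$ to match it with the cylinder $I_Y$ used inside $e_Y$. Once these routine verifications are in place, $\widetilde{W}$ is a bordism $\Delta X\to\varnothing$, so $[\Delta X]=0$ in $\Omega^H_d$ and hence $z(\Delta X)=1$.
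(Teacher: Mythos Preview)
Your proposal is correct and follows essentially the same route as the paper: you take $[0,1]\times X$ as a manifold with corners, smooth the corners, and identify the resulting boundary with $\Delta X = e_Y\cdot(X\sqcup\overline{X})$, so that $\Delta X$ is null-bordant and $z(\Delta X)=1$. Your write-up is in fact somewhat more detailed about the $H$-structure bookkeeping at the corners than the paper's own proof.
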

\begin{proof}
Consider $C'=[0,1] \times X$. This is a manifold with corner, and the boundary consists of 
$X$, $\overline{X}$ and $[0,1] \times  Y$. By making the corner smooth, the $C'$ is deformed to a smooth manifold $C$
whose boundary is isomorphic to the closed $H_d$-manifold obtained by composing 
$X \sqcup \overline{X}$ (regarded as a bordism $\varnothing \to Y \sqcup \overline{Y}$)
and $[0,1] \times Y$ (regarded as a bordism $ Y \sqcup \overline{Y} \to \varnothing $). This is the double $\Delta X$ by definition.
This shows that $\Delta X \cong \partial C$, i.e., it is bordant to the empty manifold, and hence $z(\Delta X)=1$.
\end{proof}

Before going to the next lemma, we define the fermion parity ${\rm fp}(Y) \in \BZ_2$ of a closed $H_{d-1}$-manifold $Y$ as follows.
Consider $S^1_{\rm P} \times Y$ where $S^1_{\rm P}$ is a copy of $S^1$ with the periodic spin structure. 
This manifold has the property that $\overline{S^1_{\rm P} \times Y} \cong S^1_{\rm P} \times Y$
which is shown by ``flipping the direction of $S^1_{\rm P}$''. Then $z(S^1_{\rm P} \times Y)$ must be real, and hence it is $\pm 1$.
We define the fermion parity of $Y$ as
\beq
(-1)^{{\rm fp}(Y)} : = z(S^1_{\rm P} \times Y). \label{eq:fpdef}
\eeq
If two $H_{d-1}$-manifolds $Y_0$ and $Y_1$ are bordant as $X: Y_0 \to Y_1$, then their fermion parities are the same
because $S^1_{\rm P} \times Y_0$ and $S^1_{\rm P} \times Y_1$ are bordant by $S^1_{\rm P} \times X$ and $z$ is a cobordism invariant. 

\begin{lem}\label{lem:gluing}
 {\rm (Cutting and gluing law)} Let $X$ be a compact $H_d$-manifold whose boundary is given by
\beq
\partial X = Y_1 \sqcup Y_2 \sqcup  {Y_3} \sqcup {Y_4},
\eeq
where each of $Y_a~(a=1,2,3,4)$ is isomorphic to a closed $H_{d-1}$-manifold $Y$ when $Y_1, Y_2$ are given the $H_{d-1}$-structures by using the outward normal vector
while $Y_3, Y_4$ are given the $H_{d-1}$-structures by using the inward normal vector.
Let $X_{(13)(24)}$ be the closed $H_d$-manifold obtained by gluing the pair $Y_1 $ and $ {Y_3} $ and the pair 
$Y_2 $ and ${Y_4} $. 
Also, define $X_{(14)(23)}$ in the same way by gluing the pair $(Y_1, {Y_4})$ and the pair 
$(Y_2, {Y_3})$. Then, 
\beq
z(X_{(13)(24)}  ) = (-1)^{{\rm fp}(Y)}  z(X_{(14)(23)}  ).
\eeq
\end{lem}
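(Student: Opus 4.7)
The plan is to exhibit a $(d{+}1)$-dimensional bordism $\beta$ realising the relation $[X_{(13)(24)}] = [X_{(14)(23)}] + [S^1_{\rm P} \times Y]$ in $\Omega_d^H$, and then apply the cobordism invariance of $z$ together with the defining equation $(-1)^{{\rm fp}(Y)} = z(S^1_{\rm P} \times Y)$ from \eqref{eq:fpdef}.

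I would build $\beta$ from the pillow $W = [0,1] \times X$, a $(d{+}1)$-dimensional $H$-manifold with corners whose stratified boundary consists of the two $d$-dimensional faces $\{0\} \times X$ and $\{1\} \times X$ together with the four side cylinders $C_i := [0,1] \times Y_i$, meeting along the corners $\{0,1\} \times Y_i$. On the face $\{0\} \times X$ I impose the identifications $\{0\} \times Y_1 \sim \{0\} \times Y_3$ and $\{0\} \times Y_2 \sim \{0\} \times Y_4$, turning that face into a copy of $X_{(13)(24)}$; on $\{1\} \times X$ I impose $\{1\} \times Y_1 \sim \{1\} \times Y_4$ and $\{1\} \times Y_2 \sim \{1\} \times Y_3$, producing $X_{(14)(23)}$. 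Each identification simultaneously joins the bottom (resp.\ top) ends of the corresponding pair of side cylinders, and following the induced chain shows that the four side cylinders fuse into a single 4-cycle $C_1 \to C_3 \to C_2 \to C_4 \to C_1$, giving a closed $d$-dimensional tube $T \cong S^1 \times Y$. After smoothing the right-angle corners, $\beta$ is a smooth $(d{+}1)$-dimensional $H$-bordism with $\partial \beta$ equal, up to opposite-$H_d$-structure conventions on outgoing faces, to $X_{(13)(24)} \sqcup X_{(14)(23)} \sqcup T$.

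The delicate step is to verify that $T$ carries the periodic (non-bounding) spin structure, i.e.\ $T \cong S^1_{\rm P} \times Y$ rather than $S^1_{\rm A} \times Y$. Traversing the 4-cycle passes through four smoothed right-angle corners of $W$; at each one the $H_{d+1}$-framing is rotated by $\pi/2$ in the 2-plane spanned by the two boundary tangent directions of the pillow at that corner. Since all four turns occur in the same rotational sense along the loop, the cumulative rotation equals $2\pi$, whose lift in $\Spin(d{+}1)$ is $-1 = (-1)^F$. Concretely, at each corner one may insert the rotation element $r \in H_{d+1}$ from Section~\ref{sec:review} and check that the ordered product around the loop equals $(-1)^F$, which by the argument of Remark~\ref{rem:anti} is precisely the monodromy distinguishing $S^1_{\rm P}$ from $S^1_{\rm A}$. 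Once this is confirmed, the bordism invariance of $z$ on $\partial \beta$ together with $z(S^1_{\rm P} \times Y) = (-1)^{{\rm fp}(Y)} \in \{\pm 1\}$ yields the desired equality $z(X_{(13)(24)}) = (-1)^{{\rm fp}(Y)} z(X_{(14)(23)})$. The main obstacle is this spin-structure computation on $T$: one must track the four corner contributions carefully enough to confirm the framing winds by an odd multiple of $2\pi$ (not an even one, which would trivialise the sign), and ensure that the smoothings at the four corners can be chosen consistently so that $\beta$ is a genuine smooth $H$-bordism.
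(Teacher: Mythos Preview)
Your overall strategy—exhibit a $(d{+}1)$-bordism witnessing $[X_{(13)(24)}]=[X_{(14)(23)}]+[S^1_{\rm P}\times Y]$ in $\Omega_d^H$—is sound, and this relation is precisely what the paper ends up establishing. But your specific construction of $\beta$ does not produce a manifold. In $W=[0,1]\times X$ the loci $\{0\}\times Y_i$ and $\{1\}\times Y_i$ are codimension-two \emph{corner} strata, not codimension-one faces. Identifying $\{0\}\times Y_1$ with $\{0\}\times Y_3$ glues two quadrant neighbourhoods $[0,1]_t\times(-\epsilon,0]_s\times Y$ and $[0,1]_t\times[0,\epsilon)_{s'}\times Y$ only along the stratum $\{t=0,\,s=s'=0\}\times Y$; the result is a non-manifold wedge that ordinary corner-rounding cannot repair. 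This is already visible in the two-dimensional test case $X=I_1\sqcup I_2$: your $\beta$ is two rectangles with four pairs of vertices identified, and each identified vertex is a genuine singular point (two quarter-planes meeting at a single point). Since $\beta$ is not a manifold, the subsequent framing count on the tube $T$ has no foundation.

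The paper proceeds differently and avoids this issue. It inserts the local two-dimensional saddle $C=\{-1\le -x^2+y^2\le 1,\ x^2+y^2<2\}$, crossed with $Y$, into $[-1,1]\times X$ along the boundary collars; this is an honest smooth bordism from $X_{(13)(24)}$ to $X_{(14)(23)}$ up to an a priori undetermined $(-1)^F$ in one of the outgoing gluings. That $(-1)^F$ is then pinned down by specialising to $X=I_1\sqcup I_2$, where the saddle becomes a pair of pants and the constraint $s_1s_2s_3=-1$ on the spin structures of its three boundary circles forces exactly one twist. A second saddle move, applied to $X_{(14)}\sqcup([0,1]\times Y)$, converts the twisted gluing into a disjoint $S^1_{\rm P}\times Y$. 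If you want to rescue a one-step pillow picture, the corner identifications must be replaced by genuine handle attachments—essentially one copy of the paper's saddle $C\times Y$ per pair of corners—after which your rotation count on $T$ can be made rigorous; but at that point you are reproducing the paper's argument handle by handle.
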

\begin{proof}
A neighborhood of the boundary of $X$ is isomorphic to
\beq
 (J_1 \sqcup J_2  \sqcup J_3  \sqcup J_4)  \times Y,
\eeq
where $J_a~(a=1,2)$ are isomorphic to $(-\epsilon,0] $ and $J_b~(b=3,4)$ are isomorphic to $[0,\epsilon)$, and the product $H_d$-structure is given 
on  $(J_1 \sqcup J_2  \sqcup J_3  \sqcup J_4)  \times Y$. 
Gluing $J_a~(a=1,2)$ and $~J_b~(b=3,4)$ at $\{0\}$ gives $K_{ab} \cong (-\epsilon,\epsilon) $.

For the moment, let us neglect the issue of the spin structure.
Then a bordism from $K_{13} \sqcup K_{24} $ to $K_{14} \sqcup K_{23}$ is simply given (up to some metric deformation) by 
\beq
C = \left\{ (x,y) \in \BR^2; ~ -1 \leq -x^2+y^2  \leq 1 ,~~x^2+y^2 <2 \right\}
\eeq
where $K_{13} \sqcup K_{24} $ is given by $-x^2+y^2=-1$ and $K_{14} \sqcup K_{23}$ is given by $-x^2+y^2=+1$.
The situation is analogous to that of Figure~\ref{fig:critical} under some minor reinterpretation. Take the Morse function as $f = -x^2+y^2$.
Consider $C \times Y$ and take the union $\cup$ with $[-1,1] \times X$ in the straightforward way, where $[-1,1]$ is parametrized by $f$.
Then we get the bordism from $X_{(13)(24)}   $ to $X_{(14)(23)} $ up to a possible inclusion of $(-1)^F$ in the gluing process.

The only subtlety that can arise in the above argument is the spin structure on $K_{13} \sqcup K_{24} $ and $K_{14} \sqcup K_{23}$
associated to the inclusion of $(-1)^F$. This bordism is independent of $X$, so
we can see what factors of $(-1)^F$ are necessary by taking a simple example of $X$. Consider the case $X=I_1\sqcup I_2$,
where both $I_1$ and $I_2$ are isomorphic to $[0,1]$, and regard $Y_1$ and $Y_4$ as the boundary components of $I_1$
and $Y_2$ and $Y_3$ as the boundary components of $I_2$. 
Then $X_{(13)(24)}$ is isomorphic to $S^1$.
After the above bordism, the $X_{(13)(24)}$ goes to $X_{(14)(34)}$ which is isomorphic to the disjoint union of two circles $S^1 \sqcup S^1$.
Thus this bordism is the ``pair of pants" bordism from $S^1$ to $S^1 \sqcup S^1$; see Figure~\ref{fig:pants}.
In that bordism, there is a topological constraint on the spin structures of the three $S^1$'s.
By representing the periodic and anti-periodic spin structure as $s=\pm 1$ respectively, the three spin structures $s_1$, $s_2$ and $s_3$
associated to the three $S^1$'s are constrained as $s_1s_2s_3=-1$. If
we take the initial spin structure of $X_{(13)(24)}$ to be the periodic spin structure $S^1_{\rm P}$,
the above pair of pants bordism must be $S^1_{\rm P} \to S^1_{\rm P} \sqcup S^1_{\rm A}$
where $S^1_{\rm A}$ is the $S^1$ with the anti-periodic spin structure.

\begin{figure}
\centering
\includegraphics[width=.5\textwidth]{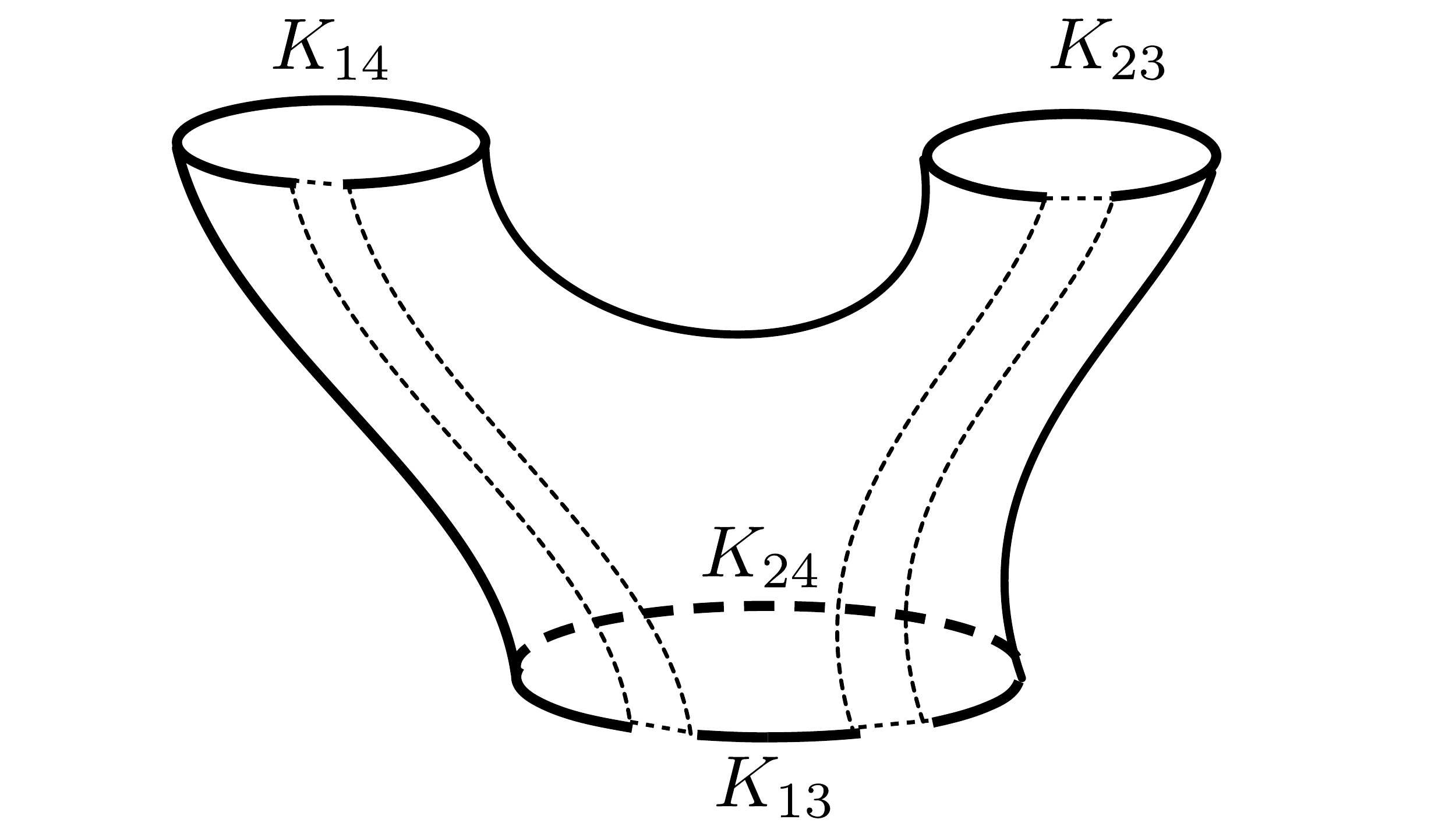}
\caption{
The pair of pants bordism from $S^1$ to $S^1 \sqcup S^1$. By removing the small strips inside the regions bounded by the thin dotted lines,
we get the bordism from $K_{13} \sqcup K_{24} $ to $K_{14} \sqcup K_{23}$.
\label{fig:pants}}
\end{figure}

Therefore, we get the following conclusion. In the bordism from $K_{13} \sqcup K_{24} $ to $K_{14} \sqcup K_{23}$,
if $K_{13} \sqcup K_{24} $ is given the trivial spin structures, then one of $K_{14}$ or $K_{23}$ (let's say $K_{23}$) must have
additional $(-1)^F$. Therefore, 
$X_{(13)(24)}  $ is bordant to $X'_{(14)(23)}  $, where $X'_{(14)(23)} $ is obtained from $X $ in almost the same way as $X_{(14)(23)} $
except for the fact that we include the action of $(-1)^F$ when $Y_2$ and $Y_3$ are glued. 

Let $X_{(14)}$ be the manifold which is obtained from $X$ by gluing $Y_1$ and $Y_4$ only.
Then $X_{(14)} \sqcup ([0,1] \times Y)$ has the boundary isomorphic to the disjoint union of four copies of $Y$.
By using the same kind of bordism as above applied to $X_{(14)} \sqcup ([0,1] \times Y)$, 
we can show that $X'_{(14)(23)} $ is bordant to $X_{(14)(23)} \sqcup (S^1_{\rm P} \times Y )$.
Therefore, we conclude that $X_{(13)(24)}$ is bordant to $X_{(14)(23)} \sqcup (S^1_{\rm P} \times Y)$ and hence 
\beq
z(X_{(13)(24)}  ) =   z(S^1_{\rm P} \times Y)z(X_{(14)(23)}  ).
\eeq
By definition, $(-1)^{{\rm fp}(Y)}  = z(S^1_{\rm P} \times Y)$. This completes the proof.
\end{proof}

The cutting and gluing law means the following.
Given a closed $H_d$-manifold $X$, we cut it at two places along submanifolds which are both isomorphic to $Y$.
Then, we glue them together in the different way from the start. The cobordism invariant is unchanged under this manipulation 
up to the sign factor $ (-1)^{{\rm fp}(Y)} $.  See Figure~\ref{fig:cut} for the situation.

\begin{figure}
\centering
\includegraphics[width=0.9\textwidth]{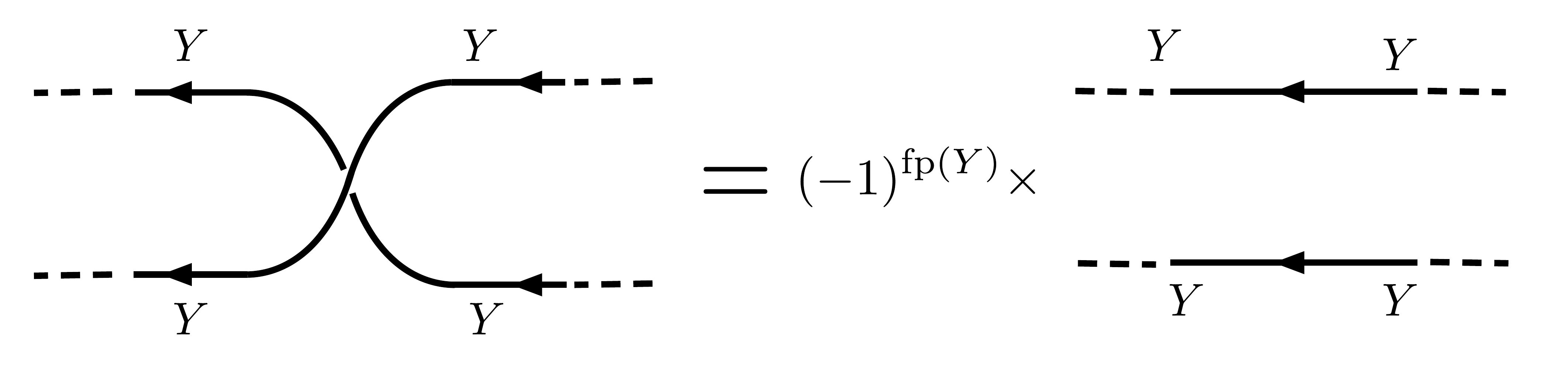}
\caption{
The implication of cutting and gluing law. The dashed lines are connected to the rest of the manifold.
The cobordism invariant $z$ is unchanged under the
process up to the sign $ (-1)^{{\rm fp}(Y)}$.
\label{fig:cut}}
\end{figure}

Roughly speaking, the cutting and gluing law corresponds to locality and the reflection positivity corresponds to unitarity.
However, the actual situation might be a bit more subtle. If we drop the reflection positivity, the 
construction of a TQFT given below fails at some points even if we do not care about unitarity. 

\subsection{Reconstruction theorem}
Now we construct TQFT. 
For simplicity, we assume the following. 
Consider the $d-1$-dimensional bordism group $\Omega^{H}_{d-1}$ which is one dimension lower than the $\Omega_d^H$ relevant for the partition function.
We assume that this is finitely generated. 
Then, we can pick up a (not canonical) isomorphism 
\beq
\Omega^{H}_{d-1}  \cong (\BZ)^k  \oplus  \BZ_{p_{k+1}} \oplus  \cdots \oplus \BZ_{p_{k+\ell}} ,
\eeq
and also pick up reference $H_{d-1}$-manifolds  
\beq
&Y^{\rm ref}_a~(a=1,\cdots,k)  \\
&Y^{\rm ref}_b ~(b=k+1,\cdots,k+\ell)
\eeq
which represent the generators of $\Omega^{ H}_{d-1}$.
The $Y^{\rm ref}_a~(a=1,\cdots,k)$ generate the free part $(\BZ)^k $ and the $Y^{\rm ref}_b ~(b=k+1,\cdots,k+\ell)$
generate the torsion part $\BZ_{p_{k+1}} \oplus  \cdots \oplus \BZ_{p_{k+\ell}}$. We call them as elementary reference manifolds.
The disjoint union of $p_b$ copies of $Y^{\rm ref}_b$ for $k+1 \leq b \leq k+\ell$ is bordant to empty, and we pick up a bordism 
$X_b^{\rm ref}$ from $\varnothing$ to the disjoint union of $p_b$ copies of $Y^{\rm ref}_b$.

Furthermore, for each element $w \in \Omega^{H}_{d-1}$, we pick up an explicit reference manifold $Y^{\rm ref}_w$
which consists of the disjoint union of copies of $Y^{\rm ref}_a, \overline{Y^{\rm ref}_a}~(1 \leq a \leq k) $ and $Y^{\rm ref}_b ~(b=k+1,\cdots,k+\ell)$ with
the following rules;
(i) Only the minimum number of them appears, i.e., $Y^{\rm ref}_b~(b=k+1,\cdots,k+\ell)$ only appears $w_{b}$ times with $0 \leq w_b \leq p_b-1$,
and only either $Y^{\rm ref}_a$ or $\overline{Y^{\rm ref}_a}$ appears for $1 \leq a \leq k$.
(ii) We introduce the ordering
\beq
Y^{\rm ref}_1 < \overline{Y^{\rm ref}_1} < Y^{\rm ref}_2 < \cdots < \overline{Y^{\rm ref}_k} < Y^{\rm ref}_{k+1}< \cdots < Y^{\rm ref}_{k+\ell} , \label{eq:ordering}
\eeq
and then oder the elementary reference manifolds in $Y^{\rm ref}_w$ according to the order.

The reason that we need to care about the ordering is the sign factor $(-1)^{{\rm fp}(Y)}$ in the cutting and gluing law as in Figure~\ref{fig:cut}.
Due to this sign factor, it is convenient to consider the following sign rule (Koszul sign rule) which will repeatedly appear in the proof of Theorem~\ref{thm:construction}.
Let $Y_1, \cdots, Y_m$ be $H_{d-1}$-manifolds, and let $\tau \in \mathfrak{S}_m$ be an element of the symmetric group of permutation of $m$ objects.
We can consider a bordism, which by abuse of notation we denote by $\tau$, as follows.
As a manifold, it is just $[0,1] \times (Y_1 \sqcup \cdots \sqcup Y_m) $. However, we take the boundary isomorphism at $\{1\}$ 
such that we regard it as a bordism,
\beq
\tau : Y_1 \sqcup \cdots \sqcup Y_m \to Y_{\tau(1)} \sqcup \cdots \sqcup Y_{\tau(m)}. \label{eq:permB}
\eeq
Now, the $\tau$ can be represented as composition of transpositions of the form $\tau_{A,B}$ in Figure~\ref{fig:exchange}. 
Let $n(\tau)$ be the number of times
transposition of two manifolds $A$ and $B$, both of which have odd fermion parity $(-1)^{{\rm fp}(A) }= (-1)^{{\rm fp} (B)  }  = -1$, appear.
The sign factor $(-1)^{n(\tau)}$ is independent of how we represent $\tau$ as the composition of transpositions, and
is determined only by $\tau$ and hence it is well-defined. This is the Koszul sign.
In particular, if $Y_1, \cdots, Y_m$ are all the same manifolds, then the effect of the bordism $\tau$ in $z$
is equivalent to $(-1)^{n(\tau)}$ times the identity bordism
due to the cutting and gluing law shown in Figure~\ref{fig:cut}.

Before going to Theorem~\ref{thm:construction},
let us also recall the following notation from section~\ref{sec:review}.
We have defined the evaluation $e_Y: Y \sqcup \overline{Y} \to \varnothing$ and the coevalutation $c_Y: \varnothing \to \overline{Y} \sqcup Y$ in Definition~\ref{defi:evcoev}.
Furthermore, we have ${c}_{\overline{Y}}: \varnothing \to Y \sqcup \overline{Y} $.
Notice that the composition $e_Y \cdot c_{\overline{Y}} $ is $S^1_A \times Y$ which coincides with the double $\Delta c_{\overline{Y}} $,
and hence 
\beq
z(e_Y \cdot c_{\overline{Y}} ) =1.
\eeq
More generally, if $X$ is a bordism from $Y_0$ to $Y_1$, then the composition
$e_{Y_1} \cdot (X \sqcup \overline{X}) \cdot {c}_{\overline{Y_0}}$ is actually identified with the double $\Delta( (X \sqcup 1_{\overline{Y_0}}) \cdot c_{\overline{Y_0}} )$,
where $(X \sqcup 1_{\overline{Y_0}}) \cdot c_{\overline{Y_0}}$ is a bordism $\varnothing \to Y_1 \sqcup \overline{Y_0}$.
See Figure~\ref{fig:XXbar} for the situation. Thus we have 
\beq
z(e_{Y_1} \cdot (X \sqcup \overline{X}) \cdot {c}_{\overline{Y_0}})=1. \label{eq:refl0}
\eeq
This is the consequence of the reflection positivity.
\begin{figure}
\centering
\includegraphics[width=.95\textwidth]{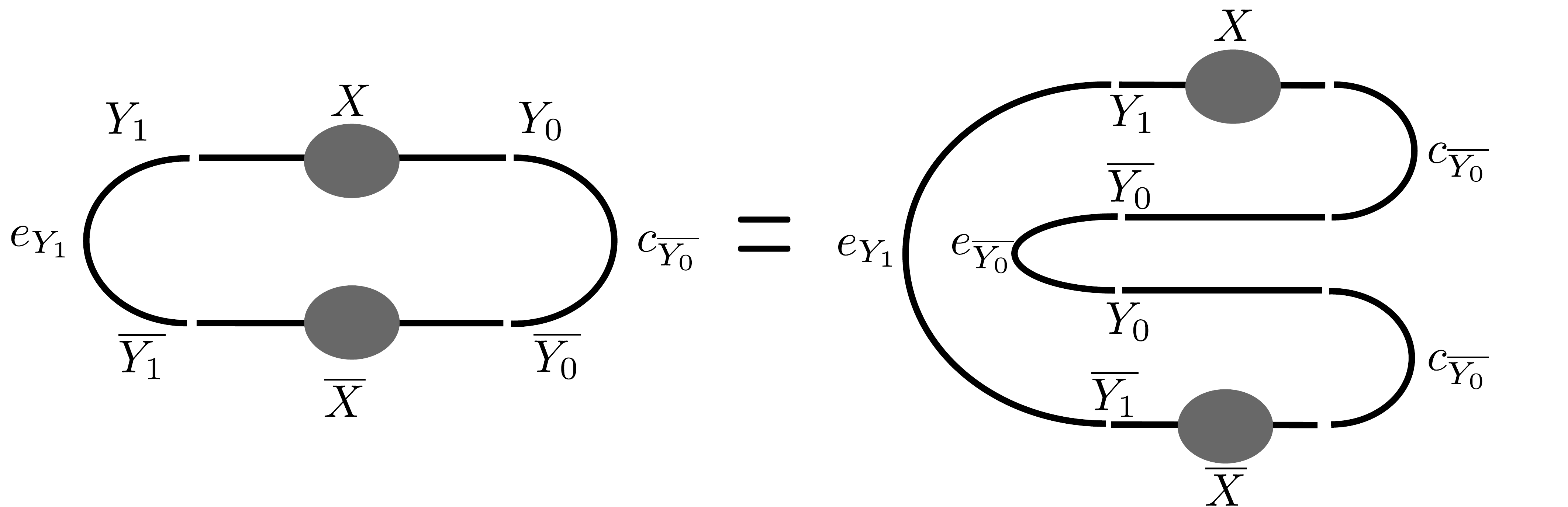}
\caption{The composition $e_{Y_1} \cdot (X \sqcup \overline{X}) \cdot {c}_{\overline{Y_0}}$ (left)
and the double $\Delta( (X \sqcup 1_{\overline{Y_0}}) \cdot c_{\overline{Y_0}} )$ (right). 
They are isomorphic. We have tacitly used Lemma~\ref{lem:anti} to represent $\Delta( (X \sqcup 1_{\overline{Y_0}}) \cdot c_{\overline{Y_0}} )$
as the right figure.  \label{fig:XXbar}}
\end{figure}

\begin{thm}\label{thm:construction}
Given a cobordism invariant $z  \in {\rm Hom}(\Omega^H_d, \U(1))$, there exists an invertible TQFT whose partition function is given by $z$,
assuming that $\Omega^{H}_{d-1}$ is finitely generated.
\end{thm}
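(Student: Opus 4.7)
The plan is to build an invertible unitary symmetric monoidal functor $F:{\rm Bord}_{\langle d-1,d\rangle}(H)\to{\rm sVect}_\BC$ whose partition function on closed $d$-manifolds recovers $z$. The finite generation of $\Omega^H_{d-1}$, together with the chosen elementary references $Y^{\rm ref}_a, Y^{\rm ref}_b$, the null-bordisms $X^{\rm ref}_b$, and the ordering in \eqref{eq:ordering}, gives a canonical representative $Y^{\rm ref}_{[Y]}$ for the bordism class of every closed $H_{d-1}$-manifold $Y$.

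First I would define the Hilbert spaces. For each closed $H_{d-1}$-manifold $Y$, pick a bordism $B_Y:Y^{\rm ref}_{[Y]}\to Y$ (taking $B_{Y^{\rm ref}_w}=1_{Y^{\rm ref}_w}$), declare $\CH(Y):=\BC$ as a one-dimensional super vector space with $\BZ_2$-grading $(-1)^{{\rm fp}(Y)}$, and fix a basis vector $u_Y$ attached to $B_Y$. The natural isomorphism $\CH(Y\sqcup Y')\xrightarrow{\sim}\CH(Y)\otimes\CH(Y')$ is read off from the permutation needed to rearrange $Y^{\rm ref}_{[Y]}\sqcup Y^{\rm ref}_{[Y']}$ into $Y^{\rm ref}_{[Y]+[Y']}$ according to \eqref{eq:ordering}; the discrepancy is the Koszul sign $(-1)^{n(\tau)}$, which is exactly what the symmetric monoidal structure of ${\rm sVect}_\BC$ prescribes. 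Next I would define $F(X)$ for a bordism $X:Y_0\to Y_1$: since $[Y_0]=[Y_1]=:w$, both $B_{Y_0}$ and $B_{Y_1}$ emanate from $Y^{\rm ref}_w$, so I form the composite $\widetilde X:=\overline{^tB_{Y_1}}\cdot X\cdot B_{Y_0}:Y^{\rm ref}_w\to Y^{\rm ref}_w$ and close its two reference ends against each other via $e_{Y^{\rm ref}_w}$ and $c_{\overline{Y^{\rm ref}_w}}$, obtaining a closed $H_d$-manifold $[\widetilde X]$. I set $F(X)(u_{Y_0}):=z([\widetilde X])\,u_{Y_1}$.

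The main checks are (i) functoriality under composition, (ii) compatibility with $\sqcup$ and with the symmetry $\tau_{Y,Y'}$, and (iii) unitarity. For (i), given $X_0:Y_0\to Y_1$ and $X_1:Y_1\to Y_2$, the closure of $X_1\cdot X_0$ differs from the disjoint union of the closures of $X_0$ and $X_1$ by the extra piece $\Delta B_{Y_1}$ together with a rerouting of reference tubes; the rerouting is handled by Lemma~\ref{lem:gluing}, producing exactly the Koszul sign $(-1)^{{\rm fp}(Y_1)}$ required by the composition law in ${\rm sVect}_\BC$, and Lemma~\ref{lem:reflection} eliminates the $\Delta B_{Y_1}$ factor since $z(\Delta B_{Y_1})=1$. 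Well-definedness under a change of $B_Y$ reduces to the same computation. Point (ii) is a parallel verification at the level of disjoint unions, and $F(\tau_{Y,Y'})=\tau'_{\CH(Y),\CH(Y')}$ reduces to $z(S^1_{\rm P}\times Y)=(-1)^{{\rm fp}(Y)}$, which holds by the definition of ${\rm fp}(Y)$ in \eqref{eq:fpdef}. For (iii), the equivariance identification $F(\overline X)=\overline{F(X)}$ follows from $z(\overline X)=\overline{z(X)}$ combined with the definition of $u_{\overline Y}$ via $\overline{B_Y}$; positivity of $F(e_Y)$ reduces to $z$ of the double $\Delta B_Y$, which is $1$ by reflection positivity.

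The main obstacle I expect is the sign bookkeeping in step (i): keeping track of the factors of $(-1)^F$ from Lemma~\ref{lem:anti} (appearing in $\overline{c_Y}$ versus $c_{\overline Y}$ and in the identification $\overline{\overline Y}\cong Y$) together with the Koszul signs produced by reordering reference manifolds, and confirming that all of these signs assemble into exactly the sign required by composition in ${\rm sVect}_\BC$. The finite generation of $\Omega^H_{d-1}$ is what makes this bookkeeping manageable, because every $Y$ is reduced to a unique canonically ordered disjoint union of finitely many elementary references, so the signs need only be computed on a finite list of cases.
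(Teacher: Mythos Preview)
Your approach is essentially the same as the paper's: both build $\CH(Y)$ from bordisms $Y^{\rm ref}_{[Y]}\to Y$, define $Z(X)$ by composition and closure against the references, and verify the axioms via Lemma~\ref{lem:reflection} and Lemma~\ref{lem:gluing} together with Koszul-sign bookkeeping. The only differences are presentational: the paper packages $\CH(Y)$ as a quotient of the free vector space on ${\rm Hom}(Y^{\rm ref}_{[Y]},Y)$ (so that independence of the chosen $B_Y$ is built in rather than checked afterward), and the paper is explicit that the monoidal isomorphism $\CH(Y)\otimes\CH(Y')\to\CH(Y\sqcup Y')$ is implemented by an actual bordism $X_{w,w'}(\tau):Y^{\rm ref}_{w+w'}\to Y^{\rm ref}_w\sqcup Y^{\rm ref}_{w'}$ assembled from $c_{\overline{Y^{\rm ref}_a}}$, $X^{\rm ref}_b$, and a permutation --- not merely the Koszul sign of the permutation, which is the only datum your sketch mentions at that step.
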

\begin{proof}

The proof involves several steps.
\paragraph{Step 1: Definition of vector spaces.} Our first task is to assign a one-dimensional Hilbert space $\CH(Y)$ to each closed $H_{d-1}$-manifold $Y$.
Consider an arbitrary $Y$. If $Y$ represents the element $w \in \Omega^H_{d-1}$,
then it is bordant to the reference manifold $Y^{\rm ref}_w$.
There is a set of morphisms in the bordism category,
${\rm Hom}(Y^{\rm ref}_w, Y)$, which consists of bordisms from $Y^{\rm ref}_w$ to $Y$. 
Let $\CH^{\rm big}(Y)$ be the infinite dimensional vector space whose element is a function from ${\rm Hom}(Y^{\rm ref}_w, Y)$ to $\BC$ 
such that the function is nonzero for only finitely many elements of ${\rm Hom}(Y^{\rm ref}_w, Y)$. The obvious vector space structure is given to this space.
The $\CH^{\rm big}(Y)$ is spanned by basis vectors $\ket{X}$ for each $X \in {\rm Hom}(Y^{\rm ref}_w, Y)$, 
\beq
\CH^{\rm big}(Y) = \bigoplus_{X \in  {\rm Hom}(Y^{\rm ref}_w, Y)} \{ \ket{X} \},
\eeq
where $\ket{X}$ is a function which is 1 on $X$ and zero for others. 

Now, we introduce an equivalence relation between the vectors in $\CH^{\rm big}(Y)$ as follows. Let $X$ and $X'$ be two bordisms.
Then, we have a bordism $X' \sqcup \overline{X} : Y^{\rm ref}_w \sqcup \overline{Y^{\rm ref}_w} \to Y \sqcup \overline{Y}$ and hence
we can obtain a closed manifold as $e_Y \cdot (X' \sqcup \overline{X} ) \cdot {c}_{\overline{Y^{\rm ref}_w}}$ which we denote as $\tr (\overline{X} X')$.
Then we define the equivalence relation as 
\beq
\ket{X'} \sim z(\tr (\overline{X} X')) \ket{X}.
\eeq
We can check the following: 

(i) Reflexivity 
\beq
\ket{X} \sim \ket{X}
\eeq
is a consequence of $z(\tr (\overline{X} X))=1$ which is \eqref{eq:refl0}. 

(ii) Symmetry 
\beq
\ket{X} \sim z(\tr (\overline{X} X'))^{-1} \ket{X'}
\eeq
is a consequence of the fact that $\overline{\tr (\overline{X} X')} = \tr (\overline{X'} X)$
and hence $z(\tr (\overline{X} X'))^{-1} = z(\tr (\overline{X'} X))$. 

(iii) Transitivity 
\beq
\ket{X''} \sim z(\tr (\overline{X'} X'')) z(\tr (\overline{X} X')) \ket{X}
\eeq
is shown by
using the cutting and gluing law (Lemma~\ref{lem:gluing}) as follows. By applying the cutting and gluing law twice as in
Figure~\ref{fig:transitivity}, we get
\beq
z(\tr (\overline{X'} X'')) z(\tr (\overline{X} X')) = (-1)^{{\rm fp}(Y) } (-1)^{{\rm fp}(Y^{\rm ref}_w  ) }  z(\tr (\overline{X} X'')) z(\tr (\overline{X'} X')) . \label{eq:trans}
\eeq
As discussed above, we have $z(\tr (\overline{X'} X'))=1$ by reflection positivity. On the other hand,
the fermion parities of two bordant manifolds, $(-1)^{{\rm fp}(Y) }$ and $(-1)^{{\rm fp}(Y^{\rm ref}_w  ) } $, are the same as shown below \eqref{eq:fpdef}.
Thus we get 
$
z(\tr (\overline{X'} X'')) z(\tr (\overline{X} X')) =  z(\tr (\overline{X} X''))  .
$
\begin{figure}
\centering
\includegraphics[width=1.05\textwidth]{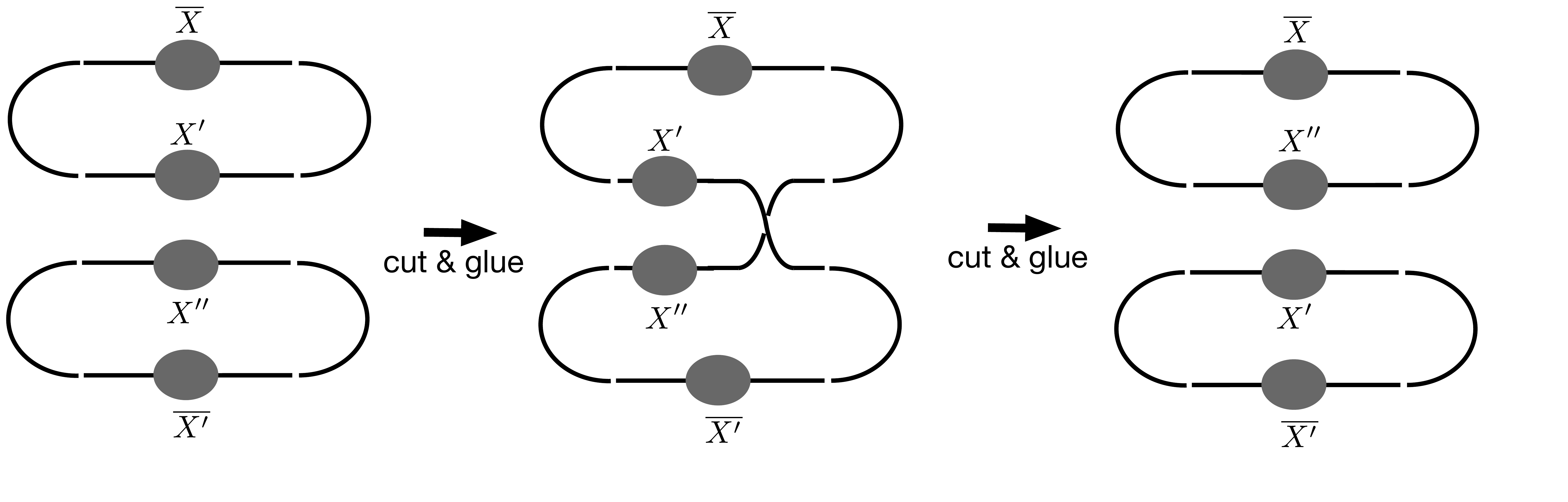}
\caption{ Using the cutting and gluing twice gives the equation \eqref{eq:trans}. \label{fig:transitivity}}
\end{figure}

Therefore, the above definition of $\sim$ gives a well-defined equivalence relation on the vector space $\CH^{\rm big}(Y)$.
Then we define 
\beq
\CH(Y) :=\CH^{\rm big}(Y)/ \sim. 
\eeq
It is clear that this $\CH(Y)$ is a one dimensional vector space. 
We can consistently introduce a hermitian metric on $\CH(Y)$ by imposing that the image of $\ket{X}$ 
under the projection $\CH^{\rm big}(Y) \to \CH(Y)$  has the unit norm.
We will later check that this hermitian metric coincides with the one defined by using the evaluation $e_Y$ as in Definition~\ref{defi:unitary}.

As a special case, if $Y$ is empty, $Y = \varnothing$, then there is a distinguished element in ${\rm Hom}(\varnothing, \varnothing)$
which is the empty bordism which we also denote by $\varnothing$. By using the vector $\ket{\varnothing}$, we can canonically 
identify $\CH(\varnothing)$ with $\BC$. Under this identification, for a closed manifold $X$ we have $\ket{X} \sim z(X)$.
Also, we can canonically identify $\CH(Y \sqcup \overline{Y})$ for arbitrary $Y$ with $\BC$ by using the vector $\ket{c_{\overline{Y}}}$ for the distinguished bordism 
$c_{\overline{Y}} \in {\rm Hom}(\varnothing,Y \sqcup \overline{Y}) $. 
So we have 
\beq
\CH(\varnothing) \cong \BC \cong \CH(Y \sqcup \overline{Y}).
\eeq

\paragraph{Step 2: Identification of vector spaces under products.} 
For the disjoint union of two manifolds $Y$ and $Y'$, we can identify $\CH(Y \sqcup Y')$ and $\CH(Y) \otimes \CH(Y')$ as follows.
We take $X : Y^{\rm ref}_{w} \to Y$ and $X' : Y^{\rm ref}_{w'} \to Y'$. 
Then we have the bordism 
\beq
X \sqcup X': Y^{\rm ref}_{w} \sqcup Y^{\rm ref}_{w'} \to Y \sqcup Y'.
\eeq
Now we take a bordism $X_{w, w'}$ from $Y^{\rm ref}_{w+w'}$ to $Y^{\rm ref}_{w} \sqcup Y^{\rm ref}_{w'}$ as follows.
First, we use ${c}_{\overline{Y^{\rm ref}_a}}$ and $X_b^{\rm ref}$ (tensored with the identity bordism of $Y^{\rm ref}_{w+w'}$)
to get a bordism from $Y^{\rm ref}_{w+w'}$ to the manifold of the form
\beq
\tilde{Y}^{\rm ref} _{w+w'} =
Y^{\rm ref}_{w+w'} \sqcup (Y^{\rm ref}_{a_1} \sqcup \overline{Y^{\rm ref}_{a_1}  }) \sqcup \cdots \sqcup (Y^{\rm ref}_{b_1} \sqcup \cdots \sqcup Y^{\rm ref}_{b_1}) \sqcup \cdots \label{eq:creation}
\eeq
in such a way that $\tilde{Y}^{\rm ref} _{w+w'}$ contains the same number of 
$Y^{\rm ref}_a, \overline{Y^{\rm ref}_a}~(1 \leq a \leq k) $ and $Y^{\rm ref}_b ~(b=k+1,\cdots,k+\ell)$ as 
$Y^{\rm ref}_{w} \sqcup Y^{\rm ref}_{w'}$. Namely, $\tilde{Y}^{\rm ref} _{w+w'}$ and $Y^{\rm ref}_{w} \sqcup Y^{\rm ref}_{w'}$ are the same
up to permutation of their component elementary reference manifolds. See the left of Figure~\ref{fig:creation}.
Next, take
a bordism $\tau$ which permutes the elementary reference manifolds appearing in $\tilde{Y}^{\rm ref} _{w+w'}$
to the order as in $Y^{\rm ref}_{w} \sqcup Y^{\rm ref}_{w'}$. See \eqref{eq:permB}.
Such a permutation is not unique. 
Pick up one of them, and let $(-1)^{n(\tau)}$ be the Koszul sign associated to $\tau$.
By composing the above two bordisms $Y^{\rm ref} _{w+w'} \to \tilde{Y}^{\rm ref} _{w+w'} $ and $\tau$,
we get a bordism $X_{w, w'}(\tau)$ from $Y^{\rm ref}_{w+w'}$ to $Y^{\rm ref}_{w} \sqcup Y^{\rm ref}_{w'}$ which depends on $\tau$.
See the right of Figure~\ref{fig:creation}.
\begin{figure}
\centering
\includegraphics[width=1.05\textwidth]{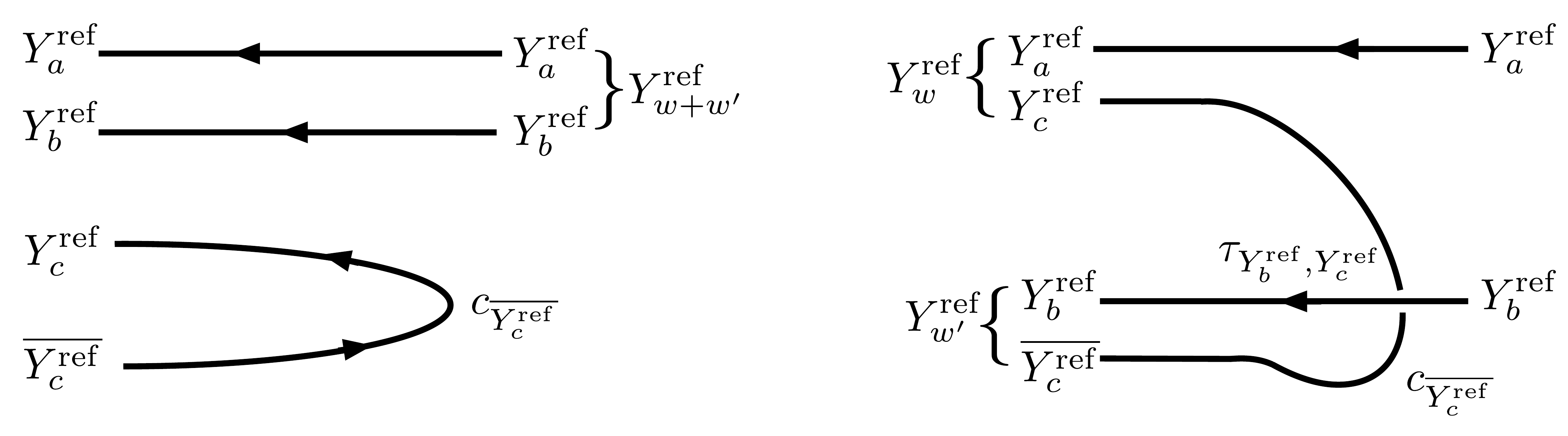}
\caption{ Left: the bordism from ${Y}^{\rm ref} _{w+w'}$ to $\tilde{Y}^{\rm ref} _{w+w'}$. The figure shows an example of the case
$Y^{\rm ref}_{w} =Y^{\rm ref}_{a} \sqcup Y^{\rm ref}_{c}$, $Y^{\rm ref}_{w'} =Y^{\rm ref}_{b} \sqcup \overline{Y^{\rm ref}_{c}}$ and hence 
$Y^{\rm ref}_{w+w'} = Y^{\rm ref}_{a} \sqcup Y^{\rm ref}_{b}$.
Right: the bordism from ${Y}^{\rm ref} _{w+w'}$ to $Y^{\rm ref}_{w} \sqcup Y^{\rm ref}_{w'}$. 
This requires the transposition of $Y^{\rm ref}_{b}$ and $Y^{\rm ref}_{c}$ by $\tau_{Y^{\rm ref}_{b}, Y^{\rm ref}_{c}}$.
\label{fig:creation}}
\end{figure}

Now we consider the following identification. We have a bordism $(X \sqcup X') \cdot X_{w, w'}(\tau)$ from $Y^{\rm ref}_{w+w'}$
to $Y \sqcup Y'$. Then we define the map from $\CH(Y) \otimes \CH(Y')$ to $\CH(Y \sqcup Y')$ as
\beq
 \ket{X} \otimes \ket{X'} \mapsto (-1)^{n(\tau)} \ket{(X \sqcup X') \cdot X_{w, w'}(\tau)}. \label{eq:symmfunctor}
\eeq
This is independent of $\tau$ as follows. Let $\tau'$ be another permutation. Then, $  \tau' \cdot \tau^{-1}$ only permutes
elementary reference manifolds of the same type, i.e., it consists of composition of transpositions such as $\tau_{Y^{\rm ref}_a , Y^{\rm ref}_a }$.  
Then, $\tau$ and $\tau'$ can be changed to each other by the cutting and gluing law up to a sign factor which is exactly $ (-1)^{n(\tau') - n(\tau)}$.
This shows that the above map is independent of the choice of $\tau$.
Also, this map is independent of the choice of $X$ and $X'$ due to the cutting and gluing lemma
and the reflection positivity.
So we get a well-defined map $\CH(Y) \otimes \CH(Y') \to \CH(Y \sqcup Y')$ independent of $\tau$, $X$ and $X'$.

Furthermore, it can be checked that the two chains of maps
\beq
& \CH(Y) \otimes \CH(Y')  \otimes \CH(Y'')   \to \CH(Y \sqcup Y') \otimes \CH(Y'') \to  \CH(Y \sqcup Y'  \sqcup Y''),     \nonumber \\
&\CH(Y) \otimes \CH(Y')  \otimes \CH(Y'')    \to \CH(Y) \otimes \CH ( Y' \sqcup Y'') \to \CH(Y \sqcup Y'  \sqcup Y''),
\eeq
gives the same result which is the associativity condition.
The check of this associativity is done as follows. 
We need to compare two bordisms of the form
\beq
&(X_{w,w'}(\tau_1) \sqcup 1_{Y^{\rm ref}_{w''}} ) \cdot X_{w+w' , w''}(\tau_2) , \label{eq:adj1} \\
&( 1_{Y^{\rm ref}_{w}}  \sqcup X_{w',w''}(\tau_3) ) \cdot X_{w , w' + w''}(\tau_4).\label{eq:adj2}
\eeq
See Figure~\ref{fig:associative}.
\begin{figure}
\centering
\includegraphics[width=1.0\textwidth]{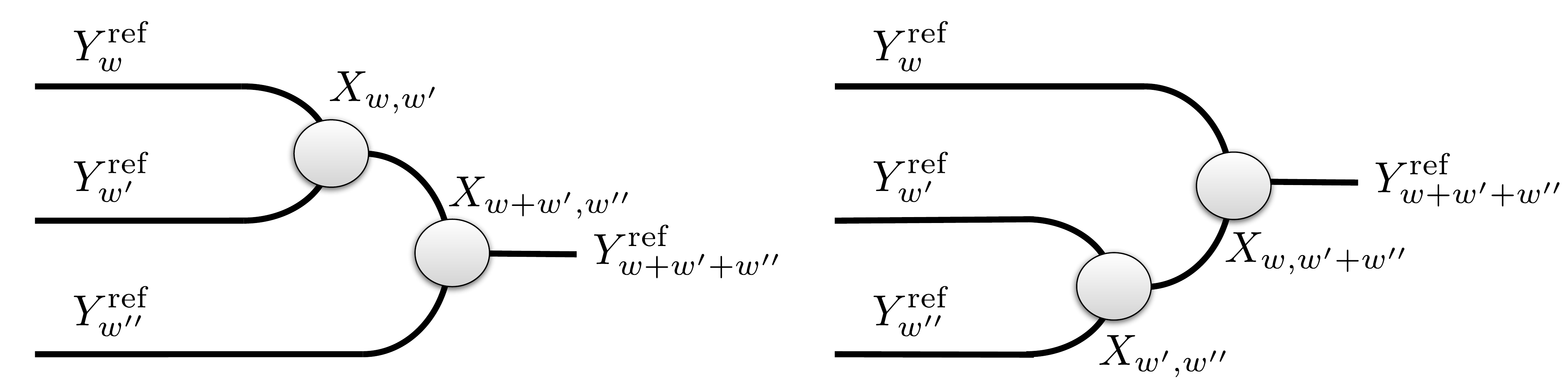}
\caption{  The bordism \eqref{eq:adj1} (Left) and the bordism \eqref{eq:adj2} (Right). \label{fig:associative}}
\end{figure}
Both of them are bordisms from $Y^{\rm ref}_{w+w'+w''}$ to $Y^{\rm ref}_{w} \sqcup Y^{\rm ref}_{w' } \sqcup Y^{\rm ref}_{w''}$.
They both consist of ${c}_{\overline{Y^{\rm ref}_a}}$, $X_b^{\rm ref}$ and $\tau_{A,B}$.
More precisely, ${c}_{\overline{Y^{\rm ref}_a}}$ and $X_b^{\rm ref}$ may appear in a form like
\beq
1_{Y'_{\rm ref}} \sqcup {c}_{\overline{Y^{\rm ref}_{a_1}}} \sqcup \cdots \sqcup X^{\rm ref}_{b_1} \sqcup 1_{Y''_{\rm ref}} \label{eq:CR1}
\eeq
where $Y'_{\rm ref}$ and $Y''_{\rm ref}$ are some disjoint union of elementary reference manifolds.
See the left of Figure~\ref{fig:compare1}.
We can make the above bordism into a bordism
of the form
\beq
\tau_0 \cdot (1_{Y'_{\rm ref} \sqcup Y''_{\rm ref}} \sqcup {c}_{\overline{Y^{\rm ref}_{a_1}}} \sqcup \cdots \sqcup X^{\rm ref}_{b_1} )  \label{eq:CR2}
\eeq
where $\tau_0$ is some permutation. See the right of Figure~\ref{fig:compare1}.
\begin{figure}
\centering
\includegraphics[width=.8\textwidth]{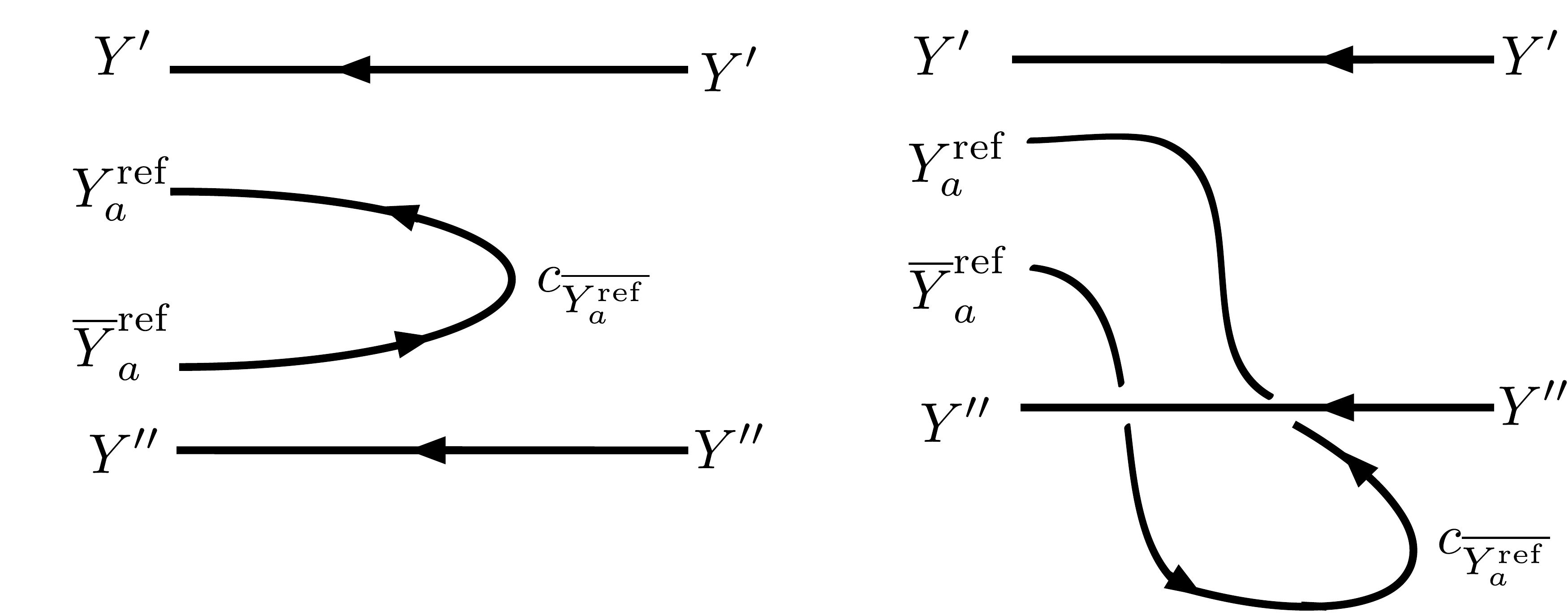}
\caption{  The bordism of the form \eqref{eq:CR1} (Left) is equivalent to the bordism of the form \eqref{eq:CR2} (Right).  
This manipulation makes it possible to put the bordisms in the form \eqref{eq:norderedf}. \label{fig:compare1}}
\end{figure}
By using this fact,
it is always possible to rewrite the above bordisms from $Y^{\rm ref}_{w+w'+w''}$ to $Y^{\rm ref}_{w} \sqcup Y^{\rm ref}_{w' } \sqcup Y^{\rm ref}_{w''}$
in the form 
\beq
\tau \cdot (1_{Y^{\rm ref}_{w+w'+w''}} \sqcup \CC), \label{eq:norderedf}
\eeq
where $\tau$ is a permutation of elementary reference manifolds
and $\CC$ is given by a disjoint union of copies of ${c}_{\overline{Y^{\rm ref}_a}}$ and $X_b^{\rm ref}$.
By further using permutation if necessary, we can assume that this $\CC$ is the same in \eqref{eq:adj1} and \eqref{eq:adj2}.
The part $\tau$ in \eqref{eq:adj1} and \eqref{eq:adj2} differs only by permutation among the same $Y^{\rm ref}_a$ (or $\overline{Y^{\rm ref}_a}$)
 as in the proof that \eqref{eq:symmfunctor} is independent of $\tau$. Also notice that the change from \eqref{eq:CR1} to \eqref{eq:CR2}
 can be done without any Koszul sign, because the boundaries of ${c}_{\overline{Y^{\rm ref}_{a_1}}}$ and $X^{\rm ref}_{b_1} $
 have even total fermion parity.
 Therefore, as in the proof that \eqref{eq:symmfunctor} is independent of $\tau$,
 the two bordisms can be made to be the same by the cutting and gluing. 
 Thus the associativity holds.

The associativity is the condition mentioned around \eqref{eq:SMCcondition1} which is needed for a monoidal functor.
In this way, we can identify $\CH(Y) \otimes \CH(Y') \cong \CH(Y \sqcup Y')$ in a consistent way.

\paragraph{Step 3: Identification of vector spaces under involution.} 
The two identifications $\CH(Y \sqcup \overline{Y}) \cong \BC$ and $\CH(Y \sqcup \overline{Y}) \cong \CH(Y) \otimes \CH(\overline{Y}) $
imply that $\CH(\overline{Y})$ is naturally the dual vector space to $\CH(Y)$.
In addition to this, we have introduced the hermitian metric on $\CH(Y)$ such that the image of $\ket{X}$ for any $X : Y^{\rm ref}_w \to Y$
has the unit norm. 
If we use this hermitian metric, we have the following identifications;
\beq
\overline{\CH(Y)} \cong \CH(Y)^* \cong \CH(\overline{Y}) ,
\eeq
where $\CH(Y)^*$ is the dual vector space to $\CH(Y)$. 

More explicitly, let $X: Y^{\rm ref}_{w} \to Y$ and $\hat{X} : Y^{\rm ref}_{-w} \to \overline{Y}$ be bordisms.
Then, $\ket{X} \otimes \ket{\hat{X}} \in \CH(Y) \otimes \CH( \overline{Y})$ is identified under $\CH(Y) \otimes \CH(\overline{Y}) \cong \CH(Y \sqcup \overline{Y}) \cong \BC$ with 
\beq
 \xi(X, \hat{X}) :=(-1)^{n(\tau_0)}z( e_Y \cdot ( X \sqcup \hat{X}) \cdot X_{w, -w}(\tau_0) ) \in \BC,
\eeq
where $X_{w, -w}(\tau_0)$ is the bordism from $\varnothing$ to $Y^{\rm ref}_{w} \sqcup Y^{\rm ref}_{-w}$ as constructed above.
Therefore, $\ket{\hat{X} } $ is identified with 
\beq
\ket{\hat{X} } \cong \overline{\ket{X}}  \otimes \ket{X} \otimes \ket{\hat{X}} \cong  \xi(X, \hat{X})   \overline{\ket{X}} 
\eeq
where we used $\overline{\ket{X}}  \otimes \ket{X} \cong 1$ under the hermitian metric introduced above.

With the above consideration in mind, we define the map $\CH(\overline{Y}) \to \overline{\CH(Y)}$ as
\beq
\ket{\hat{X}} &\mapsto \xi(X, \hat{X}) \overline{\ket{X}}  \label{eq:involutionmap}
\eeq
where $\overline{\ket{X}} \in \overline{\CH(Y)}$ is the complex conjugate of $\ket{X}$.
This map is independent of $\tau_0$, $X$ and $\hat{X}$ due to the cutting and gluing.

Under the above map, let us check that the following diagram
\beq 
\xymatrix{
\CH (\overline{Y})    \otimes \CH (\overline{Y'})  \ar[d]_{}  \ar[r]^{}  & \overline{  \CH (Y) } \otimes \overline{ \CH (Y') }  \ar[d]^{} \\
\CH (\overline{ Y \sqcup Y'})   \ar[r]_{} &   \overline{\CH ( Y \sqcup Y')} 
} \label{eq:involutioncheck}
\eeq
commutes. This means the commutativity of
\beq
\xymatrix{
\ket{ \hat{X} }   \otimes   \ket{ \hat{X}' } \ar[d]_{}  \ar[r]^{}  
&   \xi(X, \hat{X})  \xi(X', \hat{X'}) \cdot  \overline{\ket{X}} \otimes \overline{\ket{X'}} \ar[d]^{}   \\
(-1)^{n( \hat{\tau} )} \ket{   (\hat{X} \sqcup \hat{X}' ) \cdot X_{-w, -w'}(\hat{\tau}) }  \ar[r]_{} 
& (\text{some factor}) \overline{\ket{   ( X \sqcup X' ) \cdot X_{w, w'}( \tau ) } }
} 
\eeq
The commutativity reduces to the equality
\beq
&(-1)^{n( \hat{\tau} )}\xi(( X \sqcup X' ) \cdot X_{w, w'}( \tau ),  (\hat{X} \sqcup \hat{X}' ) \cdot X_{-w, -w'}(\hat{\tau}))  \nonumber \\
=&(-1)^{n( \tau )+{\rm fp}(Y){\rm fp}(Y')}  \xi(X, \hat{X})  \xi(X', \hat{X'}),
\eeq
where the factor of $(-1)^{{\rm fp}(Y){\rm fp}(Y')}$ in the right hand side appeared 
because of the rule of involution $\overline{v \otimes w} \mapsto {(-1)^{{\rm deg}(v) {\rm deg}(w)}} \overline{v} \otimes \overline{w}$ 
as mentioned in Remark~\ref{rem:involutionsign}, if we assign the $\BZ_2$ grading of the super vector spaces according to ${\rm fp}(Y) \in \BZ_2$.
The check of this equality involves comparing two bordisms appearing in Figure~\ref{fig:compare2}. 
\begin{figure}
\centering
\includegraphics[width=1.1\textwidth]{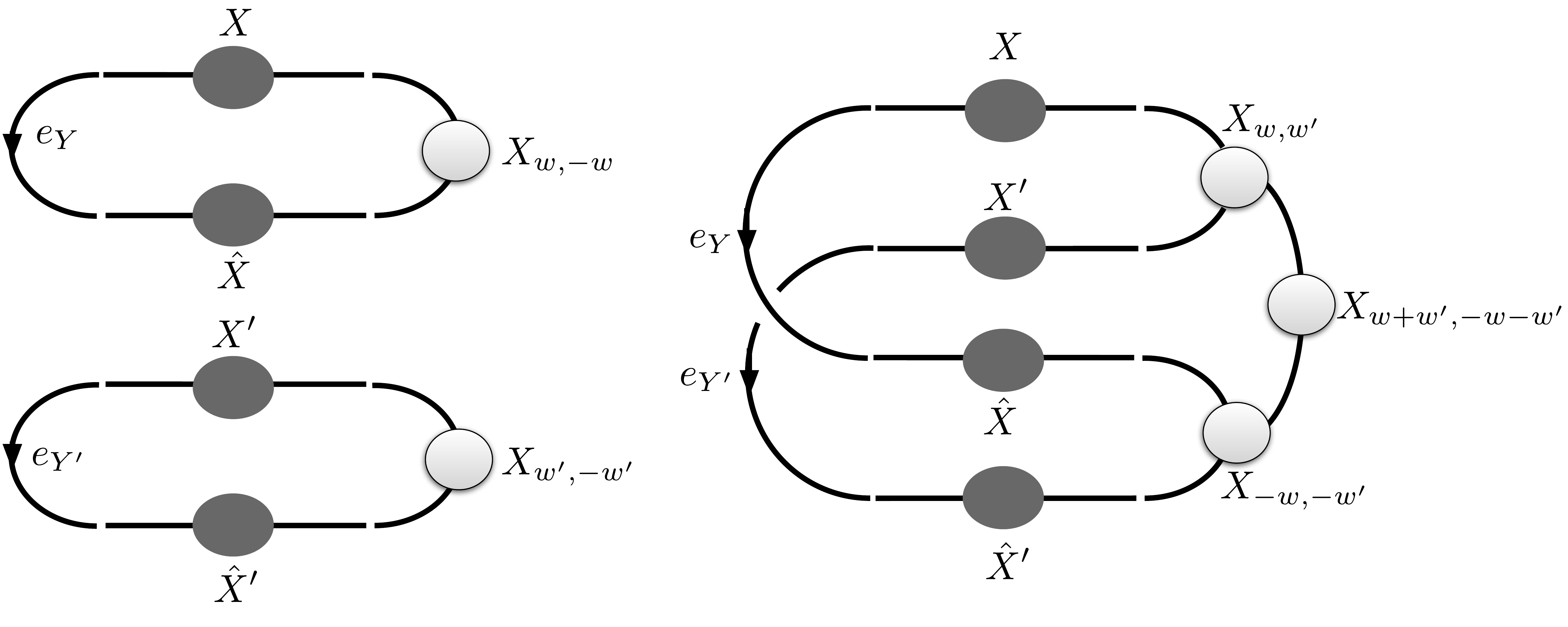}
\caption{ The check of the commutativity \eqref{eq:involutioncheck} requires comparing the left and right bordisms.
\label{fig:compare2}}
\end{figure}
One is
\beq
X_{w, -w}(\tau_0) \sqcup X_{w', -w'}(\hat{\tau}_0) \label{eq:cre1}
\eeq
which is a bordism from $\varnothing$ to $Y^{\rm ref}_{w} \sqcup  Y^{\rm ref}_{-w} \sqcup Y^{\rm ref}_{w'} \sqcup Y^{\rm ref}_{-w'}$ as in the left of Figure~\ref{fig:compare2}. 
The other is 
\beq
(X_{w, w'}( \tau ) \sqcup X_{-w, -w'}( \hat{\tau} ) ) \cdot X_{w+w', -w-w'}(\tau'_0) \label{eq:cre2}
\eeq
which is a bordism from $\varnothing$ to $Y^{\rm ref}_{w} \sqcup Y^{\rm ref}_{w'} \sqcup Y^{\rm ref}_{-w} \sqcup Y^{\rm ref}_{-w'}$
as in the right of Figure~\ref{fig:compare2}.

Both of the bordisms \eqref{eq:cre1} and \eqref{eq:cre2} are constructed as follows.
We create elementary reference manifolds as many as needed by using ${c}_{\overline{Y^{\rm ref}_a}}$ and $X_b^{\rm ref}$,
and then permute them to the desired orders. 
After acting $\tau_{Y^{\rm ref}_{w'} , Y^{\rm ref}_{-w}}$ to \eqref{eq:cre2} so that 
both of them become bordisms from $\varnothing$ to $Y^{\rm ref}_{w} \sqcup  Y^{\rm ref}_{-w} \sqcup Y^{\rm ref}_{w'} \sqcup Y^{\rm ref}_{-w'}$,
these two bordisms are the same up to permutation among the same types of elementary reference manifolds.
Therefore, they give the same values as long as we include the appropriate Koszul sign factors.
We have included almost all the Koszul sign factors $(-1)^{n( {\tau_0} )} $, $(-1)^{n( \hat{\tau}_0 )} $,
$(-1)^{n( {\tau'_0} )} $, $(-1)^{n( {\tau} )} $ and $(-1)^{n( \hat{\tau} )} $.
A nontrivial point is that the Koszul sign associated to $\tau_{Y^{\rm ref}_{w'} , Y^{\rm ref}_{-w}}$ is provided by $(-1)^{{\rm fp}(Y){\rm fp}(Y')}$ 
which appeared because of Remark~\ref{rem:involutionsign}. Thus, we indeed have all the necessary Koszul signs.
In this way the commutativity of \eqref{eq:involutioncheck} holds.
This corresponds to \eqref{eq:involutioncondition}.

We can also check that the diagram
\beq 
\xymatrix{
\CH ( Y )    \ar[rd]_{}  \ar[r]^{ }  &    \CH (\overline{\overline{Y}} )   \ar[d]^{} \\
&   \overline{ \overline{ \CH ( Y) }} 
} 
\label{eq:invinv}
\eeq
commutes, where the horizontal arrow uses $Y \cong \overline{\overline{Y}}$ on the bordism category,
the diagonal arrow uses $\CH ( Y )  \cong \overline{ \overline{ \CH ( Y) }} $ in the category of vector spaces,
and the vertical arrow uses $\CH(\overline{Y}) \to \overline{\CH(Y)}$ twice.
The commutativity is equivalent to 
$
\xi ( \hat{X},  X) \overline{\xi(X, \hat{X})}=1,
$
or more explicitly,
\beq
(-1)^{n(\tau'_0)}z( e_{\overline{Y}} \cdot (  \hat{X} \sqcup  X ) \cdot X_{-w, w}(\tau'_0) )  (-1)^{n(\tau_0)}
\overline{z( e_Y \cdot ( X \sqcup \hat{X}) \cdot X_{w, -w}(\tau_0) ) } =1. \label{eq:zzbar=1}
\eeq
Here, $X_{-w, w}(\tau'_0)$ is a bordism from $\varnothing$ to $Y^{\rm ref}_{-w} \sqcup Y^{\rm ref}_{w}$.
We can take it to be 
\beq
X_{-w, w}(\tau'_0) = \tau_{Y^{\rm ref}_{w} , Y^{\rm ref}_{-w}} \cdot X_{w, -w}(\tau_0) ,~~~~~\tau'_0 =   \tau_{Y^{\rm ref}_{w} , Y^{\rm ref}_{-w}}  \cdot \tau_0.
\eeq
In particular, we have $(-1)^{n(\tau'_0) }= (-1)^{n(\tau_0) + n(\tau_{Y^{\rm ref}_{w}, Y^{\rm ref}_{-w}    }) } =(-1)^{n(\tau_0) + {\rm fp}(Y) }$.

We want to compare $e_{\overline{Y}} \cdot (  \hat{X} \sqcup  X ) \cdot \tau_{Y^{\rm ref}_{w}, Y^{\rm ref}_{-w} } $ 
and $e_Y \cdot ( X \sqcup \hat{X}) $ to check the above equality.
Because $\tau_{\overline{Y},Y} \cdot (  \hat{X} \sqcup  X ) \cdot \tau_{Y^{\rm ref}_{w}, Y^{\rm ref}_{-w} } =( X \sqcup \hat{X})$,
we need to compare $ e_{\overline{Y}} $ and $e_Y \cdot \tau_{\overline{Y},Y}$.
Now, $e_{\overline{Y}}$ has an additional insertion of $(-1)^F$ compared to $e_Y \cdot \tau_{\overline{Y},Y}$ by Lemma~\ref{lem:anti}.
Thus, $e_{\overline{Y}} \cdot (  \hat{X} \sqcup \cdot X ) \cdot \tau_{Y^{\rm ref}_{w}, Y^{\rm ref}_{-w} } $ is isomorphic to 
$e_Y \cdot ( X \sqcup \hat{X}) $ except for the $(-1)^F$ which produces the factor $(-1)^{{\rm fp}(Y) }$.
This proves \eqref{eq:zzbar=1} and hence the diagram \eqref{eq:invinv} commutes.
This corresponds to \eqref{eq:twiceinvo}.

\paragraph{Step 4: Construction of a functor.} 
Our next task is to assign a linear map $Z(X) : \CH(Y_0) \to \CH(Y_1)$ to each bordism $X: Y_0 \to Y_1$.
This is simply done as follows. Let $X_0$ be a bordism $X_0: Y^{\rm ref}_{w} \to Y_0$, 
where $w \in \Omega^{H}_{d-1}$ represents the bordism class of both $Y_0$ and $Y_1$. 
Then the composition $X \cdot X_0$ is a bordism from $Y^{\rm ref}_{w}$ to $Y_1$. 
Then we simply set
\beq
Z(X): \ket{X_0} \mapsto \ket{X \cdot X_0}.
\eeq
By the cutting and gluing and reflection positivity, this does not depend on the choice of $X_0$, and hence this definition
gives a well-defined map from $\CH(Y_0)$ to $\CH(Y_1)$.
Under the composition of two bordisms $X$ and $X'$, we have $Z(X' \cdot X) = Z(X') Z(X)$.
Also, for the identity bordism $1_Y = [0,1] \times Y$ we have $Z(1_Y) = 1_{\CH(Y)}$.
Therefore, it is clear that we get a functor $F: {\rm Bord}_{\langle d-1, d \rangle}(H) \to {\rm sVect}_{\BC}$
by setting $F(Y) = \CH(Y)$ and $F(X) = Z(X)$. 

The fact that $F$ is a monoidal functor is a consequence of the fact that $\CH(\varnothing) \cong \BC$
and $F(Y) \otimes F(Y') \xrightarrow{\sim} F( Y \sqcup Y')$ which satisfy the associativity condition (and other more trivial conditions).
The fact that this identification gives the equality of $Z( X \sqcup X') $ and $Z(X ) \otimes Z(X')$
is immediate from the definition of $Z(X)$. Thus \eqref{eq:productID} is confirmed.
We have already checked the condition about the associativity \eqref{eq:SMCcondition1}.

Let us check that $F$ is symmetric. Let $\tau_{Y, Y'}$ be the bordism from $Y \sqcup Y'$ to $Y' \sqcup Y$ given 
by Figure~\ref{fig:exchange}. 
Let $X_{w,w'}(\tau)$ be the bordism from $Y^{\rm ref}_{w+w'}$ to $Y^{\rm ref}_{w} \sqcup Y^{\rm ref}_{w'}$ introduced above.
The identification $  \CH(Y) \otimes (Y') \xrightarrow{\sim}   \CH(Y \sqcup Y') $ was done by \eqref{eq:symmfunctor}.
Take also the bordism $X_{w', w}(\tau')$ from $Y^{\rm ref}_{w+w'}$ to $Y^{\rm ref}_{w'} \sqcup Y^{\rm ref}_{w}$.
More explicitly, this bordism is taken as 
\beq
X_{w', w}(\tau') := \tau_{Y^{\rm ref}_{w}, Y^{\rm ref}_{w'}    } \cdot X_{w,w'}(\tau),~~~~~\tau' =   \tau_{Y^{\rm ref}_{w} , Y^{\rm ref}_{-w}} \cdot \tau  .
\eeq
We have $(-1)^{n(\tau')} =(-1)^{n(\tau)}  (-1)^{{\rm fp}(Y) {\rm fp}(Y') }$.
Then one can check that the following diagram commutes:
\beq
\xymatrix{
 \ket{X} \otimes \ket{X'} \ar[d]_{}  \ar[r]^{} & (-1)^{{\rm fp}(Y) {\rm fp}(Y') }   \ket{X'} \otimes \ket{X}  \ar[d]^{} \\
 (-1)^{n(\tau)} \ket{(X \sqcup X') \cdot X_{w, w'}(\tau)}   \ar[r]_{Z(\tau_{Y, Y'})} &  (-1)^{n(\tau)} \ket{\cdot (X' \sqcup X) \cdot X_{w', w}(\tau')}
}
\eeq
This proves the symmetry of $F$ given in \eqref{eq:SMCcondition2}
if we assign the $\BZ_2$-grading of the super vector spaces according to ${\rm fp}(Y) $. 
Therefore, $F$ is a symmetric monoidal functor.

We have also defined $F(\overline{Y}) \xrightarrow{\sim} \overline{F(Y)}$ in \eqref{eq:involutionmap}. 
The fact that $F(\overline{X}) =\overline{F(X)}$ under this identification is seen as follows.
Let us take $X_0 :Y^{\rm ref}_w \to Y_0$ and $\hat{X}_0: Y^{\rm ref}_{-w} \to \overline{Y_ 0}$ and consider  \eqref{eq:involutionmap} 
\beq
\ket{ \hat{X}_0 } &\mapsto \xi(X_0,  \hat{X}_0 ) \overline{\ket{X_0}}  .
\eeq
By using $X :Y_0 \to Y_1$, we also have
\beq
\ket{ \overline{X} \cdot \hat{X}_0 } &\mapsto \xi( X \cdot X_0,  \overline{X} \cdot \hat{X}_0  ) \overline{\ket{ X \cdot X_0}}.
\eeq
However, by the cutting and gluing and reflection positivity, we get 
$ \xi( X \cdot X_0,  \overline{X} \cdot \hat{X}_0  )  =    \xi(X_0,  \hat{X}_0 ) $.
Therefore, the comparison of the two equations immediately give $F(\overline{X}) = \overline{F(X)}$. This confirms \eqref{eq:invo}.
We have already checked \eqref{eq:involutioncondition} and \eqref{eq:twiceinvo}.

Finally the evaluation $Z(e_Y)$ is just unity under the chain of identifications 
$\CH(Y) \otimes \overline{ \CH(Y) } \cong \CH(Y) \otimes \CH(\overline{Y}) \cong \CH(Y \sqcup \overline{Y}) \cong \BC$,
because $F(e_Y) \ket{c_{\overline{Y}}} = \ket{ e_Y \cdot c_{\overline{Y}}}= z(e_Y \cdot c_{\overline{Y}})=1$ by reflection positivity.
In other words, the hermitian metric defined by $e_Y$ coincides with the hermitian metric we have already introduced, which is positive definite.

We have confirmed all the nontrivial axioms of unitary invertible TQFT. It is clear from the construction that $F(X)=z(X)$ for every closed $H_d$-manifold $X$.
This completes the proof.
\end{proof}

\subsection{Isomorphism between two theories}
We would like to identify two theories which have the same partition functions on closed $H_d$-manifolds.
Such an identification may be expressed as a natural isomorphism $\eta : F \Rightarrow G$ between
two functors of TQFT.

It is important to have in mind the following physical intuition. 
An element of a Hilbert space $v \in \CH$ does not correspond to physical observables.
Only the ray of $v$ in $\CH$ matters. Still, relative phases can have physical consequences such as Berry phases.
However, if $Y$ and $Y'$ represent different elements of the bordism group $\Omega_{d-1}^H$,
there are no transitions from $Y$ to $Y'$. Therefore, we expect some phase ambiguity 
in the relative phase of $\CH(Y)$ and $\CH(Y')$. This ambiguity appears in the following discussion. 

\begin{thm} \label{thm:identification}
Let $F$ and $G$ be symmetric monoidal functors associated to two invertible TQFTs with the same cobordism invariant partition functions, 
and assume that $\Omega^{ H}_{d-1}$ is finitely generated. 
Then there exits a symmetric monoidal natural isomorphism $\eta$ from $F$ to $G$, meaning that $\eta$ is a natural isomorphism such that
$\eta_{\varnothing}$ regarded as a map $ \BC \to \BC$ is the identity, and
it preserves the symmetric monoidal structure as
\beq
\xymatrix{
F(Y) \otimes F(Y') \ar[d]_{}  \ar[r]^{\eta_Y \otimes \eta_{Y'}} & G(Y) \otimes G(Y')  \ar[d]^{} \\
F( Y \sqcup Y')  \ar[r]_{\eta_{Y \sqcup Y'}} &  G( Y \sqcup Y')  
}\label{eq:SMNT}
\eeq
In addition, the $\eta$ can be taken to preserve the involution as
\beq
\xymatrix{
F( \overline{Y} )  \ar[d]_{}  \ar[r]^{ \eta_{\overline{Y}} }  &  G( \overline{Y} )    \ar[d]^{} \\
\overline{F( Y)}   \ar[r]_{   \overline{\eta_Y}  } &  \overline{G( Y)} 
}\label{eq:INV}
\eeq
\end{thm}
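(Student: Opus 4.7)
The plan is to reduce the statement to the explicit construction of Theorem~\ref{thm:construction}. Let $\tilde F$ denote the canonical invertible TQFT built there from the common cobordism invariant $z := Z_F = Z_G \in {\rm Hom}(\Omega_d^H, \U(1))$. It then suffices to produce symmetric monoidal, involution-preserving natural isomorphisms $\eta^F : \tilde F \Rightarrow F$ and $\eta^G : \tilde F \Rightarrow G$, since $\eta := \eta^G \circ (\eta^F)^{-1}$ is then the desired natural isomorphism $F \Rightarrow G$. Since the two theories play symmetric roles, I describe only $\eta^F$.

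Reuse the data fixed in the proof of Theorem~\ref{thm:construction}: reference manifolds $Y^{\rm ref}_w$ for each $w \in \Omega^H_{d-1}$ together with the reference bordisms $X_{w,w'}(\tau)$, $X^{\rm ref}_b$, and $c_{\overline{Y^{\rm ref}_a}}$. Choose nonzero reference vectors $v^F_w \in F(Y^{\rm ref}_w)$ inductively from a choice of vectors on the elementary generators, using the symmetric monoidal structure of $F$, so as to enforce by construction that $F(X_{w,w'}(\tau))(v^F_{w+w'}) = (-1)^{n(\tau)}\, v^F_w \otimes v^F_{w'}$. Define
\[
\eta^F_Y : \tilde F(Y) \to F(Y), \qquad [\ket{X}] \mapsto F(X)(v^F_w) \quad \text{for } X : Y^{\rm ref}_w \to Y.
\]
Well-definedness on the equivalence classes (recall $\ket{X'} \sim z(\tr(\overline X X'))\, \ket{X}$ in $\tilde F(Y)$) reduces to the identity $F(X')(v^F_w)/F(X)(v^F_w) = z(e_Y \cdot (X' \sqcup \overline X) \cdot c_{\overline{Y^{\rm ref}_w}})$, which holds by computing the right-hand partition function through $F$ itself (its evaluation pairing is automatically normalized by reflection positivity, Lemma~\ref{lem:reflection}) and then invoking $Z_F = z$. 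Naturality of $\eta^F$ under an arbitrary bordism $X : Y_0 \to Y_1$ is then immediate, since one may compute $\eta^F_{Y_1}$ using the composition $X \cdot X_{Y_0}$ as the reference bordism for $Y_1$.

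The monoidal compatibility \eqref{eq:SMNT} holds by the defining property of the $v^F_w$: applied to $[\ket{X}] \otimes [\ket{X'}]$ and passed through the identification \eqref{eq:symmfunctor}, both $\eta^F_Y \otimes \eta^F_{Y'}$ and $\eta^F_{Y \sqcup Y'}$ produce $F(X \sqcup X')(v^F_w \otimes v^F_{w'})$. The symmetric-braiding condition \eqref{eq:SMCcondition2} reduces to matching the Koszul sign, which holds because the $\BZ_2$-grading of $F(Y)$ is pinned down by $(-1)^{{\rm fp}(Y)}$ via the topological spin-statistics theorem, and this fermion parity depends only on $z$ through \eqref{eq:fpdef}. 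The involution condition \eqref{eq:INV} amounts to matching the identification scalars $\xi(X, \hat X)$ of \eqref{eq:involutionmap} as computed by $F$ and by $\tilde F$; both equal $z(e_Y \cdot (X \sqcup \hat X) \cdot X_{w,-w}(\tau_0))$ once the reference vectors $v^F_{\overline{Y^{\rm ref}_a}}$ are chosen compatibly with the equivariant structure of $F$. The main obstacle is the consistent choice of the $v^F_w$ across all $w$, particularly for torsion generators: the bordism $X^{\rm ref}_b : \varnothing \to p_b Y^{\rm ref}_b$ forces a relation between $F(X^{\rm ref}_b)$ and $(v^F_b)^{\otimes p_b}$, leaving a $p_b$-th-root-of-unity freedom in the choice of $v^F_b$. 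This ambiguity does not obstruct the construction because the analogous freedom is available for $G$ and it cancels from $\eta^G \circ (\eta^F)^{-1}$; the overall $\U(1)$ rescalings of the reference vectors that this entails are precisely the expected Berry-phase ambiguity between Hilbert spaces on non-bordant $H_{d-1}$-manifolds.
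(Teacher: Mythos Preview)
Your proposal is correct and rests on the same ingredients as the paper's proof: a choice of reference data on the elementary generators $Y^{\rm ref}_a$, $Y^{\rm ref}_b$ (unit vectors $v^F_a$, $v^F_b$ for you; unitary maps $\eta_{Y^{\rm ref}_a}$, $\eta_{Y^{\rm ref}_b}$ in the paper), an extension to arbitrary $Y$ by transporting along a bordism $X: Y^{\rm ref}_w \to Y$, and a verification of monoidal and involution compatibility that boils down to comparing the bordisms $X_{w,w'}(\tau)$ built from $c_{\overline{Y^{\rm ref}_a}}$, $X^{\rm ref}_b$, and permutations.

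The organizational difference is that you factor through the canonical model $\tilde F$ of Theorem~\ref{thm:construction}, building $\eta^F:\tilde F\Rightarrow F$ and $\eta^G:\tilde F\Rightarrow G$ separately and composing, whereas the paper works directly with $\eta:F\Rightarrow G$ via $\eta_Y = G(X)\,\eta_{Y^{\rm ref}_w}\,F(X)^{-1}$. Your route makes the role of Theorem~\ref{thm:construction} as a ``universal'' invertible TQFT explicit and is arguably cleaner conceptually; the paper's direct route avoids having to re-open the construction of $\tilde F$ and lets the Koszul-sign bookkeeping cancel between the $F$-side and the $G$-side in one stroke (this is the repeated observation in the paper that ``any sign factors cancel between the $F$-part and the $G$-part''). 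One small caution: your phrase ``it cancels from $\eta^G\circ(\eta^F)^{-1}$'' for the $p_b$-th-root ambiguity is slightly misleading --- the ambiguity does not cancel but rather parametrizes the non-uniqueness of $\eta$, exactly as the paper notes in the remark following the proof; since the theorem only asserts existence, this is harmless.
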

\begin{proof}
To each closed $H_{d-1}$-manifold $Y$, we assign a linear map $\eta_Y: F(Y) \to G(Y)$ as follows.
First, suppose that there exits a bordism $X: \varnothing \to Y$. In this case, we can regard $F(X)$ and $G(X)$ as elements of $F(Y)$ and $G(Y)$, respectively.
Then we define a linear map $\eta_Y: F(Y) \to G(Y)$  as $\eta_Y: F(X) \mapsto G(X)$.
This map does not depend on the choice of $X$ as can be checked from the fact that the partition functions of the two theories on any closed $H_d$-manifolds are the same
by assumption. In particular, $\eta_\varnothing: \BC \to \BC$ is just the identity map.

Next, consider elementary reference manifolds $Y^{\rm ref}_b$ for $ k+1 \leq b \leq k+\ell$ which are the generators of the torsion part of $\Omega^{ H}_{d-1}$.
The disjoint union of $p_b$ copies of them,
$Y^{\rm ref}_b \sqcup \cdots \sqcup Y^{\rm ref}_b$, is bordant to $\varnothing$. Thus we already have a map 
$F(Y^{\rm ref}_b)^{\otimes p_b} \to G(Y^{\rm ref}_b)^{\otimes p_b} $
which was defined above. Then we pick up a map $\eta_{Y^{\rm ref}_b}: F(Y^{\rm ref}_b) \to G(Y^{\rm ref}_b)$ such that 
$\eta_{Y^{\rm ref}_b}^{\otimes p_b}$ coincides with 
the canonical map. There are $p_b$ choices of such a map, and we just pick up one of them.
On the other hand, for elementary reference manifolds $Y^{\rm ref}_a$ with $1 \leq a \leq k$ which are the generators of the freely generated part of $\Omega^{ H}_{d-1}$,
we assign a completely arbitrary map $\eta_{Y^{\rm ref}_a}: F(Y^{\rm ref}_a) \to G(Y^{\rm ref}_a )$ which preserves the hermitian metric.

For any $w \in \Omega^{ H}_{d-1}$, $Y^{\rm ref}_w$ is given as the disjoint union of $Y^{\rm ref}_a, \overline{Y^{\rm ref}_a}$ ($1 \leq a \leq k$)
and $Y^{\rm ref}_b$ ($ k+1 \leq b \leq k+\ell$) in the standard form described around \eqref{eq:ordering}. Then
the map $\eta_{Y^{\rm ref}_w} : F(Y^{\rm ref}_w ) \to G(Y^{\rm ref}_w)$ is defined by reducing $F(Y^{\rm ref}_w)$ (resp. $G(Y^{\rm ref}_w)$)
to the tensor product of $F(Y^{\rm ref}_a)$, $\overline{F(Y^{\rm ref}_a)}$ and $F(Y^{\rm ref}_b)$ 
(reps. $G(Y^{\rm ref}_a)$, $\overline{G(Y^{\rm ref}_a)}$ and $G(Y^{\rm ref}_b)$) 
and then extending the above maps to these tensor products.

Now, consider an arbitrary $Y$. Suppose is $Y$ is bordant to $Y^{\rm ref}_w$ and $X: Y^{\rm ref}_w \to Y$ is a bordism.
Then, we define $\eta_Y: F(Y) \to G(Y)$ as
\beq
\eta_Y = G(X) \eta_{Y^{\rm ref}_w} F(X)^{-1}.
\eeq
The inverse of $F(X)$ exists because it it a unitary operator in invertible TQFT, $F(X)^{-1} =F(\overline{X})$.
This definition is independent of $X$ by the equality of the partition functions of $F$ and $G$.
Also notice that this definition agrees with the one given above when $Y^{\rm ref}_w$ is empty.
By definition, we have a commutative diagram
\beq
\xymatrix{
F(Y^{\rm ref}_w) \ar[d]_{F(X)}  \ar[r]^{\eta_{Y^{\rm ref}_w} } & G(Y^{\rm ref}_w)  \ar[d]^{G(X)} \\
F(Y) \ar[r]_{\eta_Y} & G(Y)
}\label{eq:commWRTref}
\eeq
where all the arrows are isomorphisms.

We have defined $\eta_Y : F(Y) \to G(Y)$ for all $Y$. From the commutativity of the diagram \eqref{eq:commWRTref}
and the fact that all the arrows are isomorphisms, it is easy to see that $\eta$ is a natural isomorphism.

To check the diagrams \eqref{eq:SMNT} and \eqref{eq:INV},
let us pick up unit vectors $f_b \in F(Y^{\rm ref}_b)$, $g_b \in G(Y^{\rm ref}_b)$ ($k+1 \leq b \leq k+\ell$) such that 
$(f_b)^{\otimes p_b} = F(X^{\rm ref}_b )$, $(g_b)^{\otimes p_b} = G(X^{\rm ref}_b)$ and $\eta_{Y^{\rm ref}_b}: f_b \mapsto g_b$.
Also pick up unit vectors $f_a \in F(Y^{\rm ref}_a)$ and $g_a \in G(Y^{\rm ref}_a)$ ($1 \leq a \leq k$) 
such that $\eta_{Y^{\rm ref}_a} : f_a \mapsto g_a$. Then, explicitly we have
\beq
\eta_{Y^{\rm ref}_w} =  g_1^{\otimes w_1}\otimes \cdots \otimes g_{k+\ell}^{\otimes w_{k+\ell}} 
\otimes  (f_1^{\otimes w_1}\otimes  \cdots \otimes  f_{k+\ell}^{\otimes w_{k+\ell}})^{-1} \label{eq:explicitETA}
\eeq
where $(w_1,\cdots, w_{k+\ell})$ are integers such that it represents the element $w$ in 
$\Omega^{H}_{d-1}  \cong (\BZ)^k  \oplus  \BZ_{p_{k+1}} \oplus  \cdots \oplus \BZ_{p_{k+\ell}} $ and $w_b$~($k+1 \leq b \leq k+\ell$)
are constrained as $0 \leq w_b \leq p_b-1$. In the above notation, we have used the fact that all the Hilbert spaces are one dimensional
and hence the inverse like $(f_a)^{-1}$ makes sense.

Let us check the diagram \eqref{eq:SMNT} for the case $Y=Y^{\rm ref}_{w} $ and $Y'= Y^{\rm ref}_{w'}$. 
Let $X_{w,w'}(\tau)$ be the bordism from $Y^{\rm ref}_{w+w'}$ to $Y^{\rm ref}_{w} \sqcup Y^{\rm ref}_{w'}$ 
which consists of ${c}_{\overline{Y^{\rm ref}_a}}$, $X^{\rm ref}_b $ and permutation $\tau$ of elementary reference manifolds. 
This is the same bordism as used in the proof of Theorem~\ref{thm:construction}.
Let $\delta w_b$ be the number of times $X^{\rm ref}_b$ appears in $X_{w,w'}(\tau)$,
and let $\delta w_a$ be the number of times ${c}_{\overline{Y^{\rm ref}_a}}$ appears.
Under the identification $F( Y \sqcup Y') \cong F(Y) \otimes F(Y')$ 
we have
\beq
F(X_{w,w'}(\tau)) = F(\tau) \circ \left(1_{F(Y^{\rm ref}_{w+w'})} \otimes \prod_a( f_a \otimes  \overline{f_a})^{\otimes  \delta w_a} 
\otimes \prod_b f_b^{\otimes  p_b \delta w_b} \right),
\eeq
where $\circ$ means composition of linear maps.
We get a similar formula for $G(X_{w,w'}(\tau)) $ under the identification $G( Y \sqcup Y') \cong G(Y) \otimes G(Y')$. Then
\beq
\eta_{Y^{\rm ref}_{w} \sqcup Y^{\rm ref}_{w'}} &=G( X_{w,w'} ) \circ \eta_{Y^{\rm}_{w+w'} } \circ F( X_{w,w'})^{-1} \nonumber \\
&=  (G(\tau) \circ \CG) \circ (F(\tau) \circ \CF)^{-1}  , \label{eq:etacup}
\eeq
where
\beq
\CF &=f_1^{\otimes w''_1}\otimes  \cdots \otimes  f_{k+\ell}^{\otimes w''_{k+\ell}}  \otimes \prod_a( f_a \otimes  \overline{f_a})^{\otimes  \delta w_a} 
\otimes \prod_b f_b^{\otimes  p_b \delta w_b},  \\
\CG &=g_1^{\otimes w''_1}\otimes  \cdots \otimes  g_{k+\ell}^{\otimes w''_{k+\ell}}  \otimes \prod_a( g_a \otimes  \overline{g_a})^{\otimes  \delta w_a} 
\otimes \prod_b g_b^{\otimes  p_b \delta w_b} .
\eeq
Here $w''_a$ are defined such that they represent the same element of $\Omega^H_{d-1}$ as $w_a+w'_a$.
By construction,
\eqref{eq:etacup} contains the same number of $f_a$, $g_a$, etc, as in $\eta_{Y^{\rm ref}_{w}} \otimes  \eta_{Y^{\rm ref}_{w'}}$.
Also, they are ordered by $F(\tau)$ and $G(\tau)$ in the same way as in $\eta_{Y^{\rm ref}_{w}} \otimes  \eta_{Y^{\rm ref}_{w'}}$.
Any possible sign factors $(\pm )$ are cancelled between the $F(\tau)$-part and the $G(\tau)$-part.
Therefore, we get
\beq
\eta_{Y^{\rm ref}_{w} \sqcup Y^{\rm ref}_{w'}} = \eta_{Y^{\rm ref}_{w}} \otimes  \eta_{Y^{\rm ref}_{w'}},
\eeq
under $F(Y) \otimes F(Y') \cong F( Y \sqcup Y')$  and $G(Y) \otimes G(Y') \cong G( Y \sqcup Y')$.

For general $Y$ and $Y'$, take bordisms $X: Y^{\rm ref}_{w} \to Y$ and $X' : Y^{\rm ref}_{w'} \to Y'$.
Then we have the bordism $(X \sqcup X') : Y^{\rm ref}_{w} \sqcup Y^{\rm ref}_{w'} \to Y \sqcup Y' $.
Then
\beq
\eta_{Y \sqcup Y'} &= G(X \sqcup X') \circ \eta_{Y^{\rm ref}_{w} \sqcup Y^{\rm ref}_{w'}} \circ  F(X \sqcup X')^{-1} \nonumber \\
&=(G(X) \otimes G(X')) \circ (\eta_{Y^{\rm ref}_{w}} \otimes \eta_{Y^{\rm ref}_{w'}}) \circ  (F(X) \otimes F(X') )^{-1} \nonumber \\
&=\eta_Y \otimes \eta_{Y'}.
\eeq
This shows the commutativity of the diagram \eqref{eq:SMNT}.

The diagram \eqref{eq:INV} is also checked in a similar way.
We take a bordism $X_{-w}(\tau): Y^{\rm ref}_{-w} \to \overline{Y^{\rm ref}_{w}}$ which consists of $\overline{X^{\rm ref}_b}$, the evaluations $e_{Y^{\rm ref}_b}$,
as well as permutation $\tau$. Then, under the identification $F(\overline{Y}) \xrightarrow{\sim} \overline{F(Y)} $ and similarly for $G$,
we get
\beq
\eta_{\overline{Y^{\rm ref}_{w}}} = G(X_{-w}(\tau)) \circ \eta_{ Y^{\rm ref}_{-w}  } \circ F(X_{-w}(\tau))^{-1} = \overline{ \eta_{ Y^{\rm ref}_{w}  } }
\eeq
because $\eta_{\overline{Y^{\rm ref}_{w}}}$ and $\overline{ \eta_{ Y^{\rm ref}_{w}  } }$ contains the same number of $f_a$, $g_a$ etc. in the same order, and
any possible sign factors under permutation are cancelled between the $F$-part and the $G$-part.
By using this, the commutativity of \eqref{eq:INV} is checked. This completes the proof.
\end{proof}
\begin{rem}
We call two TQFTs $F$ and $G$ isomorphic if they have a natural transformation $\eta$ of the kind described above.
The above theorem means that two TQFTs are isomorphic if their partition functions on any closed $H_d$-manifolds are the same.
\end{rem}
\begin{rem}
In the context of SPT phases (but not the generalized theta angles), we are concerned with deformation classes of theories rather than isomorphism classes.
Namely, two SPT phases which can be continuously connected to each other are identified.
In our context, this means the following. 
Two theories having the cobordism invariant partition functions $z: \Omega_d^H \to \U(1)$ and $z': \Omega_d^H \to \U(1)$
are identified if $z$ and $z'$ are continuously connected in the obvious sense.
Such an identification can be easily done at the level of TQFTs because we have a TQFT for any $z$ 
which is unique up to isomorphims by Theorems~\ref{thm:construction} and \ref{thm:identification},
and hence the continuous deformation of cobordism invariant $z: \Omega_d^H \to \U(1)$ can be uplifted to the level of TQFTs.
\end{rem}
\begin{rem}
From the above proof, it is clear that the ambiguity of the natural transformation between two TQFTs is classified by $\Omega^{H}_{d-1}$.
More precisely, ${\rm Hom}(\Omega^{H}_{d-1} , \U(1))$ classifies the phase ambiguity of the Hilbert spaces.
This is natural from the following point of view. First, note that by ``Wick rotations",
constant time slices and spatial boundaries may be related to each other. Then, if we put a $d$-dimensional theory on 
a manifold with spatial boundary, the above ambiguity means that
the partition function of the boundary $d-1$-dimensional theory has the ambiguity classified by ${\rm Hom}(\Omega^{H}_{d-1} , \U(1))$.
This is precisely achieved by multiplying the boundary $d-1$-dimensional theory by 
$d-1$-dimensional invertible TQFTs classified by ${\rm Hom}(\Omega^{H}_{d-1} , \U(1))$.
\end{rem}

\acknowledgments
The author would like to thank Y.~Tachikawa and E.~Witten for helpful comments,
and K.~Hori, C.-T.~Hsieh, and Y.~Tachikawa for discussions on related topics.
The work of KY is supported in part by the WPI Research Center Initiative (MEXT, Japan),
and also supported by JSPS KAKENHI Grant-in-Aid (Wakate-B), No.17K14265.

\appendix

\section{Some categorical notions}\label{sec:app}
For completeness, here we reproduce the definitions of symmetric monoidal categories, functors and natural transformations summarized in \cite{Baez}.
We denote categories by $\sC,\sD,\cdots $, functories by $F,G,\cdots$, and natural transformations by $\eta,\cdots$. 
The definitions of ordinary categories, functors and natural transformations are explained very briefly in Sec.~\ref{sec:Atiyah}.

\paragraph{Symmetric monoidal category.} First we define symmetric monoidal category.
\begin{defi}
A monoidal category is a category equipped with 
\begin{itemize}
\item a functor $\otimes: \sC \times \sC \to \sC$ called the tensor product, 
\item an object $1 \in {\rm obj}(\sC)$ called the unit object,
\item a natural isomorphism $a_{x,y,z}$ ($x,y,z \in {\rm obj}(\sC)$) called the associator
\beq
a_{x,y,z}: (x \otimes y) \otimes z \to x \otimes (y \otimes z)
\eeq
satisfying the pentagon equation
\beq
\xymatrix{
(( w \otimes x) \otimes y) \otimes z  \ar[rr]^{a_{w \otimes x, y, z}}  \ar[d]_{a_{w, x, y} \otimes 1_z }&    & ( w \otimes x) \otimes (y \otimes z)  \ar[d]^{a_{w,x,y \otimes z}}    \\
( w \otimes (x \otimes y)) \otimes z ~ \ar[r]_{a_{w, x \otimes y, z}} & ~~ w \otimes  (  (x \otimes y) \otimes z) ~~ \ar[r]_{~~1_w \otimes a_{x,y,z}~~} &~ w \otimes ( x \otimes ( y \otimes z ))  
}
\eeq
\item natural isomorphisms $\ell_x$ and $r_x$
called the left and right unit laws, 
\beq
\ell_x : 1 \otimes x \to x, \qquad  r_x: x \otimes 1 \to x
\eeq
satisfying the triangle equations
\beq
\xymatrix{
(x \otimes 1) \otimes y \ar[dr]_{r_x \otimes 1_y} \ar[rr]^{a_{x,1,y}}&          & x \otimes (1 \otimes y) \ar[dl]^{1_x \otimes \ell_y} \\
&  x \otimes y  &
}
\eeq
\end{itemize}
\end{defi}
Roughly speaking, the pentagon equation means that ``multiplications can be done in any order", or ``any ways to go from $(( w \otimes x) \otimes y) \otimes z$ to $w \otimes ( x \otimes ( y \otimes z )) $ are the same".
The triangle equation means that ``any ways to eliminate the unit $1$ are the same".

\begin{defi}
A braided monoidal category is a monoidal category with a natural isomorphism $b_{x,y}$ called the braiding,
\beq
b_{x,y}: x \otimes y \to y \otimes x
\eeq
satisfying the hexagon equations
\beq
\xymatrix{
(x \otimes y) \otimes z \ar[r]^{a_{x,y,z}} \ar[d]_{b_{x, y} \otimes 1_z} & x \otimes (y \otimes z) \ar[r] ^{ b_{x, y \otimes z}}&  (y \otimes z) \otimes x \ar[d]^{a_{y,z,x} } \\
(y \otimes x) \otimes z\ar[r]_{ a_{y,x,z} } & y \otimes (x \otimes z) \ar[r]_{1_y \otimes b_{x,z}} & y \otimes (z \otimes x)
}
\eeq
\beq
\xymatrix{
x \otimes (y \otimes z) \ar[r]^{a^{-1}_{x,y,z}} \ar[d]_{1_x \otimes b_{y, z} } & (x \otimes y) \otimes z \ar[r] ^{ b_{x \otimes y, z}}&  z \otimes (x \otimes y)  \ar[d]^{a^{-1}_{z,x,y} } \\
x \otimes (z \otimes y)  \ar[r]_{ a^{-1}_{x,z,y} } & (x \otimes  z) \otimes y \ar[r]_{ b_{x,z} \otimes 1_y} &  (z \otimes x) \otimes y
}
\eeq
\end{defi}
Roughly speaking, the first hexagon equation above means that ``moving $x$ all at once from the left to the right of $y \otimes z$ is the same as moving $x$ step by step
by first going through $y$ and then $z$." The second hexagon equation means a similar thing for $z$.

\begin{defi}
A symmetric monoidal category is a braided monoidal category such that the braiding satisfies $b_{y,x} b_{x,y} = 1_{x \otimes y}$.
\end{defi}

\paragraph{Symmetric monoidal functor.} 
Let us next consider functors between monoidal categories. In the following, if an expression like e.g. $a^\sD_{x,y,z}$ appears with a superscript or subscript $\sD$,
that means (in this particular case) ``the associator in the category $\sD$". The same remark applies to subscripts/superscripts of other quantities. 

\begin{defi}
A monoidal functor $F$ between monoidal categories $\sC$ and $\sD$ is a functor 
with 
\begin{itemize}
\item a natural transformation
\beq
\mu_{x,y}:   F(x) \otimes F(y) \to F(x \otimes y)
\eeq
satisfying the associativity
\beq
\xymatrix{
(F(x)\otimes F(y)) \otimes F(z) ~~ \ar[r]^{~~~~\mu_{x,y} \otimes 1_{F(z)}} \ar[d]_{a^\sD_{F(x),F(y),F(z)}}  &~~ F(x \otimes y) \otimes F(z) \ar[r]^{~~\mu_{x \otimes y, z}} & F( (x \otimes y) \otimes z) \ar[d]^{F(a^\sC_{x,y,z}) } \\
F(x) \otimes (F(y) \otimes F(z))~~ \ar[r]_{~~~~1_{F(x)} \otimes \mu_{y,z}} & F(x) \otimes F( y \otimes z) \ar[r]_{~\mu_{x, y \otimes z}} & F(x \otimes (y \otimes z)) 
}
\eeq
\item an isomorphism
\beq
\epsilon : 1_{\sD} \to F(1_\sC)
\eeq
satisfying
\beq
\xymatrix{ 
1_\sD \otimes F(x)  \ar[r]^{\epsilon \otimes 1_{F(x)}} \ar[d]_{\ell^\sD_{F(x)}} & ~F(1_\sC) \otimes F(x) \ar[d]^{\mu_{1_\sC,x}} \\
F(x)  &  F(1_\sC \otimes x) \ar[l]_{F(\ell^\sC_x)}
}
\eeq
and
\beq
\xymatrix{ 
  F(x) \otimes 1_\sD  \ar[r]^{1_{F(x)} \otimes \epsilon ~~~} \ar[d]_{r^\sD_{F(x)}} & ~ F(x) \otimes F(1_\sC)  \ar[d]^{\mu_{x,1_\sC}} \\
F(x)  &  F( x \otimes 1_\sC ) \ar[l]_{F(r^\sC_x)}
}
\eeq
\end{itemize}
\end{defi}
Roughly speaking, these equations mean that ``the associator $a_{x,y,z}$ and the left, right unit laws $\ell_x$, $r_x$ can be used before or after the application of the functor, giving the same result".

\begin{defi}
A braided monoidal functor between braided monoidal categories is a monoidal functor with the additional condition that
\beq
\xymatrix{
F(x) \otimes F(y) \ar[r]^{~\mu_{x,y}} \ar[d]_{b^\sD_{F(x),F(y)}} & F(x \otimes y) \ar[d]^{F(b^\sC_{x,y} ) } \\
F(y) \otimes F(x) \ar[r]_{~~\mu_{y,x} } & F(y \otimes x) 
}
\eeq
\end{defi}
Again, this roughly means that ``the braiding can be used before or after the functor".
\begin{defi}
A symmetric monoidal functor between symmetric monoidal categories is a braided monoidal functor without extra conditions.
\end{defi}

\paragraph{Symmetric monoidal natural transformation.}Finally, we describe natural transformations.
\begin{defi}
A monoidal natural transformation $\eta$ between monoidal functors $F$ and $G$ is a natural transformation such that the following diagram commutes:
\beq
\xymatrix{
F(x) \otimes F(y) \ar[r]^{\eta_x \otimes \eta_y} \ar[d]_{\mu^F_{x,y}} & G(x) \otimes G(y) \ar[d]^{\mu^G_{x,y}} \\
F(x \otimes y) \ar[r]_{\eta_{x \otimes y} } & G(x \otimes y) 
}
\eeq
\beq
\xymatrix{
& 1_\sD \ar[dl]_{\epsilon_F} \ar[dr]^{\epsilon_G} \\
 F(1_\sC) \ar[rr]_{\eta_{1_\sC}}& & G(1_\sC)
}
\eeq
\end{defi}
\begin{defi}
A braided (reps. symmetric) monoidal natural transformation between braided (reps. symmetric) monoidal functors is a monoidal natural transformation, without extra conditions.
\end{defi}

\paragraph{Involution.}
For completeness, we also describe the notion of involution following \cite{Freed:2016rqq}.
\begin{defi}\label{defi:involution}
An involution on a category $\sC$ is a pair $(\beta, \xi)$ of a functor $\beta: \sC \to \sC$ and a natural isomorphism $\xi: {\rm id}_\sC \to \beta^2$
such that $\beta( \xi_x) = \xi_{\beta(x)}$ as morphisms $\beta(x) \to \beta^3(x)$.
\end{defi}
Roughly speaking, the involution functor $\beta$ ``squares to the identity functor".
\begin{defi}\label{defi:equivariant}
Let $\beta_\sC$ and $\beta_\sD$ be involutions of categories $\sC$ and $\sD$. A functor $F : \sC \to \sD$ is equivariant under the involution pair $(\beta_\sC, \beta_\sD)$ if 
there is a natural isomorphism $\phi: F  \beta_\sC \Rightarrow \beta_\sD  F$ such that the following diagram commutes:
\beq
\xymatrix{
F(x) \ar[rdd]_{\xi^\sD_{F(x)} }  \ar[r]^{F(\xi^\sC_x)~~} &~ F  \beta_\sC^2(x)  \ar[d]^{\phi_{\beta_\sC(x)} } \\
& ~~\beta_\sD F \beta_\sC (x)  \ar[d]^{\beta_\sD(\phi_x) } \\
&  \beta_\sD^2 F(x)
}
\eeq
\end{defi}
Roughly speaking, this means that ``the involution $\beta$ and the functor $F$ commutes with each other in the way consistent with the fact that the involution squires to the identity".

The corresponding notions in symmetric monoidal categories can also be defined by using
symmetric monoidal functors and symmetric monoidal natural isomorphisms.



\bibliographystyle{JHEP}
\bibliography{ref}

\end{document}